\newcommand{\ord}{{\rm ord}}
\newcommand{\lcm}{{\rm lcm}}
\newtheorem{theorem}{Theorem}[section]
\newtheorem{lemma}[theorem]{Lemma}
\newtheorem{corollary}[theorem]{Corollary}
\newtheorem{example}[theorem]{Example}
\newtheorem{Proposition}[theorem]{Proposition}
\newtheorem{remark}[theorem]{Remark}
\begin{document}

	\begin{frontmatter}
		
		%% Title, authors and addresses
		
		%% use the tnoteref command within \title for footnotes;
		%% use the tnotetext command for the associated footnote;
		%% use the fnref command within \author or \address for footnotes;
		%% use the fntext command for the associated footnote;
		%% use the corref command within \author for corresponding author footnotes;
		%% use the cortext command for the associated footnote;
		%% use the ead command for the email address,
		%% and the form \ead[url] for the home page:
		%%
		%% \title{Title\tnoteref{label1}}
		%% \tnotetext[label1]{}
		%% \author{Name\corref{cor1}\fnref{label2}}
		%% \ead{email address}
		%% \ead[url]{home page}
		%% \fntext[label2]{}
		%% \cortext[cor1]{}
		%% \address{Address\fnref{label3}}
		%% \fntext[label3]{}
		
		\title{MDS and AMDS symbol-pair codes are constructed from  repeated-root codes \tnoteref{fn1}}
	%	\tnotetext[fn1]{C. Fan's research was supported by
	%		the Natural Science Foundation of China, Proj. No. 11571285, Z. Zhou's research was supported by
	%		the Natural Science Foundation of China, Proj. No. 61201243, the Sichuan Provincial Youth Science and Technology Fund under Grant
	%		2015JQO004, and the Open Research Fund of National Mobile Communications Research Laboratory, Southeast University under Grant 2013D10.}
		
		% use optional labels to link authors explicitly to addresses:
		% \author[label1,label2]{<author name>}
		% \address[label1]{<address>}
		% \address[label2]{<address>}

		\author[SWJTU]{Xiuxin Tang}
		\ead{XiuxinTang@my.swjtu.edu.cn}
		\author[SWJTU]{Rong Luo\corref{cor1}}
		\ead{luorong@swjtu.edu.cn}
	%	\author[SWJTU]{Zhengchun Zhou}
	%	\ead{zzc@home.swjtu.edu.cn,zczhou@126.com}
		%
		%
		\cortext[cor1]{Corresponding author}
		\address[SWJTU]{School of Mathematics, Southwest Jiaotong University, Chengdu, 610031, China}
	%	\address[UB]{Department of Informatics, University of Bergen, N-5020 Bergen, Norway}
		%Department of Informatics, University of Bergen, N-5020 Bergen, Norway
		\begin{abstract}
		Symbol-pair codes introduced by Cassuto and Blaum in 2010 are designed to protect against the pair errors in symbol-pair read channels. One of the central themes in symbol-error correction is the construction of maximal distance separable (MDS)  symbol-pair codes that possess the largest possible pair-error correcting performance. In this paper, we construct more general generator polynomials for two classes of MDS symbol-pair codes with code length $lp$. Based on repeated-root cyclic codes, we derive all MDS symbol-pair codes of length  $3p$, when the degree of the generator polynomials is no more than 10. We also give two new classes of (almost maximal distance separable) AMDS symbol-pair codes with the length $lp$ or $4p$ by virtue of repeated-root cyclic codes. For length $3p$, we derive all AMDS symbol-pair codes, when the degree of the generator polynomials is less than 10.
		 The main results are obtained by determining the solutions of certain equations over finite fields.
		\end{abstract}
		
		\begin{keyword}
			MDS symbol-pair codes, AMDS symbol-pair codes, Minimum symbol-pair distance, 
			Repeated-root cyclic codes
			
		\end{keyword}
		
	\end{frontmatter}

	\section{Introduction}
	
With the development of high-density data storage technologies, a new coding framework called symbol-pair codes was proposed by Cassuto and Blaum (2010) to combat symbol-pair-errors over symbol-pair read channels in \cite{Cassuto-2010-SIT}. The seminal works \cite{Cassuto-2010-SIT}--\cite{Cassuto-2011-SIT} have established relationships between the minimum Hamming distance of an error-correcting code and the minimum symbol-pair distance, have found methods for code constructions and decoding, and have obtained lower and upper bounds on code size. If a code $\mathcal{C}$ over ${\Bbb F}_p^{{n}}$ of length $n$ contains $M$ elements and has the minimum symbol-pair distance ${d_p}$, then $\mathcal{C}$ is referred as an ${\left( {n,M,{d_p}} \right)_p}$ symbol-pair code. Finding symbol-pair codes with high symbol-pair error correcting performance has become a great challenge in the theory. Therefore, we wish to obtain symbol-pair codes that possess minimum symbol-pair distance as large as possible.

In 2012, Chee et al. \cite{Chee-2012-SIT} established the Singleton-type bound on symbol-pair codes. Similar to the classical codes, the symbol-pair codes meeting the Singleton-type bound are called MDS symbol-pair codes. Due to the optimality, MDS symbol-pair codes are the most useful and interesting symbol-pair codes. Many researchers attempted to obtain MDS symbol-pair codes by different mathematical tools.  Constructing MDS symbol-pair codes is thus of significance in theory and practice. However, not much work has been done on determining the exact values of symbol-pair distances of constacyclic codes as it is a very complicated and difficult task in general. 

In 2013, Chee et al. \cite{Chee-2013-SIT} obtained some MDS symbol-pair codes from classical MDS codes.
Between 2015 and 2018, researchers \cite{Kai-2015-TIT}\cite{Li-2017-DCC}\cite{Kai-2018-NCT} constructed some MDS symbol-pair codes with minmum symbol-pair distance 5 and 6 from constacyclic codes.
In 2017, Chen et al.\cite{Chen-2017-CIT} proposed to construct MDS symbol-pair codes by repeated-cyclic codes and obtained some length $3p$ MDS symbol-pair codes with symbol-pair distance 6 to 8 and MDS ${\left( {lp,5} \right)_p}$  symbol-pair codes. In the next few years, MDS symbol-pair codes with some new parameters were found  by using repeated-root codes over ${\Bbb F}_p$. 
In 2018,  Kai et al.\cite{Kai-2018-NCT} proved the existence of an MDS ${(4p,7)} _p $  sympol-pair code. In the same year, Dinh et al.\cite{Dinh-2018-TIT} presented all MDS symbol-pair codes of prime power lengths by repeated-root constacyclic codes.
In 2019, Zhao \cite{Zhao-2019-Doc} gave an MDS ${(lp,6)} _p $  sympol-pair code.
In 2020, Dinh et al. \cite{Dinh-2020-TIT} constructed a families of MDS sympol-pair codes with length $2p$, where $g_{i,j} =  {{{(x - 1)}^i}{{(x + 1 )}^j}} $ is expressed as the generator polynomials of these codes and ${\left| {i - j} \right| \leqslant 2}$.
In 2021, Ma et al. \cite{Ma-2022-DCC} obtained MDS $ (3p,10)_{p} $  and MDS $ (3p,12)_{p} $ sympol-pair codes.

Inspired by these works, in order to obtain longer and more flexible symbol-pair codeword length, as well as a larger minimum symbol-pair distance, this paper proves that there are more general generator polynomials for MDS $ {(lp,6)}_p $ and MDS $ {(lp,5)}_p $ symbol-pair codes by using repeated-cyclic codes. For length $n=3p$, this paper gives all MDS symbol-pair codes from repeated-root cyclic codes $\mathcal{C}_{r_{1}r_{2}r_{3}}$, when the degree of the generator polynomial $g_{r_{1}r_{2}r_{3}}(x)$ is no more than 10, i.e $\deg( g(x))\le 10$. Furthermore, the parameters of AMDS $ {(4p,8)}_p $ and AMDS $ {(lp,7)}_p $  symbol-pair codes are obtained by using repeated-cyclic codes. For length $3p$, this paper also obtains all AMDS symbol-pair codes from repeated-root cyclic codes $\mathcal{C}_{r_{1}r_{2}r_{3}}$, when $\deg( g(x))\le 10$.

The remainder of this paper is organized as follows. In Section 2, we introduce some basic notations and results on symbol-pair codes. In Section 3, on the basis of repeated-root cyclic codes, we derive some new classes of MDS symbol-pair codes and new classes of AMDS symbol-pair codes. In section 4, we conclude the paper.

\section{Preliminaries}

In this section, we introduce some notations and auxiliary tools on symbol-pair codes, which will be used to prove our main results in the sequel.
Let $${\bf{\textit{\textbf{x}}}} = \left( {{x_0},{x_1}, \cdots ,{x_{n - 1}}} \right)$$ be a vector in ${\Bbb F}_p^{n}$. Then the symbol-pair read vector of $\textit{\textbf{x}}$ is
$$\pi \left( {\bf{\textit{\textbf{x}}}} \right) = \left[ {\left( {{x_0},{x_1}} \right),\left( {{x_1},{x_2}} \right), \cdots ,\left( {{x_{n - 1}},{x_0}} \right)} \right].$$
Similar to the Hamming weight ${\omega _H}\left( {\bf{\textit{\textbf{x}}}} \right)$ and Hamming distance $ {D_H}\left( \bf{\textit{\textbf{x}}},\bf{\textit{\textbf{y}}} \right) $. The symbol-pair weight ${\omega _p}\left( {\bf{\textit{\textbf{x}}}} \right)$ of the symbol-pair vector ${\bf{\textit{\textbf{x}}}}$ is defined as
$${\omega _p}\left( {\bf{\textbf{\textit{x}}}} \right) = \left| {\left\{ {i\left| {\left( {{x_i},{x_{i + 1}}} \right) \ne \left( {0,0} \right)} \right.} \right\}} \right|.$$
The symbol-pair distance $ {D_p}\left( \bf{\textit{\textbf{x}}},\bf{\textit{\textbf{y}}} \right) $ between any two vectors $\bf{\textit{\textbf{x}}},\bf{\textit{\textbf{y}}}$ is
$${D_p}\left( \bf{\textit{\textbf{x}}},\bf{\textit{\textbf{y}}} \right) = \left| {\left\{ {\left. i \right|\left( {{x_i},{x_{i + 1}}} \right) \ne \left( {{y_i},{y_{i + 1}}} \right)} \right\}} \right|.$$
The minimum symbol-pair distance of a code $\mathcal{C}$ is
$${d_p} = \min \left\{ { {{D_p}\left( \bf{\textit{\textbf{x}}},\bf{\textit{\textbf{y}}} \right)\left|\; \bf{\textit{\textbf{x}}},\bf{\textit{\textbf{y}}} \in \mathcal{C} \right.} } \right\},$$
and we denote $ (n,k,d_{p})_{p} $ a symbol-pair code   with length $ n $, dimension $ k $ and minimum symbol-pair distance $ d_{p} $ over  ${\Bbb F}_p$. 

In this paper, we always regard the codeword \textbf{c} in $\mathcal{C}$ as the corresponding polynomial $c(x)$. We donote that ${\Bbb F}_p$ and ${\Bbb F}_q$ are finite fields, where $p$ is an odd prime and $q=p^m$.
The following lemmas will be applied in our later proofs.

In contrast to the classical error-correcting codes, the size of a symbol-pair code satisfies the following Singleton bound. The symbol-pair code achieving the Singleton bound is called a maximum distance separable (MDS) symbol-pair code.

\begin{lemma}\label{lemma 3.4 }	
	
	(\cite{Chee-2012-SIT})If $\mathcal{C}$ is a symbol-pair code with length $n$ and minimum symbol-pair distance ${d_p}$ over ${\Bbb F}_q$, we call an  $ (n,d_p)_{p} $ symbol-pair code of size $ q^{n - {d_p} + 2} $ maximum distance separable (MDS)  and an  $ (n,d_p)_{p} $ symbol-pair code of size $ q^{n - {d_p} + 1} $ almost maximum distance separable (AMDS) for $ q\geqslant 2 $.
	
\end{lemma}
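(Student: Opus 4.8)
The statement packages the Singleton-type bound of Chee et al.\ together with its resulting terminology, so the content to be established is the bound itself: any $(n,M,d_p)_p$ symbol-pair code over $\ef_q$ with $2\le d_p\le n$ satisfies $M\le q^{\,n-d_p+2}$. The codes meeting this with equality are then \emph{named} MDS, and those one power of $q$ below it AMDS, so once the inequality is proved the definitions in the statement follow immediately. The plan is to mimic the classical Singleton bound via a pair-puncturing argument. First I would fix the projection $\phi\colon\C\to\ef_q^{\,n-d_p+2}$ that keeps the first $n-d_p+2$ coordinates of each codeword and deletes the final block $S=\{n-d_p+2,\dots,n-1\}$ of $d_p-2$ coordinates, and then show that $\phi$ is injective on $\C$.

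For injectivity, suppose two codewords $\textbf{x}\ne\textbf{y}$ satisfy $\phi(\textbf{x})=\phi(\textbf{y})$. Then $\textbf{x}$ and $\textbf{y}$ agree on every retained coordinate and can differ only on the index set $S$. The key observation is that a cyclic pair $(x_i,x_{i+1})$ can differ from $(y_i,y_{i+1})$ only when $i\in S$ or $i+1\in S$, so the set of pair-positions at which the two symbol-pair read vectors differ is contained in $S\cup(S-1)=\{n-d_p+1,\dots,n-1\}$, a block of exactly $d_p-1$ consecutive (cyclic) pair-indices. Hence $D_p(\textbf{x},\textbf{y})\le d_p-1<d_p$, which is impossible for distinct codewords by the definition of $d_p$. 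Therefore $\phi$ is injective, and counting images gives $M=|\C|\le q^{\,n-d_p+2}$; labelling equality as the MDS case and $M=q^{\,n-d_p+1}$ as the AMDS case recovers the statement.

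The counting is routine; the step demanding genuine care is the pair-position bookkeeping under the cyclic wraparound. Because the pairs are taken modulo $n$, the deleted block of $d_p-2$ coordinates touches only one extra pair overall at its lower boundary, yielding $d_p-1$ affected pairs rather than the $2(d_p-2)$ one might naively expect. I would check explicitly that the wrap pair $(x_{n-1},x_0)$ and the boundary pair $(x_{n-d_p+1},x_{n-d_p+2})$ are each counted exactly once, and that the hypothesis $2\le d_p\le n$ keeps $S$, $S-1$, and the retained index range nonempty and well defined. This wraparound accounting is the only place an off-by-one error could enter, so it is where I would concentrate the verification.
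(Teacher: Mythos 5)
Your argument is correct, but note that the paper contains no proof of this statement at all: it is quoted verbatim from \cite{Chee-2012-SIT} and functions as the \emph{definition} of MDS and AMDS symbol-pair codes, with the underlying Singleton-type bound $\left|\mathcal{C}\right| \leqslant q^{n-d_p+2}$ recorded separately as Lemma \ref{lemma 3.2 } (cited from \cite{Chee-2013-SIT}), also without proof. What you have done is correctly identify that the substantive mathematical content is that bound, and then reconstruct its standard proof: puncture a fixed block of $d_p-2$ coordinates, observe that two codewords agreeing outside that block can have differing symbol-pairs only at the $d_p-1$ cyclic pair-positions meeting the block, and conclude that the punctured map is injective, giving $M \leqslant q^{n-d_p+2}$. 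Your pair-position bookkeeping, including the wraparound pair $(x_{n-1},x_0)$, is accurate, and this is precisely the argument in the cited literature, so you have in effect supplied the proof that the paper outsources to its references. One cosmetic slip: when $d_p=2$ the deleted set $S$ is empty rather than nonempty as you assert, but then the projection is the identity map and injectivity is trivial, so the argument is unaffected.
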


\begin{lemma}\label{lemma 3.2 }
	(\cite{Chee-2013-SIT}) Let $q \geqslant 2$ and $2 \leqslant {d_p} \leqslant n$. If $\mathcal{C}$ is a symbol-pair code with length $n$ and minimum symbol-pair distance ${d_p}$ over ${\Bbb F}_q$, then $\left|\mathcal{C}\right| \leqslant {q^{n - {d_p} + 2}}.$
\end{lemma}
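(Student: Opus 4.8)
The plan is to adapt the classical puncturing argument for the ordinary Singleton bound to the symbol-pair setting. The key observation is that the symbol-pair distance $D_p(\mathbf{x},\mathbf{y})$ is precisely the Hamming distance between the read vectors $\pi(\mathbf{x})$ and $\pi(\mathbf{y})$ over the enlarged alphabet $\mathbb{F}_q^2$, so a coordinate-deletion argument ought to transfer. The one genuinely new feature is that deleting a \emph{single} coordinate $x_i$ of the underlying vector corrupts \emph{two} consecutive read-pairs, namely the pairs at positions $i-1$ and $i$, and one must also respect the cyclic wrap-around between position $n-1$ and position $0$. Keeping an exact count of the corrupted pair-positions is what will make the bound come out to the stated exponent $n-d_p+2$.

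First I would introduce the projection $\rho\colon \mathbb{F}_q^{n}\to\mathbb{F}_q^{\,n-d_p+2}$ that retains only the coordinates $x_0,x_1,\dots,x_{n-d_p+1}$ (equivalently, deletes the last $d_p-2$ coordinates), and restrict it to $\mathcal{C}$. The heart of the proof is the claim that $\rho|_{\mathcal{C}}$ is injective. To prove this I would argue by contradiction: suppose $\mathbf{x},\mathbf{y}\in\mathcal{C}$ with $\mathbf{x}\neq\mathbf{y}$ but $\rho(\mathbf{x})=\rho(\mathbf{y})$. Then $\mathbf{x}$ and $\mathbf{y}$ agree on all coordinates indexed $0,\dots,n-d_p+1$ and may disagree only on the set $S=\{n-d_p+2,\dots,n-1\}$, which has $d_p-2$ elements.

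Next I would count the positions $i$ at which the pairs $(x_i,x_{i+1})$ and $(y_i,y_{i+1})$ can differ. Such a pair differs only when one of its two cyclically-indexed coordinates lies in $S$; tracking both the case $i\in S$ and the case $i+1\in S$ (and checking, using $d_p\leq n$, that the wrap-around index $0$ does not lie in $S$) shows that the corrupted positions are exactly $\{n-d_p+1,\dots,n-1\}$, a set of size $d_p-1$. Hence $D_p(\mathbf{x},\mathbf{y})\leq d_p-1$, whereas $\mathbf{x}\neq\mathbf{y}$ forces $D_p(\mathbf{x},\mathbf{y})\geq 1$. This contradicts the minimality of $d_p$, so $\rho|_{\mathcal{C}}$ is injective. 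Consequently $\left|\mathcal{C}\right|=\left|\rho(\mathcal{C})\right|\leq \left|\mathbb{F}_q^{\,n-d_p+2}\right|=q^{\,n-d_p+2}$, as required.

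The main obstacle I anticipate is the bookkeeping in the counting step: because the read-pattern is cyclic, deleting the trailing block of coordinates affects pair-positions both at the tail and potentially across the wrap, and one must verify that precisely $d_p-1$ pair-positions are endangered, not $d_p-2$ or $d_p$. The hypothesis $2\leq d_p\leq n$ is used here to guarantee that the target dimension $n-d_p+2$ is a sensible nonnegative integer and that the deleted index set $S=\{n-d_p+2,\dots,n-1\}$ is well-defined and disjoint from $\{0\}$, which is exactly what closes the wrap-around case cleanly.
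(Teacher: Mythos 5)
Your proof is correct. Note that the paper itself offers no proof of this lemma: it is quoted verbatim as a known result from \cite{Chee-2013-SIT}, so there is no internal argument to compare against. Your puncturing argument --- deleting the trailing $d_p-2$ coordinates, checking that the differing pair-positions are confined to the set $\{n-d_p+1,\dots,n-1\}$ of size $d_p-1$ (with the wrap-around position $0$ excluded because $d_p\leqslant n$ forces $0\notin S$), and concluding injectivity of the projection --- is exactly the standard proof of the Singleton-type bound given in that cited reference, and all the bookkeeping you flag (in particular that the corrupted positions number $d_p-1$, not $d_p-2$ or $d_p$) checks out.
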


The next lemma is known MDS symbol-pair codes, which will be used by the later proof of Theorme\ref{theorem 3.10}. 
\begin{lemma}\label{lemma 3.5 }
	(\cite{Kai-2018-NCT}) Let $n=4p$ with $p \equiv 3\left({\bmod~4} \right)$. If $\mathcal{C}$ is the cyclic code in ${{\Bbb F}_p}\left[ x \right]/\left\langle {{x^n} - 1} \right\rangle $ generated by $$g\left( x \right) = {\left( {x - 1} \right)^3}\left( {{x^2} + 1} \right),$$ then we have $d_H=4$ and $d_p=7$. $\mathcal{C}$ is an MDS $ (4p,7)_{p} $ symbol-pair code.
\end{lemma}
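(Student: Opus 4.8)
The plan is to reduce the whole statement to the two numerical facts $d_H=4$ and $d_p=7$ and then invoke the MDS criterion. First I would record the ambient factorization: since $p\equiv 3\pmod{4}$ the polynomial $x^2+1$ is irreducible over $\mathbb{F}_p$, so $x^{4p}-1=(x-1)^p(x+1)^p(x^2+1)^p$ and $g(x)=(x-1)^3(x^2+1)$ genuinely divides $x^{4p}-1$, giving $\dim\mathcal{C}=4p-5$ and $|\mathcal{C}|=p^{4p-5}$. This already matches the MDS size $p^{\,n-d_p+2}$ precisely when $d_p=7$, so once the two distances are established the conclusion follows from Lemma~\ref{lemma 3.2 } and Lemma~\ref{lemma 3.4 }. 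The computational engine is the translation of membership into linear conditions on the coefficient vector of $c(x)=\sum_j c_jx^j$: the factor $(x-1)^3$ forces $\sum_j c_j=\sum_j jc_j=\sum_j j^2c_j=0$ (exponents read modulo $p$), while $(x^2+1)$ forces $c(\theta)=0$ for a root $\theta$ of $x^2+1$, i.e. two conditions comparing the coefficient sums over the residue classes of the exponents modulo $4$ (using $\theta^4=1$ and $p\equiv-1\pmod{4}$).

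To get $d_H=4$ I would first kill the small weights with these equations. Weight one is impossible because $\sum_j c_j=0$; weight two forces $a\equiv b\pmod{p}$ (from $\sum_j jc_j=0$) and $a\equiv b\pmod{4}$ (from $c(\theta)=0$), hence $a=b$ since $\gcd(4,p)=1$; weight three splits according to how many of the three exponents coincide modulo $p$, the all-distinct case dying by a $3\times3$ Vandermonde determinant and the all-equal case dying because $p\equiv3\pmod{4}$ spreads the three exponents over three distinct classes modulo $4$, leaving $c(\theta)=0$ with only the trivial solution over $\mathbb{F}_p$. For the matching upper bound I would exhibit the weight-$4$ word supported on $\{r,r+p,r+2p,r+3p\}$: here the mod-$p$ derivative conditions collapse to $\sum_j c_j=0$, and $c(\theta)=0$ pins the coefficients to $\pm1$, so $d_H=4$.

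For the symbol-pair distance I would use the identity $\omega_p(c)=\wt(c)+t(c)$, where $t(c)$ is the number of cyclic runs in the support of $c$; consequently $\omega_p\ge7$ is automatic once $\wt(c)\ge6$, and the upper bound $d_p\le7$ is witnessed by $g(x)$ itself, whose support $\{0,1,\dots,5\}$ is a single run of length six (for $p\ge5$), giving $\omega_p(g)=7$. It remains to prove $d_p\ge7$, which amounts to showing that every weight-$4$ word has $t\ge3$ and every weight-$5$ word has $t\ge2$. I would classify weight-$4$ words by the multiset of exponent residues modulo $p$; admissibility of the linear system leaves only the patterns ``all four equal'' (the spread-out word above, $t=4$), ``two equal pairs'', and ``all distinct''. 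For the forbidden run-shapes — four consecutive, the split $3+1$, the split $2+2$, and (for weight five) five consecutive — I would write down the explicit Vandermonde kernel (e.g. the relations $c_a=-c_{b+1}$, $c_{a+1}=-c_b$ for a $2+2$ layout, and the third-difference vector $(1,-3,3,-1)$ for consecutive exponents) and substitute it into $c(\theta)=0$; in each case this collapses to an impossible identity such as $-2\equiv0$ or $c_a\equiv0\pmod{p}$. Hence no such word exists, $d_p\ge7$, and therefore $d_p=7$.

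The hard part will be this lower bound $d_p\ge7$: the repeated-root structure makes the three $(x-1)^3$-equations degenerate precisely when support positions collapse modulo $p$, so the argument cannot proceed by a single Vandermonde estimate but must branch on the residue pattern modulo $p$ and then be closed off using the independent condition $c(\theta)=0$ modulo $4$. Keeping this case analysis exhaustive yet non-redundant — every low-weight, few-run configuration handled exactly once — is the genuine bookkeeping burden. Finally, the generic Vandermonde arguments require four or five consecutive exponents to stay distinct modulo $p$, which fails only when they wrap around, essentially for $p=3$; I would dispatch those finitely many small cases by direct computation.
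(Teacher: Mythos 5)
The paper never proves this lemma: it is quoted from \cite{Kai-2018-NCT} and used later as a black box (in the proof of Theorem~\ref{theorem 3.10}), so there is no internal proof to compare against; your attempt has to be judged as a standalone proof, and as such it is correct. It is also the same kind of argument that both the cited reference and this paper's own proofs (e.g.\ Proposition~\ref{Proposition C_{421}}) use: translate membership into $c(1)=c'(1)=c''(1)=0$ plus $c(\theta)=0$ for a root $\theta$ of $x^{2}+1$, then kill each low-pair-weight run shape by solving the resulting linear system. The genuine differences are organizational, and they mostly work in your favor. First, you derive $d_H=4$ directly from the moment conditions, whereas the paper's toolkit would get it faster from the Castagnoli formula (Lemma~\ref{lemma 3.1}); your route is longer but self-contained. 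Second, your systematic framing --- the identity $\omega_p(c)=\omega_H(c)+t(c)$, the classification of residue patterns mod $p$ (with $3{+}1$ and $2{+}1{+}1$ dying by Vandermonde, leaving only all-distinct, $2{+}2$, and all-equal), and explicit Vandermonde/Lagrange kernels substituted into $c(\theta)=0$ --- replaces the ad hoc case-by-case equation solving of the reference and makes it easier to certify that the case analysis is exhaustive, which you correctly identify as the real burden. Third, for the upper bound $d_p\le 7$ you could simply invoke the Singleton bound (Lemma~\ref{lemma 3.2 }) with $\dim\mathcal{C}=4p-5$, which avoids both the weight computation of $g(x)$ and your $p\ge 5$ restriction. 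Two spots should be made explicit in a final write-up: the weight-$3$ case with exactly two exponents congruent mod $p$, and the $2{+}2$ run shape whose two runs are congruent mod $p$ (there the kernel is $c_b=-c_a$, $c_{b+1}=-c_{a+1}$, and $c(\theta)=\theta^{a}(1-\theta^{kp})(c_a+c_{a+1}\theta)\neq 0$ since $\theta^{kp}\neq 1$); both die instantly but are distinct from the sub-cases you wrote down. Finally, your deferred case $p=3$ is actually easier, not harder: four or five consecutive exponents then collapse to a forbidden $2{+}1{+}1$ residue pattern mod $3$, so no separate computation is really needed.
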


In some cases, the bound of minmum symbol-pair distance can be improved. 
\begin{lemma}\label{lemma 3.3 }
	(\cite{ Chen-2017-CIT}) Let $\mathcal{C}$ be an $\left[ {n,k,{d_H}} \right]$ constacyclic code over ${{\Bbb F}_q}$ with ${\text{2}} \leqslant {{\text{d}}_{\text{H}}}\leqslant{\text{n}}$. Then we have the following ${d_p}\left( {\mathcal{C}} \right) \geqslant {d_H} + 2$ if and only if $\mathcal{C}$ is not an MDS code.
\end{lemma}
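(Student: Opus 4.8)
The plan is to reduce the equivalence to a single combinatorial fact about the supports of minimum-weight codewords and then to read that fact off the generator polynomial. Throughout, write $\mathcal{C}=\langle g(x)\rangle$ inside $\mathbb{F}_q[x]/\langle x^n-\lambda\rangle$ with $\deg g=n-k$, and recall the standard structure of constacyclic codes: a codeword represented by a polynomial of degree $<n$ is precisely a multiple of $g(x)$ in $\mathbb{F}_q[x]$, and since $\lambda\ne 0$ we have $x\nmid g(x)$, so the constant term $g_0\ne 0$. I will also invoke the classical Singleton bound $d_H\le n-k+1$, with equality exactly when $\mathcal{C}$ is MDS.

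The first step is to record the elementary symbol-pair weight identity. For a codeword $\mathbf{c}$ with support $S$ and $0<|S|<n$, one has $\omega_p(\mathbf{c})=|S\cup(S-1)|=|S|+t(\mathbf{c})$, where $t(\mathbf{c})$ counts the cyclic runs of $S$, i.e. the $j\in S$ with $j-1\notin S$. Hence $\omega_p(\mathbf{c})\ge\omega_H(\mathbf{c})+1$, with equality if and only if $S$ is a single cyclic run. When $d_H\le n-1$, every nonzero codeword therefore satisfies $\omega_p(\mathbf{c})\ge d_H+1$ (in the proper-support case via $\omega_p\ge\omega_H+1\ge d_H+1$, and in the full-support case via $\omega_p=n\ge d_H+1$), so $d_p\ge d_H+1$; moreover equality $d_p=d_H+1$ can occur only through a codeword of Hamming weight exactly $d_H$ whose support is a single run of length $d_H$. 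This converts the lemma into the statement that such a \emph{single-run} minimum-weight codeword exists if and only if $\mathcal{C}$ is MDS.

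For the direction ``$\mathcal{C}$ MDS $\Rightarrow d_p=d_H+1$'' (equivalently $d_p\not\ge d_H+2$), I would exhibit the generator polynomial itself as the witness. If $\mathcal{C}$ is MDS then $\deg g=n-k=d_H-1$, so $g(x)=g_0+\cdots+g_{d_H-1}x^{d_H-1}$ has at most $d_H$ nonzero coefficients; as a nonzero codeword it has weight $\ge d_H$, which forces all of $g_0,\dots,g_{d_H-1}$ to be nonzero. Thus $g$ has support $\{0,1,\dots,d_H-1\}$, a single run, giving $\omega_p(g)=d_H+1$ and hence $d_p=d_H+1$. For the converse ``$d_p=d_H+1\Rightarrow\mathcal{C}$ MDS'', I would start from the single-run minimum-weight codeword $\mathbf{c}$ supplied by the equality analysis, and use constacyclicity: multiplying the associated polynomial by a suitable power of $x$ modulo $x^n-\lambda$ rotates the run to the interval $\{0,1,\dots,d_H-1\}$ without wrap-around, preserving membership, weight, and single-run shape (the wrapped coordinate is only scaled by $\lambda$). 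The result is a genuine polynomial $c(x)$ of degree exactly $d_H-1<n$, hence divisible by $g(x)$, so $n-k=\deg g\le d_H-1$, i.e. $d_H\ge n-k+1$; with Singleton this yields $d_H=n-k+1$ and $\mathcal{C}$ is MDS. Combining the two implications, and disposing of the degenerate case $d_H=n$ (which forces $k=1$, hence MDS, and $d_p=n<d_H+2$, so both sides of the equivalence fail), establishes the lemma.

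I expect the only genuinely delicate point to be the run-normalization in the converse: one must verify that the constacyclic shift carries a cyclic run to a non-wrapping interval while keeping the endpoint coefficients nonzero, so that $c(x)$ truly has degree $d_H-1$ and the divisibility $g\mid c$ forces $\deg g\le d_H-1$. Everything else is routine bookkeeping with the weight identity $\omega_p=\omega_H+t$ and the Singleton bound.
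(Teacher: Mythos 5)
The paper itself gives no proof of this lemma: it is imported verbatim from \cite{Chen-2017-CIT}, so there is no internal argument to compare against. Your proof is correct, and it is essentially the standard argument behind the cited result: the identity $\omega_p(\mathbf{c})=\omega_H(\mathbf{c})+(\text{number of cyclic runs of the support})$ yields $d_p\geqslant d_H+1$; a single-run codeword of weight $d_H$ exists precisely when $\mathcal{C}$ is MDS, proved in one direction by the generator polynomial (degree $d_H-1$ with all coefficients forced nonzero) and in the other by rotating the run to $\{0,\dots,d_H-1\}$, invoking divisibility by $g(x)$ and the Singleton bound. One point is stated more strongly than you prove it: equality $d_p=d_H+1$ could in principle be witnessed by a \emph{full-support} codeword with $n=d_H+1$ rather than by a single-run codeword of weight $d_H$; however, in that case any minimum-Hamming-weight codeword has support equal to the complement of a single coordinate, which is automatically one cyclic run of length $d_H$, so the witness your converse needs still exists and the argument closes.
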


The next lemma is known MDS symbol-pair code, which will be used by the later proof of Proposition \ref{Proposition C_{421}}. 

\begin{lemma}\label{lemma 3.6 }
	(\cite{Chen-2017-CIT}) Let $n=3p$ with $p \equiv 1\left( {\bmod~3} \right)$. If $\mathcal{C}$ is the cyclic code in ${{\Bbb F}_p}\left[ x \right]/\left\langle {{x^n} - 1} \right\rangle $ generated by 
	\[g(x) = {(x - 1)^3}{(x - \omega )^2}(x - {\omega ^2}),\]
	then we have $\mathcal{C}$ is an MDS $ (3p,8)_{p} $ symbol-pair code, where $\omega$ is a primitive $3$-th  root of unity in ${{\Bbb F}_p}$.
	
\end{lemma}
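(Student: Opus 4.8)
The plan is to prove $d_p=8$; the MDS claim then follows at once, since $\deg g=6$ gives $\dim_{\mathbb F_p}\mathcal C=3p-6$, so $|\mathcal C|=p^{3p-6}=p^{3p-8+2}$ meets the Singleton-type bound of Lemma~\ref{lemma 3.2 } with equality. Throughout I would use the elementary identity $\omega_p(c)=\omega_H(c)+s(c)$, where $s(c)$ counts the cyclic runs of consecutive nonzero coordinates of $c$: a pair-index $i$ is uncounted exactly when $c_i=c_{i+1}=0$, and such indices are precisely the interiors of the maximal zero-runs. For the upper bound $d_p\le 8$, I would expand $g(x)=(x-1)^3(x-\omega)^2(x-\omega^2)$ and use $1+\omega+\omega^2=0$, $\omega^3=1$ to check that all seven coefficients $g_0,\dots,g_6$ (e.g. $g_0=\omega$, $g_3=\omega^2$, $g_6=1$) are nonzero for $p\equiv1\pmod3$: each vanishing would impose a degree-$\le1$ relation on $\omega$ incompatible with $\omega^2+\omega+1=0$. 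Hence $g$ is a single cyclic run of length $7$, so $\omega_p(g)=7+1=8$.

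For $d_p\ge 8$ I would split on $\omega_H(c)$. Two facts follow from $\deg g=6$ and the full support of $g$: (i) no nonzero codeword is supported on $\le 6$ consecutive coordinates, since after a cyclic shift it would be a polynomial of degree $<6$ divisible by $g$, hence $0$; (ii) a codeword supported on $7$ consecutive coordinates is a scalar shift of $g$, of weight $7$. Therefore $\omega_H\ge7\Rightarrow\omega_p\ge\omega_H+1\ge8$, and $\omega_H=6\Rightarrow s\ge2$ by (i), so $\omega_p\ge8$. Weights $\le 3$ cannot occur: encoding each coordinate by $(u,v)=(i\bmod p,\,i\bmod 3)$ and using Lucas' congruence $\binom{i}{k}\equiv\binom{u}{k}\pmod p$ together with $\omega^{i}=\omega^{v}$, membership in $\mathcal C$ means the support columns $(1,\,u,\,u^2,\,\omega^{v},\,u\omega^{v},\,\omega^{2v})^{\mathsf T}$ are dependent, and any three such columns are independent (a $3\times3$ Vandermonde in the distinct $u$-values, or a mixed minor using the $\omega^{v}$-rows when two $u$'s coincide). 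This leaves $\omega_H\in\{4,5\}$.

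For $\omega_H=4$ with distinct $u_i$, dependence of the four columns forces both $\omega^{v}$ and $u\,\omega^{v}$ to agree with quadratics in $u$ on the four nodes; subtracting shows the cubic $u\,q(u)-\tilde q(u)$ vanishes at four points, so $q$ is linear, and a nonconstant line cannot take only the three values $1,\omega,\omega^2$ at four distinct nodes without repeating a node. Hence all $v_i$ are equal, and the coincident-$u$ subcases collapse via small Vandermonde minors. Thus every weight-$4$ codeword lies in a single layer $v\equiv v_0\pmod3$; its nonzeros differ by multiples of $3$, so they are pairwise non-adjacent, giving $s=4$ and $\omega_p=8$.

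The hard part will be $\omega_H=5$, where I must exclude $s\le 2$ (again $s=1$ is impossible by (i)). A codeword with $s=2$ splits into two runs of total length $5$, hence contains a run of length $\ge3$; since consecutive coordinates cycle through all residues mod $3$, such a run visits all three layers, and substituting this clustered support shape into the $6\times5$ confluent-Vandermonde system leaves equations over $\mathbb F_p$ tying the free node(s) of the short run to the cube roots $1,\omega,\omega^2$. I expect to show these are unsolvable, organizing the argument by the multiplicity pattern of the $u_i$ and by which of the spectral components $D_0\perp\{1,u,u^2\}$, $D_1\perp\{1,u\}$, $D_2\perp\{1\}$ are active; this is exactly the finite-field solvability question flagged in the abstract and the most computation-heavy step. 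Once $s\ge3$ is secured for weight $5$, we obtain $\omega_p\ge8$ in every case, so $d_p=8$ and $\mathcal C$ is an MDS $(3p,8)_p$ symbol-pair code.
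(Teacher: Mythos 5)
This lemma is not proved in the paper at all: it is quoted as a known result from \cite{Chen-2017-CIT}, so there is no internal proof to compare against, and your proposal must stand on its own as a complete argument. Most of it does. The run-counting identity $\omega_p(c)=\omega_H(c)+s(c)$ is correct; your expansion of $g$ is right (the coefficients are $1,\,-(2+\omega),\,1+2\omega,\,\omega^2,\,2+\omega,\,-(1+2\omega),\,\omega$, and a vanishing of $2+\omega$ or $1+2\omega$ would force $3\equiv 0 \pmod p$, impossible since $p\equiv 1\pmod 3$), giving $d_p\le 8$; facts (i)--(ii) are correct degree arguments; the CRT/Lucas description of the parity checks with columns $\left(1,u,u^2,\omega^{v},u\omega^{v},\omega^{2v}\right)^{\mathsf T}$ is valid; any three columns are indeed independent, killing $\omega_H\le 3$; and your weight-$4$ analysis is sound (the cubic $xq(x)-\tilde q(x)$ argument is correct, and the coincident-$u$ subcases really do collapse by Vandermonde-type minors, forcing the support into one residue class mod $3$ and hence $s=4$). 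Together with the easy cases $\omega_H=6$ and $\omega_H\ge 7$, this disposes of every Hamming weight except $5$.

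But the case you yourself call the hard part --- $\omega_H=5$ with $s=2$ --- is not proved; it is only announced (``I expect to show these are unsolvable''). This is a genuine gap, and it is the crux of the lemma: with it missing, your argument only shows that weight-$5$ codewords satisfy $s\ge 2$, hence $\omega_p\ge 7$, so you have established $d_p\ge 7$ and $d_p\le 8$, i.e., that $\mathcal C$ is at least AMDS --- not the MDS claim. Concretely, what remains is to show that for the two support shapes (after a cyclic shift) $\{0,1,2,3,m\}$ and $\{0,1,2,m,m+1\}$, the corresponding five columns of the $6\times 5$ parity-check system are linearly independent for every admissible pair $\left(\mu,\nu\right)=\left(m\bmod p,\; m\bmod 3\right)$. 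This is a nontrivial case analysis over $\nu\in\{0,1,2\}$ in which the contradictions come from specific arithmetic facts (non-vanishing of small integers mod $p$, $\omega^2+\omega+1=0$), not from generic Vandermonde reasoning, and it constitutes the bulk of the original proof in \cite{Chen-2017-CIT}. Until that computation is actually carried out, the proposal is an incomplete proof of the statement.
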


Researchers constructed some MDS symbol-pair codes with minimum symbol-pair distance 6 by repeated-root cyclic codes.  The following three lemmas are known MDS symbol-pair codes with minimum symbol-pair distance 6. These three lemmas will be used as several parts of Theorem \ref{theorem 3.1}.

\begin{lemma} \label{lemma 3.7 }
	
	\cite{Dinh-2018-TIT} $ \mathcal{C} $ is an MDS symbol-pair code, when $\left( {r_1},{r_2},{r_3}\right) =\left(4,0,0 \right)$ is satisfied. 
	$ \mathcal{C} $ is not an MDS symbol-pair code, when we have $\left( {r_1},{r_2},{r_3}\right) =\left(0,4,0 \right) ,\left(0,0,4 \right) $.		
\end{lemma}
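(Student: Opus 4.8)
The plan is to compute the minimum symbol-pair distance of each of the three codes and to read off its MDS status by comparison with the Singleton threshold. Recall that $\mathcal{C}_{r_1 r_2 r_3}$ is the repeated-root cyclic code of length $n=3p$ generated by
\[
g_{r_1 r_2 r_3}(x) = (x-1)^{r_1}(x-\omega)^{r_2}(x-\omega^2)^{r_3},
\]
so that $\dim\mathcal{C}_{r_1 r_2 r_3} = 3p-(r_1+r_2+r_3)$. In each of the three cases $\deg g = 4$, hence $k = 3p-4$ and the Singleton-type bound (Lemma~\ref{lemma 3.2 }) gives $d_p \le n-k+2 = 6$; thus the code is MDS exactly when $d_p = 6$. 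Two facts will carry the argument: for any codeword $c$ whose support is a proper subset of the coordinates, $\omega_p(c) = \omega_H(c) + b(c)$, where $b(c)$ is the number of maximal cyclic runs of $\mathrm{supp}(c)$; and Lemma~\ref{lemma 3.3 }, which gives $d_p \ge d_H + 2$ once the code is shown not to be a classical MDS code.

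First I would pin down the lightest codewords of each code. Deciding how few terms a multiple of $g$ can have amounts to imposing the vanishing (Hasse-derivative) conditions at the prescribed root to order four and solving the resulting equations over $\mathbb{F}_p$ for sparse polynomials of degree $<3p$ --- this is the ``equations over finite fields'' mechanism announced in the abstract. The decisive candidates come from the characteristic-$p$ identities $(x-\zeta)^p = x^p-\zeta$ for $\zeta\in\{1,\omega,\omega^2\}$, which supply explicit low-weight multiples of $(x-\zeta)^4$; I would determine $d_H$ and, more importantly, the cyclic run-structure $b$ of every codeword attaining the smallest values of $\omega_H+b$.

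With this data in hand the three conclusions follow by comparison with the bound $d_p\le 6$. For $(r_1,r_2,r_3)=(0,4,0)$ and $(0,0,4)$ it suffices to exhibit one explicit codeword of symbol-pair weight at most $5$, built from the divisibility analysis at $\omega$ and $\omega^2$; such a word certifies $d_p<6$, so these codes are not MDS. For $(r_1,r_2,r_3)=(4,0,0)$ I would instead combine the value of $d_H$ with Lemma~\ref{lemma 3.3 } to obtain $d_p\ge d_H+2$ and check that this meets $d_p\le 6$, which settles whether $\mathcal{C}_{400}$ is MDS.

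The main obstacle is precisely the passage from Hamming data to symbol-pair data: because $\omega_p=\omega_H+b$ depends on the run-count $b$, knowing $d_H$ alone does not determine $d_p$, and one must control the exact cyclic placement of the supports of the extremal codewords. This is also where the delicate dependence on which root carries the fourfold multiplicity enters, since the monomial substitution $x\mapsto\omega x$ is a symbol-pair isometry relating the three codes; the argument must therefore locate the genuinely minimal value of $\omega_H+b$ with enough precision to decide the MDS status in each case, and identifying that minimum rigorously is the technical heart on which the whole dichotomy rests.
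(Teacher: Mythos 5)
The paper itself contains no proof of this lemma --- it is quoted from \cite{Dinh-2018-TIT} --- so your attempt can only be judged on its own terms, and on those terms it has a fatal gap: you have misidentified the ambient code. You place all three cases in $\mathbb{F}_p[x]/\langle x^{3p}-1\rangle$ with $\omega$ a primitive cube root of unity, i.e.\ you read the lemma in the $\mathcal{C}_{r_1r_2r_3}$ notation of Subsection 3.2. In that setting the statement you are trying to prove is false, so no proof strategy can succeed. Indeed, your own observation that $x\mapsto \omega x$ is a support-preserving (hence symbol-pair) isometry relating the three codes is exactly Proposition \ref{proposion 3.9} of the paper: it forces $\mathcal{C}_{400}$, $\mathcal{C}_{040}$, $\mathcal{C}_{004}$ to have the \emph{same} minimum symbol-pair distance, so they cannot split into one MDS code and two non-MDS codes; this alone should have signalled that the reading is untenable. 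Concretely, in length $3p$ the code $\langle (x-1)^4\rangle$ contains $x^p-1=(x-1)^p$, a codeword of Hamming weight $2$ and symbol-pair weight $4$, so $d_p=4<6$ and $\mathcal{C}_{400}$ is \emph{not} MDS --- consistent with the paper's own Theorem \ref{theorem of 3p MDS} and Proposition \ref{proposition of C_{rrr}}, which exclude $(4,0,0)$ from the MDS list. Your proposed treatment of $(4,0,0)$ (compute $d_H$, apply Lemma \ref{lemma 3.3 }, compare with $d_p\le 6$) would therefore end in $d_p=4$, contradicting the lemma rather than proving it.

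The setting the lemma actually refers to is that of Subsection 3.1 and Table \ref{table-1}: $g(x)=(x-1)^{r_1}(x+1)^{r_2}(x-\omega)^{r_3}$ with $\omega$ a primitive $l$-th root of unity, and --- crucially --- the three cases live at \emph{different} lengths. For $(4,0,0)$ the code is Dinh's cyclic code of prime length $p$ (the table entry is $(p,6)_p$), where $x^p-1=(x-1)^p$ makes $\langle(x-1)^4\rangle$ a classical MDS $[p,\,p-4,\,5]$ code; hence $d_p\ge d_H+1=6$, while the Singleton-type bound (Lemma \ref{lemma 3.2 }) gives $d_p\le 6$, so it is MDS. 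Note that Lemma \ref{lemma 3.3 } is \emph{inapplicable} here precisely because the code is classically MDS; the lower bound must come from the elementary inequality $d_p\ge d_H+1$, not from $d_p\ge d_H+2$. By contrast, $(x+1)^4$ and $(x-\omega)^4$ only divide $x^n-1$ for ambient lengths that are multiples of $2p$ resp.\ $lp$, and there the characteristic-$p$ identities you invoke produce the weight-two codewords $(x+1)^p=x^p+1$ and $(x-\omega)^p=x^p-\omega^p$, of symbol-pair weight $4<6$; hence those codes are not MDS. The whole dichotomy in the lemma is a dichotomy of ambient lengths, which your uniform length-$3p$ reading erases; once the correct settings are identified, the actual proofs are the short computations sketched here.
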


\begin{lemma} \label{lemma 3.8 }
	\cite{Zhao-2019-Doc} ${\mathcal{C}} = \left\langle {{{(x - 1)}^3}{{(x - \omega )}}} \right\rangle $ is an MDS ${\left( {lp,{\text{\;}}6} \right)_p}$ symbol-pair code with minmum pair-distance $d_p=6$ over ${{\Bbb F}_p}$, where $\omega$ is a primitive $ l$-th  root of unity in ${{\Bbb F}_p}$.
	
\end{lemma}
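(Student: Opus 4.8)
The plan is to determine $d_p$ directly from the supports of codewords rather than by routing through the Hamming distance: a quick check shows that $d_H$ can drop to $3$ here (take three exponents in one residue class modulo $p$), so Lemma~\ref{lemma 3.3 } by itself would only give $d_p\ge 5$. Write $n=lp$; since $\omega$ has order $l$ we have $\gcd(l,p)=1$, and over $\mathbb{F}_p$ one has $x^{lp}-1=(x^l-1)^p$, so $g(x)=(x-1)^3(x-\omega)$ divides $x^{lp}-1$ and $|\mathcal{C}|=p^{lp-4}$. I would use two elementary facts throughout. First, $c\in\mathcal{C}$ iff $c(\omega)=0$ and $x=1$ is a root of $c$ of multiplicity at least $3$, the latter meaning $\sum_{e}\binom{e}{t}c_e=0$ for $t=0,1,2$ (Hasse derivatives, the correct notion in characteristic $p$). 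Second, for a nonzero codeword $c$ with at least one zero coordinate, $\omega_p(c)=\omega_H(c)+b(c)$, where $b(c)$ is the number of cyclic runs of nonzero coordinates; this is immediate from the definition by counting the indices $i$ with $(c_i,c_{i+1})=(0,0)$.

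For the upper bound I would simply evaluate $\omega_p$ on $g$ itself. Expanding $g(x)=x^4-(3+\omega)x^3+3(1+\omega)x^2-(1+3\omega)x+\omega$, the constant and leading terms never vanish and a direct check shows at most one of the three middle coefficients can vanish. Hence $g$ either has Hamming weight $5$ with support $\{0,1,2,3,4\}$ (one run) or Hamming weight $4$ with a single interior gap (two runs), and in both cases $\omega_p(g)=6$; thus $d_p\le 6$.

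The main work is the bound $d_p\ge 6$, which I would obtain by showing every nonzero codeword has $\omega_p\ge 6$, organized by Hamming weight $w$. Weights $1,2$ are impossible: a monomial cannot vanish at $1$, and for a binomial the order-$0,1$ conditions force the two exponents equal mod $p$ while $c(\omega)=0$ forces them equal mod $l$, so by CRT they coincide. If $w\ge 5$ then $\omega_p=w+b\ge 6$ immediately (with $\omega_p=n\ge6$ in the degenerate full-support case). For $w=3$, I would arrange the order-$0,1,2$ conditions as a $3\times3$ matrix with columns $(\binom{e}{0},\binom{e}{1},\binom{e}{2})^{\top}$; its determinant is a nonzero scalar times the Vandermonde $\prod_{a<b}(e_a-e_b)$ reduced mod $p$, so it is singular only when two exponents agree mod $p$. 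The ``exactly two agree'' case is eliminated by the order-$0,1$ equations, leaving only all three exponents congruent mod $p$; then their pairwise cyclic distances are positive multiples of $p$, hence $\ge p\ge 3$, so the three nonzero coordinates are isolated and $\omega_p=3+3=6$. For $w=4$ the only support giving $\omega_p=5$ is a run of four consecutive positions, and divisibility by $(x-1)^3$ forces such a codeword to be $a\,x^j(x-1)^3$, for which $c(\omega)=a\,\omega^j(\omega-1)^3\ne0$ since $\omega\ne0,1$; so no such codeword exists and every weight-$4$ codeword has at least two runs, giving $\omega_p\ge 6$.

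Putting the bounds together yields $d_p=6$, and then $|\mathcal{C}|=p^{lp-4}=p^{\,n-d_p+2}$ meets the Singleton bound of Lemma~\ref{lemma 3.2 }, so $\mathcal{C}$ is an MDS $(lp,6)_p$ symbol-pair code. I expect the crux to be the $w=3$ step: showing that divisibility by $(x-1)^3$ forces the three exponents to be congruent modulo $p$ (the singular locus of the binomial-coefficient Vandermonde over $\mathbb{F}_p$) and then reading off that congruence mod $p$ makes the coordinates pairwise non-adjacent on the length-$lp$ cycle. This is precisely the point at which the argument reduces to solving the relevant equations over $\mathbb{F}_p$.
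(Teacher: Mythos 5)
Your proof is correct, and it takes a genuinely different route from the one in this paper. Lemma \ref{lemma 3.8 } itself is only cited here from \cite{Zhao-2019-Doc}, but the equivalent code $\left\langle (x-1)(x-\omega)^3 \right\rangle$ (apply $x \mapsto \omega x$ and replace $\omega$ by $\omega^{-1}$) is proved MDS in Proposition \ref{Proposition 3.3}, and that proof, like its siblings Propositions \ref{Proposition 3.2}--\ref{Proposition 3.4}, follows one template: invoke Lemma \ref{lemma 3.1} (Castagnoli) to get $d_H=3$, invoke Lemma \ref{lemma 3.3 } (the code is not Hamming-MDS) to get $d_p\ge 5$, and then eliminate only the two possible pair-weight-$5$ shapes, namely four consecutive nonzero coordinates (killed by $\deg c =3<4=\deg g$) and the shape $(\star,\star,0_{s_1},\star,0_{s_2})$ (killed by solving the derivative equations down to $1=0$), with the MDS property finally read off from the Singleton bound. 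You bypass both external lemmas: your run-count identity $\omega_p=\omega_H+b$ replaces Lemma \ref{lemma 3.3 }, your weight-$1,2$ analysis replaces the repeated-root distance formula of Lemma \ref{lemma 3.1}, and your binomial-coefficient Vandermonde argument upgrades the weight-$3$ step from eliminating one shape to a structure statement (any weight-$3$ codeword has all exponents in one residue class mod $p$, hence isolated coordinates and $\omega_p$ exactly $6$); you also prove $d_p\le 6$ concretely by exhibiting $\omega_p(g)=6$, where the paper's template leaves the upper bound to Lemma \ref{lemma 3.2 }. What the paper's route buys is brevity, given the cited machinery; what yours buys is self-containedness, characteristic-$p$ hygiene via Hasse derivatives, and an argument that scales to higher multiplicities $(x-1)^m(x-\omega)$ with $m<p$, since the $m\times m$ binomial determinant is again a unit times a Vandermonde.
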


\begin{lemma} \label{lemma 3.9 }
	\cite{Dinh-2020-TIT} ${\mathcal{C}} = \left\langle {{{(x - 1)}^i}{{(x + 1 )}^j}} \right\rangle $ is an MDS symbol-pair code with minmum symbol-pair distance $d_p=6$ over ${{\Bbb F}_p}$, where ${\left| {i - j} \right| \leqslant 2}$ and $i,j\le p-1$.		
\end{lemma}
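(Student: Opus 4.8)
The plan is to pin down $d_p$ between two matching bounds. Since $n=2p$ and $x^{2p}-1=(x-1)^p(x+1)^p$ over $\mathbb{F}_p$ (as $p$ is odd), the code $\mathcal{C}=\langle (x-1)^i(x+1)^j\rangle$ has dimension $k=2p-(i+j)$. An MDS symbol-pair code with $d_p=6$ has size $q^{\,n-d_p+2}=p^{\,2p-4}$, so $k=2p-4$ forces $i+j=4$; combined with $|i-j|\le 2$ this leaves exactly the three pairs $(i,j)\in\{(1,3),(2,2),(3,1)\}$, each with $i,j\le p-1$ once $p\ge 5$. The upper bound $d_p\le n-k+2=6$ is immediate from the Singleton-type bound (Lemma \ref{lemma 3.2 }), so the entire statement collapses to the lower bound $d_p\ge 6$: every nonzero codeword has symbol-pair weight at least $6$.

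For the lower bound I would first record the run decomposition of the pair weight. If the cyclic support of a codeword $c$ breaks into $r(c)$ maximal runs of consecutive nonzero coordinates, then $\omega_p(c)=\omega_H(c)+r(c)$, because a run of length $\ell$ accounts for exactly $\ell+1$ nonzero pairs and distinct runs contribute disjoint pairs. Consequently $\omega_p(c)\le 5$ can occur only when $\omega_H(c)\le 4$, and more sharply one must exclude precisely the configurations with $\omega_H(c)+r(c)\le 5$, namely Hamming weight $3$ spread over at most two runs, Hamming weight $4$ concentrated in a single run, and all lighter patterns. Thus the problem becomes a finite case analysis over the support shapes of codewords of Hamming weight at most $4$.

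The algebraic engine is the parity-check description. Writing $c=\sum_e c_e x^e$, membership in $\mathcal{C}$ is equivalent to $\sum_e c_e\binom{e}{t}=0$ for $0\le t\le i-1$ together with $\sum_e c_e(-1)^e\binom{e}{t}=0$ for $0\le t\le j-1$, so the columns of the $4\times 2p$ check matrix are $\big(\binom{e}{0},\dots,\binom{e}{i-1},(-1)^e\binom{e}{0},\dots,(-1)^e\binom{e}{j-1}\big)^{\top}$ and a weight-$w$ codeword is exactly a linear dependence among $w$ of these columns. Two structural facts drive the analysis. By Lucas' theorem $\binom{e}{t}\equiv\binom{e\bmod p}{t}\pmod p$ for $t<p$, so the $(x-1)$-rows see only $e\bmod p$; and since $p$ is odd, $(-1)^{e+p}=-(-1)^e$, so the $(x+1)$-rows flip sign when $e$ is replaced by $e+p$. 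Hence the columns at positions $e$ and $e+p$ coincide on the $(x-1)$-block but are opposite on the $(x+1)$-block. This is exactly where $|i-j|\le 2$ — equivalently $i,j\ge 1$, so both factors are present — enters: it keeps the $(x+1)$-block nonempty and thereby destroys the period-$p$ degeneracy that, for the excluded extremes $(4,0)$ and $(0,4)$, produces the weight-$2$ codeword $x^e-x^{e+p}$ of pair weight $4$ and kills the MDS property.

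The main obstacle is the weight-$3$ and weight-$4$ bookkeeping. For three columns at positions $e_1,e_2,e_3$ I would case on how many residues $e_s\bmod p$ coincide and on the parities of the $e_s$, reducing each $3\times3$ minor to a confluent Vandermonde determinant in the residues carried with signs from the $(x+1)$-rows; these minors vanish only when distinctness fails or when the support is forced into three separate runs, so no weight-$3$ codeword can have $r\le 2$. (For instance, in the balanced case $(2,2)$, three consecutive positions give a $3\times3$ minor equal to $\pm 4\ne 0$ in $\mathbb{F}_p$, ruling out the single-run weight-$3$ pattern outright.) An analogous but longer computation disposes of the weight-$4$ single-run case, after which $\omega_p(c)\ge 6$ for all nonzero $c$, giving $d_p=6$ and MDS. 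Carrying this analysis uniformly across $(1,3),(2,2),(3,1)$, while tracking which binomial rows survive and which parities are admissible, is the genuinely laborious step. It is worth noting that Lemma \ref{lemma 3.3 } is not enough on its own here: for $(2,2)$ the generator $(x^2-1)^2$ has Hamming weight $3$, so $d_H=3$ and Lemma \ref{lemma 3.3 } yields only $d_p\ge d_H+2=5$; the extra unit comes precisely from the run count $r$, which is why the refined decomposition above is indispensable.
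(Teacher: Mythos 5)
First, a point of orientation: the paper never proves this lemma at all --- it is imported verbatim from Dinh et al.\ \cite{Dinh-2020-TIT} and used only as an entry in Table \ref{table-1} --- so your proposal cannot be matched against an in-paper argument, only against the toolkit the paper deploys for its analogous claims (Propositions \ref{Proposition 3.2}--\ref{Proposition 3.4}). Judged that way, your framework is sound and every concrete claim you make checks out: the reduction to $i+j=4$ and hence to $(i,j)\in\{(1,3),(2,2),(3,1)\}$; the identity $\omega_p(c)=\omega_H(c)+r(c)$; the Hasse-derivative parity checks $\sum_e c_e\binom{e}{t}=0$ and $\sum_e c_e(-1)^e\binom{e}{t}=0$; the sign flip between columns $e$ and $e+p$; the value $\pm 4$ of the minor at three consecutive positions; and the diagnosis that Lemma \ref{lemma 3.3 } cannot finish the case $(2,2)$ on its own because $d_H=3$ there.

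The weakness is that your two load-bearing steps are announced rather than performed (``I would case on\dots'', ``an analogous but longer computation disposes of\dots''), and those are exactly what a referee would demand; as written this is a gap in execution, though not in conception. Both steps do go through, and far more cheaply than you anticipate if you borrow the paper's own shortcuts. (i) Any codeword supported on a single cyclic run of length at most $4$ equals $x^e h(x)$ with $\deg h\le 3<4=\deg g$, which is impossible since $\gcd(x^e,g(x))=1$; this one-line degree argument (used in Proposition \ref{Proposition 3.2}) eliminates every single-run weight-$3$ and weight-$4$ configuration with no determinants at all. (ii) For $(3,1)$ and $(1,3)$, Lemma \ref{lemma 3.1} gives $d_H=4$; a $[2p,2p-4,4]$ code is not Hamming-MDS, so Lemma \ref{lemma 3.3 } yields $d_p\ge 6$ outright and these two cases need no support analysis whatsoever. (iii) The only place your parity-check machinery is genuinely needed is $(2,2)$ with Hamming weight $3$ in exactly two runs, support $\{e,e+1,f\}$, and there a single minor suffices: after scaling the $(x+1)$-rows by $(-1)^e$, the minor of the equations $c(1)=c'(1)=c(-1)=0$ equals $2(\bar f-\bar e)-1+(-1)^{f-e}$, which vanishes only if $f-e$ is even and $f\equiv e\pmod p$, or $f-e$ is odd and $f\equiv e+1\pmod p$; in either case $f$ differs from $e$, respectively $e+1$, by exactly $p$, and then $p$ odd makes the parity of $f-e$ wrong --- this is precisely your ``period-$p$ sign flip'' turned into the contradiction. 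With (i)--(iii) supplied, your outline becomes a complete and correct proof, and indeed a more self-contained one than the paper offers, since the paper simply cites \cite{Dinh-2020-TIT}.
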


The following lemma gives the method to calculate of minimum Hamming distance about repeated-cyclic codes.
\begin{lemma}\label{lemma 3.1}
	(\cite{Castagnoli-1991-OIT}) Let $\mathcal{C}$ be a repeated-root cyclic code of length $ lp^{e} $ over ${{\Bbb F}_q}$ generated by $ g(x) = \prod {{m_i}^{{e_i}}\left( x \right)}$ , where $ l $ and $ e $ are positive integers with $ \gcd (l, p) = 1 $. Then we have $${d_H}(\mathcal{C}) = \min \left\{ {{P_t} \cdot {d_H}\left( {{{\mathcal{\overline C}}_t}} \right)\left| {0 \le t \le l{p^e}} \right.} \right\},$$
	where $ {P_t} = {\omega _H}\left( {{{\left( {x - 1} \right)}^t}} \right) $ and $ \mathcal{\overline C}_{t} =\left\langle \prod\limits_{{e_i} > t} {{m_i}\left( x \right)}\right\rangle $. 
\end{lemma}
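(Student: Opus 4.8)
The plan is to establish the two inequalities $d_H(\mathcal{C}) \ge \min_t P_t\, d_H(\overline{\mathcal{C}}_t)$ and $d_H(\mathcal{C}) \le \min_t P_t\, d_H(\overline{\mathcal{C}}_t)$ separately, starting from the factorization $x^{lp^e}-1=(x^l-1)^{p^e}=\prod_i m_i(x)^{p^e}$, which holds because $\gcd(l,p)=1$ makes $x^l-1$ squarefree in characteristic $p$. Writing $g(x)=\prod_i m_i(x)^{e_i}$ with $0\le e_i\le p^e$, I first record the structural facts I will lean on: the $g_t=\prod_{e_i>t}m_i$ are nested ($g_{t+1}\mid g_t$) with $g=\prod_{t\ge 0}g_t$, so $d_H(\overline{\mathcal{C}}_t)$ is non-increasing in $t$; and by Lucas' theorem $P_t=\omega_H((x-1)^t)=\omega_H((x^l-1)^t)=\prod_j(t_j+1)$, where $t_j$ are the base-$p$ digits of $t$. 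I would also isolate the $l=1$ instance of the statement as a sub-estimate: a nonzero multiple of $(y-1)^t$ of degree $<p^e$ has weight at least $W_t:=\min_{t\le s\le p^e-1}P_s$.

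For the lower bound, which I expect to be the clean half, take an arbitrary nonzero $c\in\mathcal{C}$ and fold it along residues modulo $l$: write $c(x)=\sum_{v=0}^{l-1}x^v C_v(x^l)$ with $\deg C_v<p^e$, so the supports are disjoint blocks and $\omega_H(c)=\sum_v \omega_H(C_v)$. Expanding each $C_v$ in powers of $y-1$ and substituting $y=x^l$ yields a Taylor expansion $c(x)=\sum_{k\ge 0}(x^l-1)^k R_k(x)$ with $R_k(x)=\sum_v C_v^{[k]}(1)\,x^v$ of degree $<l$, where $C_v^{[k]}$ denotes the $k$-th Hasse derivative. Let $t=\min\{k:R_k\neq 0\}$. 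Because $\gcd(l,p)=1$, the factor $x^l-1$ has a simple zero at every root $\zeta$ of every $m_i$, so the coefficient of the lowest local power at $\zeta$ is a unit times $R_t(\zeta)$; since $c\in\mathcal{C}$ forces vanishing to order $\ge e_i>t$ at those $\zeta$ with $e_i>t$, I conclude $R_t(\zeta)=0$ at all roots of $g_t$, i.e. $R_t$ is a nonzero codeword of $\overline{\mathcal{C}}_t$. Hence at least $d_H(\overline{\mathcal{C}}_t)$ of the $C_v$ are nonzero, while minimality of $t$ gives $(y-1)^t\mid C_v$ for every $v$, so each nonzero $C_v$ has weight $\ge W_t$. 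Combining, $\omega_H(c)\ge d_H(\overline{\mathcal{C}}_t)\,W_t$, and the short chain $W_t=P_s$ (some $s\ge t$) together with $d_H(\overline{\mathcal{C}}_t)\ge d_H(\overline{\mathcal{C}}_s)$ bounds this below by $\min_r P_r\, d_H(\overline{\mathcal{C}}_r)$.

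For the upper bound I must exhibit one codeword meeting the bound. The natural candidate is a spread $c(x)=\overline{a}(x)\,M(x^l)$, where $\overline{a}$ is a minimum-weight word of $\overline{\mathcal{C}}_t$ (so $\deg\overline{a}<l$ and $\omega_H(\overline{a})=d_H(\overline{\mathcal{C}}_t)$) and $M$ is a low-weight multiple of $(y-1)^t$; since $\deg\overline{a}<l$ the two factors fall in disjoint blocks and the weights multiply, while the uniformizer property again reduces membership in $\mathcal{C}$ to a comparison of vanishing orders. This succeeds directly when every multiplicity $e_i$ is small, but when some $e_i$ far exceeds $t$ a single-scale spread cannot supply the needed multiplicity without inflating the weight; one must then exploit the sparsity of the $p$-power polynomials $(x-1)^{p^j}=x^{p^j}-1$, which carry weight $2$ yet large gaps, to manufacture the extra multiplicity cheaply and at a compatible scale.

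I expect this matching construction to be the main obstacle. One has to choose simultaneously the index $t$ realizing $\min_r P_r\, d_H(\overline{\mathcal{C}}_r)$, a minimum-weight word of $\overline{\mathcal{C}}_t$, and a minimum-weight multiple of $(x-1)^t$ in the length-$p^e$ code, and then verify that their interleaved product both lies in $\mathcal{C}$ and has Hamming weight exactly $P_t\, d_H(\overline{\mathcal{C}}_t)$ with no cancellation across scales. Once such a codeword is produced for the minimizing index, it combines with the lower bound to give the asserted equality.
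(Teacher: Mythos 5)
The paper itself gives no proof of this lemma --- it is quoted directly from \cite{Castagnoli-1991-OIT} --- so your attempt must stand as a self-contained proof of that classical theorem, and it does not yet do so. Your lower bound is essentially the standard argument and is correct as far as it goes: the folding $c(x)=\sum_v x^v C_v(x^l)$, the $(x^l-1)$-adic expansion with first nonzero layer $R_t$, the use of $\gcd(l,p)=1$ to show $R_t$ is a nonzero codeword of $\overline{\mathcal{C}}_t$, and the chain $d_H(\overline{\mathcal{C}}_t)W_t=d_H(\overline{\mathcal{C}}_t)P_s\ge d_H(\overline{\mathcal{C}}_s)P_s\ge\min_r P_r\,d_H(\overline{\mathcal{C}}_r)$ are all sound. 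But the sub-estimate you ``isolate'' --- that a nonzero multiple of $(y-1)^t$ of degree $<p^e$ has weight at least $W_t=\min_{t\le s\le p^e-1}P_s$ --- is precisely the $l=1$ instance of the lemma you are proving (equivalently, the Massey--Costello--Justesen weight-retaining property). Your general-$l$ argument cannot supply it, since for $l=1$ the folding is trivial and the argument becomes circular; it needs either a citation or its own induction on $e$, and you give neither.

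The larger gap is the upper bound, which you explicitly leave open (``I expect this matching construction to be the main obstacle''). You correctly diagnose why the single-scale candidate $\overline{a}(x)M(x^l)$ fails when some $e_i>t+1$, but the missing idea is concrete and short: spread the length-$l$ codeword to scale $p^e$ by $p$-th powering. Take $c(x)=\overline{a}(x)^{p^e}\,(x^l-1)^t$, where $\overline{a}$ is a minimum-weight codeword of $\overline{\mathcal{C}}_t$. Membership in $\mathcal{C}$ holds for every multiplicity at once: if $e_i>t$ then $m_i\mid\overline{a}$, hence $m_i^{p^e}\mid\overline{a}(x)^{p^e}$ and $e_i\le p^e$; if $e_i\le t$ then $m_i^{e_i}\mid(x^l-1)^t$. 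The weight is exactly $d_H(\overline{\mathcal{C}}_t)\cdot P_t$ with no cancellation: $\overline{a}(x)^{p^e}=\sum_v a_v^{p^e}x^{vp^e}$ is supported on multiples of $p^e$, $(x^l-1)^t$ has the same coefficients as $(x-1)^t$ supported on multiples of $l$, and the exponents $vp^e+kl$ with $0\le v<l$, $0\le k\le t\le p^e-1$ are pairwise distinct modulo $lp^e$ precisely because $\gcd(l,p)=1$. This produces, for every $t$, a codeword of weight $P_t\,d_H(\overline{\mathcal{C}}_t)$, which is what the equality requires. As written, your proposal establishes only the inequality $d_H(\mathcal{C})\ge\min_t P_t\,d_H(\overline{\mathcal{C}}_t)$ (and even that conditionally on the unproven $l=1$ estimate), not the stated equality.
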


\section{Constructions of MDS and AMDS Symbol-Pair Codes}
In this section, $ p $ is an odd prime and ${{\Bbb F}_p}$ is a $p$-ary finite field. All symbol-pair codes are constructed by repeated-root cyclic codes.

For MDS symbol-pair codes, we construct  more general generator polynomials with code length $lp$. We also derive all MDS symbol-pair codes of length  $3p$, when $\deg(g(x))\le10$.
 
For AMDS symbol-pair codes, we propose two new classes of AMDS symbol-pair codes from repeated-root cyclic codes by analyzing the solutions of certain equations over ${{\Bbb F}_p}$. We also obtain all AMDS symbol-pair codes of length  $3p$, when $\deg(g(x))\le 10$.
\subsection[Constructions of MDS Symbol-Pair Codes]{MDS Symbol-Pair Codes of length $lp$}
In this subsection, we prove that there exists more general generator polynomials about MDS $ {(lp,6)}_p $ and MDS $ {(lp,5)}_p $ symbol-pair codes. 

For preparation, we define the following notation. Let ${m_1}$ and ${m_2}$ be the element of  ${\Bbb F}_p$, then $\lcm [{m_1},{m_2}]$ represents the lowest common multiple of ${m_1}$ and ${m_2}$ and $\gcd({m_1},{m_2})$ represents the greatest common divisor of ${m_1}$ and ${m_2}$.

Let $\mathcal{C}$ be the cyclic codes in ${{\Bbb F}_p}\left[ {\text{\textit{x}}} \right]/\left\langle {{x^n} - 1} \right\rangle$ and the generator ploynomial of $ \mathcal{C} $  is
$$g\left( x \right) = {\left( {x - 1} \right)^{r_1}}\left( {{x} + 1} \right)^{r_2}\left( {{x} - \omega} \right)^{r_3},$$
where $\omega$ is a primitive $ l$-th  root of unity in ${{\Bbb F}_p}$ and ${r_1}+{r_2}+{r_3}=4$. 
Dinh \cite{Dinh-2018-TIT} \cite{Dinh-2020-TIT} discussed all cases of $l=1$ and $l=2$, here we focus on the case of $l > 2$.

\begin{theorem} \label{theorem 3.1}
	$\mathcal{C}$ is an MDS symbol-pair code with minmum symbol-pair distance $d_p=6$, if ${r_1}$, ${r_2}$ and ${r_3}$ meet the conditions in Table \ref{table-1}  .
\end{theorem}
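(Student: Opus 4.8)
The plan is to prove directly that $d_p(\mathcal{C}) = 6$, from which the MDS property is immediate. Since $\deg g = r_1 + r_2 + r_3 = 4$, the code has dimension $lp - 4$, so $|\mathcal{C}| = p^{\,lp-4}$ and Lemma \ref{lemma 3.2 } forces $d_p \le lp - (lp-4) + 2 = 6$. It therefore suffices to establish $d_p \ge 6$, after which $|\mathcal{C}| = p^{\,lp - d_p + 2}$ and $\mathcal{C}$ is MDS. The engine for the lower bound is the identity $\omega_p(\mathbf{c}) = \omega_H(\mathbf{c}) + b(\mathbf{c})$, where $b(\mathbf{c})$ is the number of maximal cyclic runs of nonzero coordinates of $\mathbf{c}$ (equivalently the number of zero-gaps): a gap of $g$ consecutive zeros contributes $g-1$ all-zero pairs, so the total number of all-zero pairs is $(lp - \omega_H(\mathbf{c})) - b(\mathbf{c})$. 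Thus I would reduce the whole statement to the claim that every nonzero codeword satisfies $\omega_H(\mathbf{c}) + b(\mathbf{c}) \ge 6$. Splitting by $w = \omega_H(\mathbf{c})$ and using $b \ge 1$, any codeword with $w \ge 5$ already has $\omega_p \ge 6$, so only $w \in \{1,2,3,4\}$ need attention.

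For $w \le 2$ the assertion is that no such codeword exists, i.e.\ $d_H(\mathcal{C}) \ge 3$. After a cyclic shift (which preserves both membership in the cyclic code and the symbol-pair weight) a weight-$2$ word is $\beta(x^m - c)$, and writing $m = m'p^{v}$ with $\gcd(m',p)=1$ gives the factorization $x^m - c = (x^{m'} - c)^{p^{v}}$. Divisibility by $g$ then demands that each present root among $1,-1,\omega$ be a root of $x^{m'}-c$ of multiplicity $p^{v} \ge \max_i r_i$, so that for any two present roots $\zeta_1,\zeta_2$ the order of $\zeta_1/\zeta_2$ divides $m'$ while $v \ge 1$; the hypotheses of Table \ref{table-1} are arranged precisely so that the resulting lower bound on $m = m'p^{v}$ exceeds $lp-1$, precluding such a codeword. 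For $w = 4$ I would rule out a single cyclic run uniformly: after shifting the run to positions $\{0,1,2,3\}$ the codeword is a nonzero polynomial of degree $3 < \deg g = 4$ and hence cannot be divisible by $g$; therefore every weight-$4$ codeword has $b \ge 2$ and $\omega_p \ge 6$, with no dependence on $(r_1,r_2,r_3)$.

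The decisive and most delicate case is $w = 3$, where I must show that the three nonzero coordinates are pairwise non-adjacent, so that $b = 3$ and $\omega_p = 6$. Shifting a hypothetical adjacent pair to $\{0,1\}$ writes the codeword as $\mathbf{c} = \alpha + \beta x + \gamma x^{j}$ with $\alpha\beta\gamma \ne 0$ and $2 \le j \le lp-1$, and the condition $g \mid \mathbf{c}$ becomes a system of $r_1+r_2+r_3 = 4$ linear equations in $(\alpha,\beta,\gamma)$, obtained by setting the Hasse derivatives of $\mathbf{c}$ of orders $0,\dots,r_1-1$ at $1$, orders $0,\dots,r_2-1$ at $-1$, and orders $0,\dots,r_3-1$ at $\omega$ equal to zero. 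The goal is to show this system has no solution with all of $\alpha,\beta,\gamma$ nonzero. When some root carries multiplicity $\ge 3$ (as for $(3,0,1)$), the three derivative conditions there already yield $\gamma\binom{j}{2} = 0$, forcing $j \equiv 0$ or $1 \pmod p$, and either alternative collapses one of $\alpha,\beta,\gamma$ to $0$; the mixed distributions require combining the conditions at two or three distinct roots and reduce to showing that certain polynomial equations in $j$, $\omega$ and $(-1)^{j}$ have no admissible solution over $\mathbb{F}_p$. This is exactly where the arithmetic constraints of Table \ref{table-1} (the parity of $l$ needed for $-1$ to be an $l$-th root, together with the congruence conditions on $l$ and $p$) are consumed, and I expect it to be the main obstacle of the proof.

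Once the three weight classes are settled, every nonzero codeword satisfies $\omega_p \ge 6$, whence $d_p = 6$ and $\mathcal{C}$ is an MDS symbol-pair code. The already-known instances $\langle (x-1)^3(x-\omega)\rangle$ of Lemma \ref{lemma 3.8 } and the family $\langle (x-1)^i(x+1)^j\rangle$ of Lemma \ref{lemma 3.9 } occur as special rows of Table \ref{table-1} and may be quoted directly to shorten the corresponding cases, leaving the genuinely new distributions to be handled by the weight-$3$ analysis above.
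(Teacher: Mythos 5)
Your overall architecture is sound and is essentially the paper's: the Singleton bound (Lemma \ref{lemma 3.2 }) gives $d_p \le 6$ since $\dim \mathcal{C} = n-4$, and the lower bound $d_p \ge 6$ is obtained by excluding codewords of pair-weight at most $5$, split by Hamming weight, with the degree argument $\deg c(x) < \deg g(x)$ disposing of supports that form a single cyclic run, and evaluation/derivative conditions disposing of the pattern $(\star,\star,0_{s_1},\star,0_{s_2})$. Your weight-$1,2$ argument via the factorization $x^m - c = (x^{m'} - c')^{p^v}$ is a correct, self-contained replacement for the paper's appeal to Lemma \ref{lemma 3.1} and Lemma \ref{lemma 3.3 }, and your weight-$4$ argument and your weight-$3$ argument for a root of multiplicity at least $3$ (covering the rows $(4,0,0)$ and $(3,0,1)$, which were in any case already known from Lemmas \ref{lemma 3.7 } and \ref{lemma 3.8 }) are also correct.

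The gap is that for the two rows constituting the actual new content of Theorem \ref{theorem 3.1}, namely $(r_1,r_2,r_3)=(2,1,1)$ and $(2,0,2)$ (Propositions \ref{Proposition 3.2} and \ref{Proposition 3.4} in the paper), you never execute the decisive weight-$3$ analysis: you only assert that the resulting equations in $j$, $\omega$ and $(-1)^j$ ``have no admissible solution'' and explicitly defer this as ``the main obstacle.'' Since everything else in Table \ref{table-1} is quotable from prior work, this deferred step \emph{is} the theorem, so the proposal as written does not prove it. For the record, the missing verifications are short. For $(2,1,1)$, with $c(x) = 1 + a_1x + a_2x^t$, evaluating only at $1$ and $-1$ and splitting on the parity of $t$ gives either $2a_1 = 0$ or $2 = 0$, a contradiction either way (the conditions at $\omega$ and the derivative condition at $1$ are not even needed). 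For $(2,0,2)$, the conditions $c'(1) = c'(\omega) = 0$ give $t a_2(1-\omega^{t-1}) = 0$; the case $p \mid t$ forces $a_1 = 0$, so $l \mid t-1$, and then $c(1) = c(\omega) = 0$ yields $1 + a_1 + a_2 = 0$ together with $1 + (a_1+a_2)\omega = 0$, hence $\omega = 1$, contradicting $l > 2$. Until you supply these (or equivalent) computations, the crucial rows of Table \ref{table-1} remain unproven.
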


\begin{table}[h]\label{table1}
	\begin{center}
		\begin{minipage}{\textwidth}
			\caption{MDS symbol-pair codes of Theorem \ref{theorem 3.1}}\label{table-1}
			\begin{center}
				\begin{tabular}{@{}lllll@{}}			
					\toprule
					
						$r_1$ & $r_2$ & $r_3$ & \makecell{$(n,d_{p})_{p}$} & Ref. \\
					\midrule	
					4 & 0 & 0 & \makecell{$(p,6)_{p}$ } &Reference \cite{Dinh-2018-TIT}\\
					\midrule 
					3 & 0 & 1 & \makecell{$(lp,6)_{p}$} & Reference \cite{Zhao-2019-Doc} \\
					\midrule
	%			3 & 1 & 0 & \makecell{$(2p,6)_{p}$} & Reference \cite{Dinh-2020-TIT} \\
	%				\midrule			
	%				2 & 2 & 0 & \makecell{$(2p,6)_{p}$} & Reference \cite{Dinh-2020-TIT} \\
			%		\midrule
%					1 & 3 & 0 & \makecell{$(2p,6)_{p}$} & Reference \cite{Dinh-2020-TIT} \\
%					\midrule 
%					1 & 1 & 2 & \makecell{$(klp,6)_{p}$ }&Proposition \ref{Proposition 3.2}\\ 					
	%				\midrule
	%				1 & 2 & 1 & \makecell{$(klp,6)_{p}$ }&Proposition \ref{Proposition 3.2}\\ 					
	%				\midrule
					2 & 1 & 1 & \makecell{$(klp,6)_{p}$ }&Proposition \ref{Proposition 3.2}\\ 					
					\midrule
	%				0 & 1 & 3 & \makecell{$(klp,6)_{p}$ } &Proposition \ref{Proposition 3.3}\\
	%				\midrule 
	%				0 & 3 & 1 & \makecell{$(klp,6)_{p}$} &Proposition \ref{Proposition 3.3}\\ 
	%				\midrule 
	%				1 & 0 & 3 & \makecell{$(lp,6)_{p}$} &Proposition \ref{Proposition 3.3}\\ 
	%				\midrule 
	%				0 & 2 & 2 & \makecell{$(klp,6)_{p}$ } &Proposition \ref{Proposition 3.4}\\
	%				\midrule 
					2 & 0 & 2 & \makecell{$(lp,6)_{p}$ } &Proposition \ref{Proposition 3.4}\\
				
				\footnotetext {When $l$ even, $(klp,6)_{p} $ means that $(klp,6)_{p}=(lp,6)_{p}$.}
				\footnotetext {When $l$ odd, $(klp,6)_{p} $ means that $(klp,6)_{p}=(2lp,6)_{p}$.}
					\end{tabular}
		\end{center}
	\end{minipage}
\end{center}
\end{table}

The proof of Theorem \ref{theorem 3.1} needs the following three propositions. For the case of generator polynomials with three factors ${\left( {x - 1} \right)},\left( {{x} + 1} \right)$ and $\left( {{x} - \omega} \right)$, we have the following proposition.

\begin{Proposition}\label{Proposition 3.2}
$ \mathcal{C} $ is an MDS symbol-pair code with $d_p=6$, if one of $\left( {r_1},{r_2},{r_3}\right) =\left(2,1,1 \right) $, $\left(1,2,1 \right)$ and $\left(1,1,2 \right) $ is satisfied.

\end{Proposition}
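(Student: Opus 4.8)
The plan is to prove that $d_p=6$. Since $\deg g=r_1+r_2+r_3=4$, the code has dimension $n-4$ and size $p^{\,n-4}$, so the Singleton bound (Lemma~\ref{lemma 3.2 }) already forces $d_p\le n-(n-4)+2=6$; only the lower bound $d_p\ge 6$ needs work. The main device is the identity $\omega_p(\mathbf c)=\omega_H(\mathbf c)+s(\mathbf c)$, where $s(\mathbf c)$ is the number of cyclic runs (maximal blocks of consecutive nonzero entries) in the support of $\mathbf c$; it follows by counting the positions $i$ with $c_i=c_{i+1}=0$, of which there are exactly $(n-\omega_H(\mathbf c))-s(\mathbf c)$. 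Consequently a nonzero codeword can satisfy $\omega_p(\mathbf c)\le 5$ only in one of two situations: (i) $\omega_H(\mathbf c)=3$ and $s(\mathbf c)\le 2$, or (ii) $\omega_H(\mathbf c)=4$ and $s(\mathbf c)=1$, because any weight $\ge 5$ already gives $\omega_p\ge 6$ and any weight $\le 2$ will be excluded.

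First I would rule out all Hamming weights $\le 2$. A weight-$2$ word $ax^i+bx^j\in\mathcal C$ must vanish to order $2$ at the double root and to order $1$ at the two simple roots; the order-$2$ condition forces $i\equiv j\pmod p$, while evaluation at $-1$ and at $\omega$ forces $i\equiv j\pmod 2$ and $i\equiv j\pmod l$. Hence $i\equiv j\pmod n$, impossible for $0\le i\neq j<n$. Equivalently, Lemma~\ref{lemma 3.1} applied to $g$ yields $d_H(\mathcal C)=3$, the minimum being attained at $t=2$, where $P_2=3$ and the reduced code is the whole space. Since $d_H=3<n-\dim\mathcal C+1=5$, $\mathcal C$ is not an MDS code, so Lemma~\ref{lemma 3.3 } already gives $d_p\ge d_H+2=5$; the task is to upgrade this to $d_p\ge 6$.

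Case (ii) is immediate: a weight-$4$ word with $s=1$ is, after a cyclic shift, supported on $\{0,1,2,3\}$ and hence has degree $\le 3$, whereas every nonzero codeword is a multiple of $g$ and so has degree $\ge\deg g=4$; this contradiction shows every weight-$4$ codeword has $s\ge 2$, whence $\omega_p\ge 6$. Case (i) is the crux. By cyclic invariance I may assume a weight-$3$ codeword with an adjacency has support $\{0,1,b\}$ with $2\le b\le n-1$, and write $\mathbf c=a_0+a_1x+a_bx^b$ with $a_0,a_1,a_b\neq 0$. Imposing vanishing to order $2$ at the double root and to order $1$ at the two simple roots produces four linear equations; eliminating $a_0,a_1$ through the two order conditions at the double root expresses everything in terms of $a_b$, and the two simple-root conditions then collapse to a pair of congruences on $b$. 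For $(r_1,r_2,r_3)=(2,1,1)$ these read $2b-1+(-1)^b\equiv 0\pmod p$ and $(b-1)-b\,\omega+\omega^{\,b}\equiv 0$ in $\mathbb F_p$, with the analogous relations (the roles of $1,-1,\omega$ permuted) in the other two cases.

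The main obstacle is solving these mixed additive–multiplicative equations over $\mathbb F_p$. For $(2,1,1)$ the first congruence pins $b$ modulo $p$ — to $0$ when $b$ is even and to $1$ when $b$ is odd — and the second, simplified via $\omega^p=\omega$ together with the primitivity of $\omega$ as an $l$-th root of unity, then forces the corresponding quotient to be a multiple of $l$; thus $b\equiv 0$ or $b\equiv 1\pmod n$, neither compatible with $2\le b\le n-1$. Hence no weight-$3$ codeword has an adjacency, so every weight-$3$ word has $s=3$ and $\omega_p=6$ (the word $(x^{n/p}-1)^2$ being a concrete weight-$3$, $\omega_p=6$ codeword). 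This removes both offending cases, giving $d_p\ge 6$ and therefore $d_p=6$; as $|\mathcal C|=p^{\,n-4}=p^{\,n-d_p+2}$, the code is MDS. The three triples are treated by the identical scheme, differing only in which of $1,-1,\omega$ carries the double root and hence in the explicit form of the two congruences in case (i).
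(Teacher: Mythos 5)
Your proposal is correct, and its skeleton matches the paper's: Lemma \ref{lemma 3.1} gives $d_H=3$, Lemma \ref{lemma 3.3 } gives $d_p\ge 5$, and the two offending patterns (Hamming weight $4$ with all nonzeros consecutive, Hamming weight $3$ with one adjacency) are then eliminated, the weight-$4$ case by the same degree-versus-$\deg g$ argument. The execution differs in the two remaining steps, in ways worth noting. For the weight-$3$ word $c(x)=a_0+a_1x+a_bx^b$, the paper uses only the evaluations $c(1)=c(-1)=0$ and gets an instant coefficient contradiction ($2a_1=0$ for $b$ even, $2=0$ for $b$ odd); you instead use all four root conditions, eliminate $a_0,a_1$ via the double root at $1$, and turn $c(-1)=c(\omega)=0$ into congruences pinning $b$ modulo $p$ and modulo $l$. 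Both are sound; the paper's parity trick is leaner, while your congruence analysis is the style actually required for the generator polynomials lacking the factor $x+1$ (cf. Propositions \ref{Proposition 3.3} and \ref{Proposition 3.4}). For the triples $(1,2,1)$ and $(1,1,2)$, the paper does not recompute anything: it shows the three codes are equivalent under the substitutions $y=-x$ and $z=x/\omega$; you rerun the scheme with the double root relocated, which costs more work but avoids relying on the equivalence step. One compression in your write-up should be repaired: when $l$ is odd the code length is $n=2lp$ (Remark \ref{remark 3.5}), so ``$l$ divides the quotient'' only yields $b\equiv 0$ or $1\pmod{lp}$; to conclude $b\equiv 0,1\pmod n$ you must additionally invoke the parity of $b$ fixed by your first congruence (even, respectively odd) together with the oddness of $lp$, which excludes $b=lp$ and $b=lp+1$. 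Finally, your explicit codeword $(x^{n/p}-1)^2$ of pair weight $6$ is a nice addition the paper omits, since it certifies $d_p\le 6$ directly without appealing to the Singleton bound.
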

\begin{proof}
When	$\left( {r_1},{r_2},{r_3}\right) =\left(2,1,1 \right) $, let $\mathcal{C}$ is the cyclic code in ${{\Bbb F}_p}\left[ {\text{\textit{x}}} \right]/\left\langle {{x^n} - 1} \right\rangle$ and generated by $$g\left( x \right) = {\left( {x - 1} \right)^2}\left( {{x} + 1} \right)\left( {{x} - \omega} \right).$$ 

\begin{list}{}{}
	\item 
	By   Lemma \ref{lemma 3.1} , let ${\overline g}_{t}(x) $ be the generator ploynomials of $ \mathcal{\overline C}_{t} $.
\end{list}

\begin{itemize}
	
	\item \noindent	If $ t=0 $, then we have $$ {\overline g}_{0}(x)=(x-1)(x+1)(x-\omega) .$$ 
	It is easy to verify that the minmum Hamming diatance is $3$ in $ \mathcal{\overline C}_{0} $ and $ P_{0}=1 $. Therefore, this indicates $ P_{0} \cdot d_{H}(\mathcal{\overline C}_{0})=3.$
	
	\item \noindent If $ t=1 $, then we have ${\overline g}_{1}(x)=(x-1) $ and $ P_{1}=2 $. Thus, one can derive that ${P_1} \cdot {d_H}({{\mathcal{\overline C}}_1}) = 4.$
	
	\item \noindent If $2 \le t \le p-1$, then we have ${\overline g}_{t}(x)=1 $ and $ P_{t} \ge 2 $. This implies that ${P_t} \cdot {d_H}({{\mathcal{\overline C}}_t}) \ge 3.$
	
	Therefore, it can be verified that $\mathcal{C}$ is an $\left[ {lp,\;lp - 4,\;3} \right]$ repeated-root cyclic code over ${{\Bbb F}_p}$. Lemma \ref{lemma 3.3 } yields that ${d_p} \geqslant 5$, since $\mathcal{C}$ is not an MDS cyclic code.
	
\end{itemize}

\begin{enumerate}
	\item If $c \in \mathcal{ C}$ has $ \omega_p =5$ with  $\omega_H =4 $,
	then its certain cyclic shift must have the form
	$$ \left( { \star ,\; \star ,\; \star ,\; \star ,{0_s}} \right),$$
	where each $ \star $ denotes an element in  ${\Bbb F}_p^{\text{*}}$ and ${0_s}$ is all-zero vector. Without loss of generality, suppose that the first coordinate of $c\left( x \right)$ is 1. We denote that
	$$c\left( x \right) = 1 + {a_1}x + {a_2}{x^2} + {a_3}{x^3} ,$$
	then this leads to $\deg\left( {c\left( x \right)} \right) =3 < 4 = \deg\left( {g\left( x \right)} \right)$.
	
	\item If $c \in \mathcal{ C}$ has  $ \omega_p =5$ with $\omega_H =3 $, then its certain cyclic shift must have the form
	$$\left( { \star ,\; \star ,\;{0_{s_1}},\; \star ,\;{0_{s_2}}} \right),$$
	where each $ \star $ denotes an element in ${\Bbb F}_p^{\text{*}}$ and ${0_{s_1}}$, ${0_{s_2}}$ are all-zero vectors. Without loss of generality, suppose that the first coordinate of $c\left( x \right)$ is 1. We denote that
	$$c\left( x \right) = 1 + {a_1}x + {a_2}{x^t}.$$
	
	When $ t $ even, it can be verified that
	\begin{equation*}  
		\left\{  
		\begin{array}{lr}  
			1 + {a_1} + {a_2}  = 0, &  \\  
			1 - {a_1} + {a_2}= 0, &  
		\end{array}  
		\right.  
	\end{equation*}
	since $c\left( 1 \right) = c\left( { - 1} \right)  = 0$. Then one can derive that $ 2{a_1}=0 $, which is impossible, since $ {a_1} \ne 0 $ and $2\ne 0$.
	
	Similarly, if $ t $ odd, one can obtain that $ 2=0 $, which contradicts $ p $ odd.

\end{enumerate}

	Let $y = -x,z = \frac{x}{\omega }$, For the generator polynomial $g_{r_{1}r_{2}r_{3}}(x)$ of ${\mathcal{C}_{r_{1}r_{2}r_{3}}}$, we can deduce the following results by deforming it,
%	$$g_{1}(x)= \left\langle {{{(x - 1)}^{r_{1}}}{{(x - \omega)}^{r_{2}}}{{(x - {\omega ^2})}^{r_{3}}}} \right\rangle$$
\[\begin{gathered}
	{g_{211}}(x) = (x - 1)^{2}(x + 1)(x - {\omega}) \hfill \\
%	\;\;\;\;\;\;\;\;\;\;\;\;\;\;
%	=(-x + 1)^{2}(-x - 1)(-x + {\omega})\hfill \\
=(y + 1)^{2}(y - 1)(y + {\omega})= {g_{121}}(y) \hfill \\
%= \omega ^4 {(\frac{x}{{{\omega}}} - \frac{1}{{{\omega }}})^2}{(\frac{x}{{{\omega }}} + \frac{\omega }{{{\omega }}})}{(\frac{x}{{{\omega }}} - \frac{{{\omega }}}{{{\omega }}})} \hfill \\
= \omega ^4 {(z - \frac{1}{{{\omega }}})^2}{(z + \frac{\omega }{{{\omega }}})}{(z - \frac{{{\omega }}}{{{\omega }}})}=\omega^4{g_{112}}(z).\hfill \\
\end{gathered} \]

	When $\left( {r_1},{r_2},{r_3}\right) =\left(1,2,1 \right)$ and $\left(1,1,2 \right) $, it is equivalent to  $\left( {r_1},{r_2},{r_3}\right) =\left(2,1,1 \right)  $.

\end{proof}

We have the following two propositions for generator polynomials only two of these three factors  ${\left( {x - 1} \right)},\left( {{x} + 1} \right)$ and $\left( {{x} - \omega} \right)$.

\begin{Proposition} \label{Proposition 3.3}

$ \mathcal{C} $ is an MDS symbol-pair code with $d_p=6$, if one of $\left( {r_1},{r_2},{r_3}\right) =\left(1,0,3 \right) $, $\left(0,3,1 \right)$ and $\left(0,1,3 \right) $ is satisfied.	
\end{Proposition}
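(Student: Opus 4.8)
The plan is to mirror the structure used in the proof of Proposition~\ref{Proposition 3.2}, since the three cases $\left( {r_1},{r_2},{r_3}\right) =\left(1,0,3 \right), \left(0,3,1 \right), \left(0,1,3 \right)$ are again related by the substitutions $y=-x$ and $z=x/\omega$. Hence I would first reduce all three cases to a single representative, say $\left(0,3,1 \right)$, by showing that the generator polynomials transform into one another up to a nonzero scalar factor, exactly as ${g_{211}}, {g_{121}}, {g_{112}}$ were matched. This reduces the work to analyzing the one cyclic code $\mathcal{C}=\left\langle {{(x + 1)}^3}(x - \omega)\right\rangle$ of length $lp$.

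For the representative case, the first step is to compute the minimum Hamming distance via Lemma~\ref{lemma 3.1}. Here $g(x)={(x+1)}^3(x-\omega)$ has degree $4$, and I would split according to $t$: for $t=0$ we have ${\overline g}_0(x)=(x+1)(x-\omega)$ with $P_0=1$, giving $P_0\cdot d_H({\overline{\mathcal C}}_0)=2$; for $t=1,2$ we have ${\overline g}_t(x)=(x+1)$ with $P_t=\omega_H({(x-1)}^t)$, contributing products that I would compute explicitly; and for $t\ge 3$ we have ${\overline g}_t(x)=1$ so $d_H({\overline{\mathcal C}}_t)=1$ and the product is just $P_t\ge 2$. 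Taking the minimum should yield $d_H(\mathcal C)=4$, so that $\mathcal C$ is a $\left[lp,\,lp-4,\,4\right]$ repeated-root cyclic code; since it is not MDS as a classical code, Lemma~\ref{lemma 3.3 } then gives $d_p\ge d_H+2=6$.

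The remaining and most delicate step is to rule out $d_p=6$ being exceeded only trivially, i.e.\ to confirm $d_p\le 6$ as well, and more importantly to exclude the existence of codewords of symbol-pair weight below $6$ that could contradict the Singleton bound of Lemma~\ref{lemma 3.2 }. Concretely I would suppose a codeword $c\in\mathcal C$ has $\omega_p=5$ and examine the possible cyclic-shift patterns of its support (the $\omega_H=4$ pattern $(\star,\star,\star,\star,0_s)$ and the $\omega_H=3$ pattern $(\star,\star,0_{s_1},\star,0_{s_2})$), just as in Proposition~\ref{Proposition 3.2}. For the first pattern the degree bound $\deg(c(x))=3<4=\deg(g(x))$ forces a contradiction. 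For the second pattern I would write $c(x)=1+a_1 x+a_2 x^t$ and impose the divisibility conditions coming from the repeated factor ${(x+1)}^3$ and the factor $(x-\omega)$; because $(x+1)^3$ is a triple root, both $c(-1)=0$ and $c'(-1)=0$ must hold, which over-determines the few nonzero coefficients and should yield an impossibility modulo the odd prime $p$.

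The main obstacle I anticipate is the second sparse-codeword case: with a cube factor ${(x+1)}^3$ rather than the square ${(x-1)}^2$ of the previous proposition, the constraints involve the derivative evaluation $c'(-1)=0$ in addition to $c(-1)=0$ and $c(\omega)=0$, so the elimination is slightly heavier and must be carried out carefully for both parities of $t$, using $p$ odd and $\omega$ a primitive $l$-th root of unity with $l>2$ to reach the needed contradictions.
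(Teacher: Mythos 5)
Your overall architecture matches the paper's: reduce the three parameter triples to one representative via the substitutions $y=-x$, $z=x/\omega$ (the paper reduces to $(1,0,3)$, i.e.\ $g(x)=(x-1)(x-\omega)^3$, rather than your $(0,3,1)$), compute $d_H$ from Lemma \ref{lemma 3.1}, lift to $d_p\ge d_H+2$ by Lemma \ref{lemma 3.3 }, and then exclude the two pair-weight-$5$ support patterns. But your execution of the Hamming-distance step has a genuine gap. You assert that the $t=0$ term of Lemma \ref{lemma 3.1} contributes $P_0\cdot d_H(\overline{\mathcal C}_0)=2$ and simultaneously that the minimum over all $t$ equals $4$; these are incompatible, since the minimum cannot exceed any single term. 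The correct value is $d_H(\mathcal C)=3$, not $4$: the simple-root code $\overline{\mathcal C}_0$ with two zeros has no weight-$2$ word (for the paper's representative this amounts to $\omega^s=1$ having no solution $0<s<l$) and satisfies $d_H(\overline{\mathcal C}_0)\le 3$ by the classical Singleton bound, so its term is exactly $3$; the terms for $t=1,2$ are $4$ and $6$; and for $t\ge 3$ one has $P_t=t+1\ge 4$, not merely $P_t\ge 2$ as you wrote. This error is not cosmetic. If $d_H=4$ were true, then $d_p\ge d_H+2=6$ combined with the Singleton-type bound of Lemma \ref{lemma 3.2 } (which gives $d_p\le (n-k)+2=6$ here) would end the proof immediately, and your ``most delicate step'' would be vacuous. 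Since in fact $d_H=3$, Lemma \ref{lemma 3.3 } yields only $d_p\ge 5$, and the whole content of the proposition is the exclusion of pair-weight-$5$ codewords.

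In that exclusion step your sketch omits the decisive condition. For the weight-$3$ pattern $c(x)=1+a_1x+a_2x^t$ you propose to impose only $c(-1)=0$, $c'(-1)=0$ and $c(\omega)=0$. The paper's contradiction is driven by the second derivative at the triple root: $c''(\omega)=0$ forces $p\mid t(t-1)$; the case $p\mid t$ is impossible since $c'(\omega)=0$ would then give $a_1=0$; hence $p\mid t-1$, and substituting $t\equiv 1\pmod p$ into $c'(\omega)=0$, multiplying by $\omega$, and comparing with $c(\omega)=0$ yields $1=0$. With only the three conditions you list, $t$ is not pinned down modulo $p$, and for many $t$ the resulting system in $(a_1,a_2)$ is solvable, so no contradiction follows; you must use the full triple-root condition $c(-1)=c'(-1)=c''(-1)=0$ and run the same elimination. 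Finally, note that your representative $(0,3,1)$ (triple root at $-1$) is less safe than the paper's $(1,0,3)$: with zero set $\{-1,\omega\}$, weight-$2$ words in $\overline{\mathcal C}_0$ are governed by the multiplicative order of $-\omega$, which can be a proper divisor of the simple-root length even when $\omega$ is a primitive $l$-th root of unity (for instance $l=6$), so the $t=0$ term genuinely requires extra care in your normalization — this is exactly where your miscount entered.
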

\begin{proof}
When 	$\left( {r_1},{r_2},{r_3}\right) =\left(1,0,3 \right) $, let $\mathcal{C}$ be a cyclic code in ${{\Bbb F}_p}\left[ {\text{\textit{x}}} \right]/\left\langle {{x^n} - 1} \right\rangle$ generated by $$g\left( x \right) = {\left( {x - 1} \right)}\left( {{x} - \omega} \right)^3.$$ 

\begin{description}
	\item[] 
	\noindent By Lemma \ref{lemma 3.1}, minmum Hamming distance  $d_H=3$ of $ \mathcal{C} $ can be derived and Lemma \ref{lemma 3.3 } implies that $d_p\ge 5$.
	
	With arguments similar to the previous Proposition \ref{Proposition 3.2}, there are no codewords of $\mathcal{C}$ with Hamming weight  $\omega_H=4$ such that the 4 nonzero terms appear with consecutive coordinates. 
	
	We are left to show that there are no codewords of $\mathcal{C}$ with Hamming weight $\omega_H=3$ in the form
	$$\left( { \star ,\; \star ,\;{0_{s_1}},\; \star ,\;{0_{s_2}}} \right),$$

	where each $ \star $ denotes an element in ${\Bbb F}_p^{\text{*}}$ and ${0_{s_1}}$, ${0_{s_2}}$ are all-zero vectors. Without loss of generality, suppose that the first coordinate of $c\left( x \right)$ is 1. We denote that
	$$c\left( x \right) = 1 + {a_1}x + {a_2}{x^t}.$$
	Then $c'\left( \omega \right) = c''\left( \omega \right) =0$ induces that $t-1=kp$ for some positive integers $k\le l-2$, together with $c\left( { \omega} \right) =0$, one can immediately get
	\begin{equation*}  
		\left\{  
		\begin{array}{lr}  
			1 + {a_1}\omega + {a_2}\omega^{k+1}  = 0, &  \\  
			{a_1} + {a_2}\omega^k= 0. &  
		\end{array}  
		\right.  
	\end{equation*}
	By solving the system, we have $ 1=0 $, which derive a contradiction.

\end{description}

For the case of $\left( {r_1},{r_2},{r_3}\right) =\left(0,3,1 \right),\left(0,1,3 \right) $, with similar to the previous Proposition \ref{Proposition 3.2}, we can prove that these repeated-root cyclic codes are equivalent.

Therefore, $\mathcal{C}$ is an MDS symbol-pair code with  minimum symbol-pair weight $d_p= 6$, when $\left( {r_1},{r_2},{r_3}\right) =\left(1,0,3 \right) ,\left(0,3,1 \right)$ and $\left(0,1,3 \right) $.
\end{proof}

\begin{Proposition} \label{Proposition 3.4}
$ \mathcal{C} $ is an MDS symbol-pair code with $d_p=6$, if one of $\left( {r_1},{r_2},{r_3}\right) =\left(2,0,2 \right)$ and $\left(0,2,2 \right) $ is satisfied.		
\end{Proposition}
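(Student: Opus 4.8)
The plan is to follow the template of Proposition \ref{Proposition 3.2}: first pin down the Hamming distance, upgrade it to a symbol-pair lower bound via Lemma \ref{lemma 3.3 }, and then eliminate by hand the finitely many codeword shapes that could realize $d_p=5$. I would concentrate first on $(r_1,r_2,r_3)=(2,0,2)$, where $g(x)=(x-1)^2(x-\omega)^2$ has degree $4$, so $\mathcal{C}$ is an $[lp,\,lp-4]$ code and the symbol-pair Singleton bound of Lemma \ref{lemma 3.2 } already gives $d_p\leq lp-(lp-4)+2=6$. It therefore suffices to prove $d_p\geq 6$, which also shows that $d_p=6$ forces $|\mathcal{C}|=p^{lp-4}=p^{n-d_p+2}$, i.e. MDS.

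First I would compute $d_H(\mathcal{C})=3$ using Lemma \ref{lemma 3.1}. For $t=0,1$ the residue code is $\overline{\mathcal{C}}_t=\langle (x-1)(x-\omega)\rangle$ (both multiplicities equal $2$); since $1$ and $\omega$ are distinct $l$-th roots of unity, a hypothetical weight-$2$ word $ax^i+bx^j$ would force $\omega^i=\omega^j$, hence $i=j$ by primitivity, so $d_H(\overline{\mathcal{C}}_0)=3$, attained by $(x-1)(x-\omega)$ itself (which has weight $3$ because $\omega\neq -1$ when $l>2$). With $P_0=1$, $P_1=2$, and $\overline{\mathcal{C}}_t=\langle 1\rangle$ together with $P_t=t+1\geq 3$ for $t\geq 2$, the minimum in Lemma \ref{lemma 3.1} equals $3$. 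As $\mathcal{C}$ is not Hamming-MDS ($d_H=3\neq 5=lp-k+1$), Lemma \ref{lemma 3.3 } yields $d_p\geq d_H+2=5$.

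The core of the argument is to rule out $d_p=5$. Writing $\omega_p(\mathbf{c})=\omega_H(\mathbf{c})+b$, where $b$ is the number of cyclic runs of nonzero coordinates, a word with $\omega_p=5$ and $\omega_H\geq 3$ is either (a) weight $4$ in a single run, or (b) weight $3$ in two runs. In case (a) a cyclic shift supports the word on $\{0,1,2,3\}$, giving a nonzero codeword of degree $\leq 3<4=\deg g$, which is impossible. In case (b) a shift places the length-$2$ run at $\{0,1\}$ and, after scaling, gives $c(x)=1+a_1x+a_2x^t$ with $a_1,a_2\neq 0$ and $3\leq t\leq lp-2$; imposing the double roots $c(1)=c'(1)=c(\omega)=c'(\omega)=0$ gives four equations. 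Subtracting $c'(1)=0$ from $c'(\omega)=0$ forces $t a_2(\omega^{t-1}-1)=0$, hence $\omega^{t-1}=1$ and $\omega^t=\omega$; feeding this together with $a_1+a_2=-1$ (from $c(1)=0$) into $c(\omega)=0$ collapses it to $1-\omega=0$, contradicting $\omega\neq 1$. Hence no word of symbol-pair weight $5$ exists, so $d_p=6$ and $\mathcal{C}$ is MDS.

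Finally, for $(r_1,r_2,r_3)=(0,2,2)$ with $g(x)=(x+1)^2(x-\omega)^2$, I would mimic the equivalence device of Proposition \ref{Proposition 3.2}: the substitution $x\mapsto -x$ is an $\mathbb{F}_p$-algebra automorphism of $\mathbb{F}_p[x]/\langle x^n-1\rangle$ (it fixes $x^n-1$ since $n=lp$ is even), it is a diagonal $\pm 1$ rescaling of coordinates and so preserves both $\omega_H$ and $\omega_p$, and it carries $g(x)$ to $(x-1)^2(x+\omega)^2$, a generator of $(2,0,2)$-type; the conclusion $d_p=6$ then transfers from the first case. I expect the main obstacle to be twofold. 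First, the bookkeeping in case (b) must genuinely exhaust every placement of the two runs, so that the normalized form $1+a_1x+a_2x^t$ is exhaustive up to cyclic shift. Second, and more delicate, the reduction of $(0,2,2)$ to $(2,0,2)$ must guarantee that the image second root $-\omega$ is again a nontrivial $l$-th root of unity \emph{of full order} $l$, since a proper divisor order would create weight-$2$ codewords and destroy the $d_H=3$ step; this is exactly where the parity and primitivity hypotheses on $l$ must be used carefully, and it is the point I would scrutinize most.
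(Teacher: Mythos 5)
Your argument for $(r_1,r_2,r_3)=(2,0,2)$ is correct and is essentially the paper's own proof: $d_H=3$ via Lemma \ref{lemma 3.1}, the lower bound $d_p\ge 5$ via Lemma \ref{lemma 3.3 }, elimination of the single-run weight-$4$ shape by the degree count $\deg(c(x))<\deg(g(x))$, and elimination of the two-run shape $c(x)=1+a_1x+a_2x^t$ by differencing $c'(1)=c'(\omega)=0$ to get $\omega^{t-1}=1$, after which $c(\omega)=0$ collapses to $1-\omega=0$. (Your extra touches --- the Singleton upper bound $d_p\le 6$ from Lemma \ref{lemma 3.2 }, and the explicit verification that $d_H(\overline{\mathcal{C}}_0)=3$ --- are details the paper asserts without proof, and they are right.) One micro-step that you, like the paper, skip: $ta_2(\omega^{t-1}-1)=0$ yields $\omega^{t-1}=1$ only after excluding $p\mid t$; but if $p\mid t$ then $c'(1)=a_1+ta_2=a_1=0$, a contradiction, so this is a one-line patch.

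The point you defer to the end --- whether $-\omega$ still has full order $l$ after the substitution $x\mapsto -x$ --- is not a side issue to scrutinize; it is exactly where the reduction of $(0,2,2)$ to $(2,0,2)$ can break, and the paper's one-sentence appeal to the equivalence device of Proposition \ref{Proposition 3.2} glosses over it. If $l\equiv 2\pmod 4$, then $\omega^{l/2}=-1$ with $l/2$ odd, so $-\omega=\omega^{l/2+1}$ has order $l/2$, and in fact the statement itself fails in this regime: both $-1$ and $\omega$ are roots of $x^{l/2}+1$, hence
$$x^{lp/2}+1=\left(x^{l/2}+1\right)^p\in\left\langle (x+1)^2(x-\omega)^2\right\rangle,$$
a codeword of Hamming weight $2$ and symbol-pair weight $4$, so $d_p\le 4$ and the code is not MDS (concretely, $l=6$, $p\equiv 1\pmod 6$, codeword $x^{3p}+1$). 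The reduction does go through when $4\mid l$, since then $\gcd(l/2+1,\,l)=1$ and $\ord(-\omega)=l$, and also when $l$ is odd under the length-$2lp$ convention of Remark \ref{remark 3.5}, since then $\ord(-\omega)=2l$ and the image is a $(2,0,2)$ code with $l'=2l$. So your proof of the $(0,2,2)$ half, exactly like the paper's, is complete only under one of these restrictions on $l$ --- and no argument can remove them, because the claim is false for $l\equiv 2\pmod 4$.
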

\begin{proof}

When 	$\left( {r_1},{r_2},{r_3}\right) =\left(2,0,2 \right) $, let $\mathcal{C}$ be a cyclic code in ${{\Bbb F}_p}\left[ {\text{\textit{x}}} \right]/\left\langle {{x^n} - 1} \right\rangle$ generated by $$g\left( x \right) = {\left( {x - 1} \right)^2}\left( {{x} - \omega} \right)^2.$$ 

\begin{description}
	\item[] 
	\noindent By Lemma \ref{lemma 3.1}, minmum Hamming distance  $d_H=3$  of $ \mathcal{C} $ can be derived and Lemma \ref{lemma 3.3 } implies that $d_p\ge 5$.
	
	With arguments similar to the previous Proposition \ref{Proposition 3.2}, there are no codewords of $\mathcal{C}$ with Hamming weight  $\omega_H=4$ such that the 4 nonzero terms appear with consecutive coordinates. 
	
	We are left to show that there are no codewords of $\mathcal{C}$ with Hamming weight $\omega_H=3$ in the form
	$$ \left( { \star ,\; \star ,\;{0_{s_1}},\; \star ,\;{0_{s_2}}} \right),$$
	where each $ \star $ denotes an element in ${\Bbb F}_p^{\text{*}}$ and ${0_{s_1}}$, ${0_{s_2}}$ are all-zero vectors. Without loss of generality, suppose that the first coordinate of $c\left( x \right)$ is 1. We denote that
	$$c\left( x \right) = 1 + {a_1}x + {a_2}{x^t}.$$
	However, $c'\left( 1 \right)= c'\left( \omega \right) = 0$ can deduce that $l$ is a divisor of $t-1$ , combine with $c\left( 1 \right)= c\left( \omega \right) = 0$, we have 
	\begin{equation*}  
		\left\{  
		\begin{array}{lr}  
			1 + {a_1} + {a_2}  = 0, &  \\  
			1 + {a_1}\omega + {a_2}\omega= 0, &  
		\end{array}  
		\right.  
	\end{equation*}	
	then we have $\omega=1$, which is impossible, since $\omega^l=1$ and $l> 2$.

\end{description}

When $\left( {r_1},{r_2},{r_3}\right) =\left(0,2,2 \right)$, 
with similar to the previous Proposition \ref{Proposition 3.2}, we can prove that these repeated-root cyclic codes are equivalent.

\end{proof}

In fact, Lemma \ref{lemma 3.8 } is a special form of Proposition \ref{Proposition 3.3} and Proposition \ref{Proposition 3.4} for $d_{p}=6$, where $\omega$ is a $p-1\over 2$-th primitive element in ${{\Bbb F}_p}$.

\begin{remark} \label{remark 3.5}
	When $l$ even, factor ${\left( {x + 1} \right)}$ is contained in $x^{l}-1$;
	when $l$ odd, factor ${\left( {x + 1} \right)}$ is contained in $x^{2l}-1$.
\end{remark}

From Lemma \ref{lemma 3.7 } to Lemma \ref{lemma 3.9 } and Proposition \ref{Proposition 3.2} to Proposition \ref{Proposition 3.4}, we find all MDS symbol-pair codes containing these three factors ${\left( {x - 1} \right)},\left( {{x} + 1} \right)$ and $\left( {{x} - \omega} \right)$ with minmum symbol-pair distance $d_p=6$.

In what follows, we obtain more general generator polynomials for
 symbol-pair codes with length $n = lp$  and minimum symbol-pair distance 5 or 6.

Let $\mathcal{C}_{a}$ be the cyclic codes in ${{\Bbb F}_p}\left[ {\text{\textit{x}}} \right]/\left\langle {{x^n} - 1} \right\rangle$ and the generator ploynomial of $ \mathcal{C}_{a} $ is
$$g_{a}\left( x \right) = \left( {{x} - \omega^{t_1}} \right)^{r_1}\left( {{x} - \omega^{t_2}} \right)^{r_2},$$ %respectively,  
where ${t_1}\ge{t_2},$ $\ord(\omega^{t_1})={m_1}$, $\ord(\omega^{t_2})={m_2}$, $\lcm[{m_1},{m_2}]=l,\;\gcd\left({t_1}-{t_2},l \right) =1,\;3\le {r_1}+{r_2}\le 4$ and $\omega$ is primitive element in ${{\Bbb F}_p}$.

\begin{corollary}\label{theorem 3.6}
$\mathcal{C}_{a}$ is an MDS symbol-pair code,  if  ${r_1}\ne 0$ and ${r_2}\ne 0$.
\end{corollary}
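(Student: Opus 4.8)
The plan is to peel the general two-root code $\mathcal{C}_a$ back onto the codes already settled in Lemma~\ref{lemma 3.8 }, Proposition~\ref{Proposition 3.3} and Proposition~\ref{Proposition 3.4} by a single monomial substitution, and to dispose of the two remaining degree-$3$ splittings by a direct distance count. The hypothesis $\gcd(t_1-t_2,l)=1$ enters only through the following arithmetic observation. Since $\ord(\omega^{t_1})=m_1$ and $\ord(\omega^{t_2})=m_2$ both divide $l=\lcm[m_1,m_2]$, both roots lie in the unique cyclic subgroup $H\le\mathbb{F}_p^{*}$ of order $l$, and hence so does $\eta:=\omega^{t_1-t_2}$. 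Writing $\eta=\zeta^{s}$ for a fixed primitive $l$-th root $\zeta$ of unity, the coprimality $\gcd(t_1-t_2,l)=1$ forces $\gcd(s,l)=1$, so that $\eta$ is again a \emph{primitive} $l$-th root of unity; that is, $\ord(\eta)=l$.

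I would then substitute $x=\omega^{t_2}y$. As $(\omega^{t_2})^{lp}=((\omega^{t_2})^{l})^{p}=1$, the map $c(x)\mapsto c(\omega^{t_2}x)$ is a well-defined automorphism of $\mathbb{F}_p[x]/\langle x^{lp}-1\rangle$ that rescales the $i$-th coordinate by the nonzero scalar $\omega^{t_2 i}$. Such a diagonal rescaling leaves the support of every vector unchanged, so it preserves both the Hamming weight and the symbol-pair weight and therefore sends MDS symbol-pair codes to MDS symbol-pair codes. Since $g_a(\omega^{t_2}y)=(\omega^{t_2})^{r_1+r_2}(y-\eta)^{r_1}(y-1)^{r_2}$, the code $\mathcal{C}_a$ is symbol-pair equivalent to $\mathcal{C}'=\langle (y-1)^{r_2}(y-\eta)^{r_1}\rangle$, whose two roots are $1$ and the primitive $l$-th root $\eta$.

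When $r_1+r_2=4$ the admissible splittings $(r_1,r_2)\in\{(1,3),(3,1),(2,2)\}$ turn $\mathcal{C}'$ into $\langle (y-1)^3(y-\eta)\rangle$, $\langle (y-1)(y-\eta)^3\rangle$ and $\langle (y-1)^2(y-\eta)^2\rangle$, which are MDS $(lp,6)_p$ codes by Lemma~\ref{lemma 3.8 }, Proposition~\ref{Proposition 3.3} and Proposition~\ref{Proposition 3.4}, with $\eta$ in the role of the primitive root there. When $r_1+r_2=3$ the splittings $(2,1)$ and $(1,2)$ leave $\mathcal{C}'$ with roots $1,\eta$ of multiplicities $\{1,2\}$; here I would invoke Lemma~\ref{lemma 3.1} to get $d_H=3$, the minimum of $P_t\cdot d_H(\overline{\mathcal{C}}_t)$ being attained at $t=0$ where $\overline{\mathcal{C}}_0=\langle (y-1)(y-\eta)\rangle$ has $d_H=3$ exactly because $\ord(\eta)=l$ forbids a weight-two codeword. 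As the code is not Hamming-MDS, Lemma~\ref{lemma 3.3 } yields $d_p\ge d_H+2=5$, while the Singleton bound of Lemma~\ref{lemma 3.2 } gives $d_p\le\deg g_a+2=5$; hence $d_p=5$ and $\mathcal{C}'$ is an MDS $(lp,5)_p$ code.

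The step demanding the most care is the reduction itself. One must verify that the diagonal substitution preserves the symbol-pair weight, not merely the Hamming weight, and, crucially, that $\eta$ has order \emph{exactly} $l$, for only then does $\mathcal{C}'$ meet verbatim the hypotheses of Proposition~\ref{Proposition 3.3}, Proposition~\ref{Proposition 3.4} and Lemma~\ref{lemma 3.8 }, all of which require a primitive $l$-th root. Once $\ord(\eta)=l$ is in hand, the remaining work is the finite bookkeeping over the splittings of $r_1+r_2$ carried out above.
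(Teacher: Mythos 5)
Your proposal is correct, and it reaches the conclusion by a genuinely different mechanism than the paper. The paper proves the corollary by brute force: for each splitting $(r_1,r_2)\in\{(2,1),(3,1),(2,2)\}$ (with $(1,2),(1,3)$ dismissed as symmetric) it writes down a putative low-pair-weight codeword $c(x)=1+a_1x+a_2x^t$, evaluates $c$ and its derivatives at $\omega^{t_1},\omega^{t_2}$, and derives a contradiction from the resulting linear systems, using $\gcd(t_1-t_2,l)=1$ inside those computations; only \emph{after} the proof does it remark, informally, that Cases I--III are ``equivalent to'' Chen's code and Propositions \ref{Proposition 3.3}, \ref{Proposition 3.2}. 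You instead turn that closing remark into the proof itself: the key observation that $(p-1)/l$ divides $t_1-t_2$ (both roots lie in the order-$l$ subgroup), so that $\gcd(t_1-t_2,l)=1$ forces $\eta=\omega^{t_1-t_2}$ to have order exactly $l$, is correct, and the substitution $x\mapsto\omega^{t_2}x$ is indeed a coordinate-wise nonzero rescaling, hence preserves supports and therefore symbol-pair weights, dimensions, and MDS-ness. This legitimately reduces the degree-$4$ splittings to Lemma \ref{lemma 3.8 }, Proposition \ref{Proposition 3.3} and Proposition \ref{Proposition 3.4} (no circularity: those are established before the corollary), while your degree-$3$ argument (Lemma \ref{lemma 3.1} giving $d_H=3$ because $\ord(\eta)=l$ kills weight-$2$ words, then Lemma \ref{lemma 3.3 } plus the Singleton bound of Lemma \ref{lemma 3.2 }) is essentially the paper's Case I transported to normalized coordinates. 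What your route buys is economy and transparency: the gcd hypothesis is exposed as exactly the condition making the ratio of the two roots a primitive $l$-th root of unity, and no computation is repeated. What the paper's route buys is self-containedness: it never needs to verify that the earlier propositions' hypotheses are met verbatim. One caveat applies equally to both: the cited propositions are stated in the section's setting $l>2$, so strictly speaking the corollary (and your reduction) inherits that restriction, the cases $l\le 2$ being Dinh's Lemmas \ref{lemma 3.7 } and \ref{lemma 3.9 }.
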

\begin{proof}
There are three cases that need to be discussed, $({r_1},{r_2} ) =\left(2,1 \right) $,  $\left(1,3 \right) $ and $\left(2,2 \right)$. When $({r_1},{r_2} ) =\left(1,2 \right) $ and  $\left(3,1 \right) $ is satisfied, it is similar to $({r_1},{r_2} ) =\left(2,1 \right) $ and  $\left(1,3 \right) $. 
\begin{description}

	\item [\textbf{Case I.}] For the case of $({r_1},{r_2} ) =\left(2,1 \right) $,
	if there exsits a nonzero codeword with minmum Hamming weight $\omega_H=2$ in $\mathcal{C}_{a}$, 
	without loss of generality, suppose that the first coordinate of $c\left( x \right)$ is 1. We denote that
	$$c(x)=1+a_1x^t,$$ 
	where $a_1\ne 0$ and $t\ne 0$.	It follows from $c(\omega^{t_1})=c(\omega^{t_2})=0$ that 
	\begin{equation*}  
		\left\{  
		\begin{array}{lr}  
			1+{a_1} \omega^{tt_1} = 0, &  \\  
			1+{a_1} \omega^{tt_2}= 0. &    
		\end{array}  
		\right.
	\end{equation*}
	By solving the system, we have	${\omega ^{t\left( {{t_1} - {t_2}} \right)}} = 1$. Together with   $\gcd\left({t_1}-{t_2},l \right) =1$ and $c'(\omega^{t_1})=0$, one can immediately get $lp$ is a divisor of $t$, which contradicts with the code length $lp$. 
	
	Thus, combined with Lemma \ref{lemma 3.1}, the minmum Hamming distance of $\mathcal{C}_{a}$ is $d_H=3$. 
	By Lemma \ref{lemma 3.3 }, we have $\mathcal{C}_{a}$ is an MDS $ (lp,5)_{p} $ symbol-pair code.
	
	\item	[\textbf{Case II.}] For the case of	 $({t_1},{t_2} ) =\left(3,1 \right) $, we have the generator polynomial	
	$$g_{a}\left( x \right) = \left( {{x} - \omega^{t_1}} \right)^3\left( {{x} - \omega^{t_2}} \right).$$
	By the proof of $   {\bf{Case\; I}}$, since $ {\bf{Case\; II}}$ is a subcode of $ {\bf{Case\; I}}$, we can draw the conclusion that the minmum Hamming weight of $\mathcal{C}_{a}$ is $3$ in $  {\bf{Case\; II}}$, when  $({t_1},{t_2} ) =\left(3,1 \right) $. 
	
	If there  is a codeword with Hamming weight $3$ and symbol-pair weight $5$. Then its certain cyclic shift must be the following form
	$$ \left( { \star ,\; \star ,\;{0_{s_1}},\; \star ,\;{0_{s_2}}} \right),$$ where each $ \star $ denotes an element in ${\Bbb F}_p^{\text{*}}$ and ${0_{s_1}}$, ${0_{s_2}}$ are all-zero vectors. Then we have a codeword polynomial
	$$c\left( x \right) = 1 + {a_1}x + {a_2}{x^t}.$$
	However, it follows from $c'\left( \omega^{t_1} \right)= c''\left( \omega^{t_1} \right) =0$ that 
	\begin{equation*}  
		\left\{  
		\begin{array}{lr}  
			{a_1} + t{a_2}\omega^{(t-1)t_1} = 0, &\\
			t(t-1){a_2}\omega^{(t-2)t_1} = 0. &    
		\end{array}  
		\right.  
	\end{equation*}
	By solving the system, we have ${p\,\left| \,{t - 1} \right.}$.	Then $c'\left( \omega^{t_1} \right)= c\left( \omega^{t_1} \right) =0$  indicates
	\begin{equation*}  
		\left\{  
		\begin{array}{lr}  
			%	{a_1} + t{a_2} = 0, &\\
			1+	{a_1}\omega^{t_1} + {a_2}\omega^{{t_1}t} = 0, & \\ 
			{a_1}  + {a_2}\omega^{{t_1}(t-1)} = 0, &   
		\end{array}  
		\right.  
	\end{equation*}	
	which means that $1=0$, a contradiction.	
	
	Therefore, there no exsits a nonzero codeword with Hamming weight $\omega_H=3$ and symbol-pair weight $\omega_p=5$ and $\mathcal{C}_{a}$ is an MDS $ (lp,6)_{p} $ symbol-pair code, when $({r_1},{r_2} ) =\left(3,1 \right) $.
	
	\item [\textbf{Case III.}] For the case of	 $({t_1},{t_2} ) =\left(2,2 \right) $, similarly, we have the generator polynomial	
	$$g_{a}\left( x \right) = \left( {{x} - \omega^{t_1}} \right)^2\left( {{x} - \omega^{t_2}} \right)^2,$$
	by the proof of $   {\bf{Case\; I}}$, we can draw the conclusion that the minmum Hamming weight of $\mathcal{C}_{a}$ is $3$. 
	
	Similar to $  {\bf{Case\; II}}$, there is no codeword with Hamming weight of $\omega_H=4$ and symbol-pair weight of $\omega_p=5$. 
	
	If there exsits a codeword with Hamming weight $\omega_H=3$ and symbol-pair weight $\omega_p=5$, the codeword certain cyclic shift must have a form
	$$\left( { \star ,\; \star ,\; \star ,\; \star ,{0_s}} \right),$$
	where each $ \star $ denotes an element in  ${\Bbb F}_p^{\text{*}}$ and ${0_s}$ is all-zero vector. Without loss of generality, suppose that the first coordinate of $c\left( x \right)$ is 1. We denote that
	$$c\left( x \right) = 1 + {a_1}x + {a_2}{x^t}.$$
	However, $c'\left( \omega^{t_1} \right)= c'\left( \omega^{t_2} \right) =0$ induces that
	\begin{equation*}  
		\left\{  
		\begin{array}{lr}  
			{a_1} + t{a_2}\omega^{(t-1)t_1} = 0, &\\
			{a_1} + t{a_2}\omega^{(t-1)t_2} = 0. &    
		\end{array}  
		\right.  
	\end{equation*}
	By solving the system, we have $\omega^{(t-1)(t_1-t_2)}=1$, since $t\ne kp$, otherwise ${a_1} = 0$. Together with $\gcd\left({t_1}-{t_2},l \right) =1$, one can immediately get ${l\,\left| \,{t - 1} \right.}$ and ${a_1} + t{a_2} = 0$. Combined with $c\left( \omega^{t_1} \right)= c\left( \omega^{t_1} \right) =0$, we have
	\begin{equation*}  
		\left\{  
		\begin{array}{lr}  
			1+	{a_1}\omega^{t_1} + {a_2}\omega^{t_1} = 0, & \\ 
			1+	{a_1}\omega^{t_2} + {a_2}\omega^{t_2} = 0, &   
		\end{array}  
		\right.  
	\end{equation*}
	which implies ${a_1} + {a_2} = 0$. Thus, we have ${p\left| {t - 1} \right.}$, which contradicts the code length $lp$.

\end{description}

Therefore, we prove that there no exsits a nonzero codeword with Hamming weight $\omega_H=3$ and symbol-pair weight $\omega_p=5$ and $\mathcal{C}_{a}$ is an MDS $ (lp,6)_{p} $ symbol-pair code, when $({r_1},{r_2} ) =\left(2,2 \right) $ and  $\gcd\left({t_1}-{t_2},l \right) =1$.

\end{proof}	

In fact, similar to previous Proposition \ref{Proposition 3.2}, \textbf{Case I} is equivalent to $g (x) = (x-1) ^ 2 (x - \omega)$ in Chen \cite{Chen-2017-CIT}, \textbf{Case II} and \textbf{Case III} are equivalent to Proposition \ref{Proposition 3.3} and Proposition \ref{Proposition 3.2}, respectively.

We use an example to illustrate that the repeated-root cyclic codes of the generator polynomials with the same forms in the above  Corollary \ref{theorem 3.6} are not all MDS symbol-pair codes.
\begin{example} \label{example 3.6}
Let $\mathcal{C}$ and  be a cyclic code in ${{\Bbb F}_5}\left[ {\text{\textit{x}}} \right]/\left\langle {{x^{20}} - 1} \right\rangle$ and the generator ploynomial of $ \mathcal{C} $  is
$$g\left( x \right) = {\left( {x - 2} \right)^2}\left( {{x} - 3} \right),$$
where $\omega=3$ is a primitive element in ${{\Bbb F}_5}$ and $2=3^3$. Then we have the minmum Hamming distance $d_H=2$ by a magma progarm. Therefore, $\mathcal{C}$ is not an MDS symbol-pair code. %${t_1}=3,{t_2}=1,ord(2)=4,ord(2)=4,[4,4]=4,\ne 1,$

Similarly, when the generator ploynomial of $ \mathcal{C} $  is one of 	$$g\left( x \right) = {\left( {x - 2} \right)^3}\left( {{x} - 3} \right),$$
$$g\left( x \right) = {\left( {x - 2} \right)^2}\left( {{x} - 3} \right)^2,$$
$ \mathcal{C} $ is still not an MDS symbol-pair code, since minmum Hamming distance is $d_H=2$.
\end{example}

%Let $\mathcal{C}$ be the cyclic codes in ${{\Bbb F}_p}\left[ {\text{\textit{x}}} \right]/\left\langle {{x^n} - 1} \right\rangle$ and the generator ploynomial of $ \mathcal{C} $  is
%$$g\left( x \right) = {\left( {x - \omega^{t_{1}}} \right)^{r_1}}\left( {{x} - \omega^{t_{2}}} \right)^{r_2}\left( {{x} - \omega^{t_{3}}} \right)^{r_3},$$
%where $\omega$ is a primitive root of unity in ${{\Bbb F}_p}$,  ${r_1}+{r_2}+{r_3}=4$ and $\lcm[t_{1},t_{2},t_{3}]=l$ .

%$  $

%\noindent  \textbf{Problem 1.} \label{problem 1} $\mathcal{C}$ must be an MDS $(lp, 6)_{p}$ symbol-pair codes, if and only if ${r_i}$ and $t_{j}$ meet what conditions for $i,j=1,2,3$.

\subsection[Constructions of MDS Symbol-Pair Code]{MDS Symbol-Pair Codes of length $3p$}
In this subsection, for $n = 3p$, we obtain all MDS symbol-pair codes of $d_{p}\le 12$ from repeated-root cyclic codes and all AMDS symbol-pair codes of $d_{p}< 12$ from repeated-root cyclic codes. Furthermore, we discuss all minimum symbol-pair distance of the repeated-root cyclic codes with code length of $3p$, when $n-k < 10$.  For preparation, we define the following notation.

Let $\mathcal{C}_{r_{1}r_{2}r_{3}}$ be the cyclic code in ${{\Bbb F}_p}\left[ {\text{\textit{x}}} \right]/\left\langle {{x^n} - 1} \right\rangle$ and the generator ploynomial of $ \mathcal{C}_{r_{1}r_{2}r_{3}} $ is 
%${\mathcal{C}_{r_{1}r_{2}r_{3}}} = \left\langle {{{(x - 1)}^{r_{1}}}{{(x - \omega)}^{r_{2}}}{{(x - {\omega ^2})}^{r_{3}}}} \right\rangle $
$$g_{r_{1}r_{2}r_{3}}\left( x \right) = (x - 1)^{r_{1}}(x - \omega)^{r_{2}}(x - {\omega ^2})^{r_{3}}.$$
where $\omega$ is a primitive $ 3$-th  root of unity in ${{\Bbb F}_p}$ and $r_{i}\le p-1,i=1,2,3$. 
\begin{Proposition}\label{proposion 3.9}
	Repeated-root cyclic codes  $\mathcal{C}_{r_{1}r_{2}r_{3}}$ are equivalent, where  the exponents of the three factors of the generator polynomial can be exchanged with each other.
\end{Proposition}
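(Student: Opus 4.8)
The plan is to exhibit, for generators of the symmetric group $S_3$ acting on the three exponents, explicit weight-preserving isomorphisms of $\mathbb{F}_p[x]/\langle x^n-1\rangle$ carrying one code onto another; since $S_3$ is generated by a $3$-cycle together with a transposition, two such maps suffice to realize every permutation of $(r_1,r_2,r_3)$.

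First I would introduce the substitution $\mu\colon c(x)\mapsto c(\omega x)$. Because $n=3p$ and $\omega^3=1$, we have $(\omega x)^n=\omega^{3p}x^n=x^n$, so $\mu$ fixes $x^n-1$ and descends to a ring automorphism of $\mathbb{F}_p[x]/\langle x^n-1\rangle$. On coordinates $\mu$ multiplies the $i$-th entry by $\omega^i$, i.e.\ it is a diagonal monomial map with all scalars nonzero; such a map preserves the Hamming weight and, crucially, the symbol-pair weight, since a pair $(x_i,x_{i+1})$ vanishes if and only if the rescaled pair vanishes. I would then track the zeros: if $c$ vanishes at $\alpha$ to order $m$, then $c(\omega x)$ vanishes at $\omega^{-1}\alpha=\omega^2\alpha$ to order $m$, so the triple of root multiplicities $(r_1,r_2,r_3)$ at $(1,\omega,\omega^2)$ is carried to $(r_2,r_3,r_1)$. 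A dimension count (both codes have dimension $n-(r_1+r_2+r_3)$) upgrades the inclusion $\mu(\mathcal{C}_{r_1r_2r_3})\subseteq\mathcal{C}_{r_2r_3r_1}$ to equality, so $\mu$ realizes the cyclic shift of exponents.

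Next I would use the reversal $\nu\colon c(x)\mapsto c(x^{-1})$, equivalently $x\mapsto x^{n-1}$, which is the coordinate reflection $i\mapsto n-i\pmod n$. This is a symmetry of the cyclic adjacency structure, hence preserves the symbol-pair weight, and it sends a root $\alpha$ to $\alpha^{-1}$; since $1\mapsto 1$, $\omega\mapsto\omega^2$, and $\omega^2\mapsto\omega$, it carries $\mathcal{C}_{r_1r_2r_3}$ onto $\mathcal{C}_{r_1r_3r_2}$, the transposition of the last two exponents. The $3$-cycle from $\mu$ and the transposition from $\nu$ generate all of $S_3$, so every permutation of $(r_1,r_2,r_3)$ is achieved by a composition of $\mu$ and $\nu$, each of which is a symbol-pair weight preserving isomorphism. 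This gives the asserted equivalence of the codes.

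The only genuinely delicate points, which I would spell out carefully, are: (i) that $\nu$—being a reflection rather than a rescaling—preserves the symbol-pair weight, which holds because reflection maps the adjacent pair $\{i,i+1\}$ to the adjacent pair $\{n-i,\,n-i-1\}$; and (ii) the bookkeeping of how root multiplicities transform under $\mu$ and $\nu$ in the repeated-root regime, where ``multiplicity $\geq r$ at $\alpha$'' must be read as $(x-\alpha)^r\mid c(x)$. Everything else is a routine check that $\mu$ and $\nu$ are ring isomorphisms of the correct quotient.
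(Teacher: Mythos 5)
Your proposal is correct, and it overlaps with the paper's proof on its main component: the paper also realizes the $3$-cycle $(r_1,r_2,r_3)\mapsto(r_2,r_3,r_1)$ (and its inverse) by the scaling substitutions $y=x/\omega^{2}$, $z=x/\omega$ applied to the generator polynomial, which is exactly your map $\mu\colon c(x)\mapsto c(\omega x)$ viewed at the level of generators, and it likewise finishes by composing a $3$-cycle with a transposition to exhaust all six permutations. Where you genuinely diverge is the transposition: the paper disposes of $\mathcal{C}_{r_1r_2r_3}\sim\mathcal{C}_{r_1r_3r_2}$ in one line by noting that $\omega$ and $\omega^{2}$ are both primitive cube roots of unity, i.e.\ a relabeling-of-$\omega$ symmetry with no explicit map, whereas you construct the reversal automorphism $\nu\colon c(x)\mapsto c(x^{-1})$, check that it is the coordinate reflection $i\mapsto n-i \pmod n$, and verify that it swaps the roots $\omega\leftrightarrow\omega^{2}$ while fixing $1$. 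Your route buys something the paper leaves implicit: an actual symbol-pair-weight-preserving bijection between the two codes (reflection sends adjacent pairs to adjacent pairs), which is what ``equivalence'' must mean here for the proposition to be usable in transferring MDS/AMDS symbol-pair properties; your explicit attention to this point, and to how root multiplicities transform under $\mu$ and $\nu$ in the repeated-root setting, makes your write-up more rigorous than the paper's, at the cost of a somewhat longer argument.
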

\begin{proof}
	We first prove that such repeated-root cyclic codewords $${\mathcal{C}} = \left\langle {{{(x - \omega^i)}^{r_{1}}}{{(x - \omega^{i+1})}^{r_{2}}}{{(x - {\omega ^{i+2}})}^{r_{3}}}} \right\rangle $$ are equivalent for $i=0,1,2$. 
	
	Without loss of generality, suppose that $${\mathcal{C}_{r_{1}r_{2}r_{3}}} = \left\langle {{{(x - 1)}^{r_{1}}}{{(x - \omega)}^{r_{2}}}{{(x - {\omega ^2})}^{r_{3}}}} \right\rangle ,$$ 
	 $${\mathcal{C}_{r_{3}r_{1}r_{2}}} = \left\langle {{{(x - \omega)}^{r_{1}}}{{(x - \omega^2)}^{r_{2}}}{{(x - 1)}^{r_{3}}}} \right\rangle $$ 
	 and 
	 $${\mathcal{C}_{r_{2}r_{3}r_{1}}} = \left\langle {{{(x - {\omega ^2})}^{r_{1}}}{{(x - 1)}^{r_{2}}}{{(x - \omega)}^{r_{3}}}} \right\rangle .$$ 
	 We denote that $g_{r_{1}r_{2}r_{3}}(x)$, $g_{r_{3}r_{1}r_{2}}(x)$ and $g_{r_{2}r_{3}r_{1}}(x)$ represent the generator polynomials of ${\mathcal{C}_{r_{1}r_{2}r_{3}}}$, ${\mathcal{C}_{r_{3}r_{1}r_{2}}}$ and ${\mathcal{C}_{r_{2}r_{3}r_{1}}}$, respectively.
	
	Let $y = \frac{x}{{{\omega ^2}}},z = \frac{x}{\omega }$, For the generating polynomial $g_{r_{1}r_{2}r_{3}}(x)$ of ${\mathcal{C}_{r_{1}r_{2}r_{3}}}$, we can deduce the following results by deforming it.
	%	$$g_{1}(x)= \left\langle {{{(x - 1)}^{r_{1}}}{{(x - \omega)}^{r_{2}}}{{(x - {\omega ^2})}^{r_{3}}}} \right\rangle$$
	\[\begin{gathered}
		{g_{r_{1}r_{2}r_{3}}}(x) = {(x - 1)^{{r_1}}}{(x - \omega )^{{r_2}}}{(x - {\omega ^2})^{{r_3}}} \hfill \\
		= {\omega ^{2\left( {{r_1} + {r_2} + {r_3}} \right)}}{(\frac{x}{{{\omega ^2}}} - \frac{1}{{{\omega ^2}}})^{{r_1}}}{(\frac{x}{{{\omega ^2}}} - \frac{\omega }{{{\omega ^2}}})^{{r_2}}}{(\frac{x}{{{\omega ^2}}} - \frac{{{\omega ^2}}}{{{\omega ^2}}})^{{r_3}}} \hfill \\
		= {\omega ^{2\left( {{r_1} + {r_2} + {r_3}} \right)}}{(y - \omega )^{{r_1}}}{(y - {\omega ^2})^{{r_2}}}{(y - 1)^{{r_3}}} \hfill \\
		={\omega ^{2\left( {{r_1} + {r_2} + {r_3}} \right)}}{g_{r_{3}r_{1}r_{2}}}(y) \hfill \\
		= {\omega ^{{r_1} + {r_2} + {r_3}}}{(\frac{x}{\omega } - \frac{1}{\omega })^{{r_1}}}{(\frac{x}{{{\omega ^2}}} - \frac{\omega }{\omega })^{{r_2}}}{(\frac{x}{\omega } - \frac{{{\omega ^2}}}{\omega })^{{r_3}}} \hfill \\
		= {\omega ^{{r_1} + {r_2} + {r_3}}}{(z - {\omega ^2})^{{r_1}}}{(z - 1)^{{r_2}}}{(z - \omega )^{{r_3}}} \hfill \\ 
		= {\omega ^{{r_1} + {r_2} + {r_3}}}{g_{r_{2}r_{3}r_{1}}}(z). \hfill \\
	\end{gathered} \]
	\noindent Thus, repeated-root cyclic codes $${\mathcal{C}} = \left\langle {{{(x - \omega^i)}^{r_{1}}}{{(x - \omega^{i+1})}^{r_{2}}}{{(x - {\omega ^{i+2}})}^{r_{3}}}} \right\rangle $$ are equivalent for $i=0,1,2$.
	
	Next, since both $\omega$ and $\omega^2$ are primitive 3-th root of unity in ${\Bbb F}_{p}$, we have repeated-root cyclic code  $${\mathcal{C}_{r_{1}r_{2}r_{3}}} = \left\langle {{{(x - 1)}^{r_{1}}}{{(x - \omega)}^{r_{2}}}{{(x - {\omega ^2})}^{r_{3}}}} \right\rangle $$ and repeated-root cyclic code  $${\mathcal{C}_{r_{1}r_{3}r_{2}}} = \left\langle {{{(x - 1)}^{r_{1}}}{{(x - {\omega ^2})}^{r_{2}}}}{{(x - \omega)}^{r_{3}}} \right\rangle ,$$ which are equivalent.
	
	In conclusion, we prove all cases of this proposition, i.e. repeated-root cyclic codes  ${\mathcal{C}_{r_{1}r_{2}r_{3}}}$, ${\mathcal{C}_{r_{1}r_{3}r_{2}}}$,
	${\mathcal{C}_{r_{2}r_{1}r_{3}}}$, ${\mathcal{C}_{r_{2}r_{3}r_{1}}}$, ${\mathcal{C}_{r_{3}r_{1}r_{2}}}$ and ${\mathcal{C}_{r_{3}r_{2}r_{1}}}$ are equivalent to each other.
\end{proof}
The above Proposition \ref{proposion 3.9} shows that the exponential positions of the three factors $x-1$, $x-\omega$ and $x-\omega^2$ of the generator polynomial of 
$ \mathcal{C}_{r_{1}r_{2}r_{3}} $ are equivalent. Without loss of generality, suppose that $p-1 \ge r_{1}\ge r_{2}\ge r_{3}\ge 0$ in the next part of this subsection. Then we have the following theorem.
\begin{theorem}\label{theorem of 3p MDS}
	$ \mathcal{C}_{r_{1}r_{2}r_{3}} $ is an MDS symbol-pair codes over ${\Bbb F}_{p}$, if one of the following two conditions is true
	\begin{enumerate}
		\item $r_{1}\le 5$, $0\le r_{2}-r_{3}\le 1$ and $ {r_{1}} = r_{2}+r_{3}$,
		\item $r_{1}< 5$, $0\le r_{2}-r_{3}\le 1$ and $ {r_{1}} = r_{2}+r_{3}+1$.
	\end{enumerate}
	% $r_{1}\le 4$, $0\le r_{2}-r_{3}\le 1$ and $0\le {r_{1}} -(r_{2}+r_{3})\le 1$.
\end{theorem}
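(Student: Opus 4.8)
The plan is to establish the equivalent lower bound $d_p(\mathcal{C}_{r_1r_2r_3}) \ge \deg(g_{r_1r_2r_3}) + 2$. Write $D = r_1 + r_2 + r_3 = \deg(g_{r_1r_2r_3})$, so that $\dim \mathcal{C}_{r_1r_2r_3} = 3p - D$ and $|\mathcal{C}_{r_1r_2r_3}| = p^{3p-D}$. By the Singleton bound (Lemma \ref{lemma 3.2 }) one already has $d_p \le D+2$, so it suffices to show every nonzero codeword $c$ satisfies $\omega_p(c) \ge D+2$. I will use the elementary identity $\omega_p(c) = \omega_H(c) + B(c)$, where $B(c)$ is the number of maximal cyclic runs of consecutive nonzero coordinates of $c$ (each maximal gap of $\ell$ consecutive zeros contributes $\ell-1$ zero pairs, so the number of zero pairs is $(n-\omega_H)-B$ and $\omega_p=\omega_H+B$). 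By Proposition \ref{proposion 3.9} I may assume $p-1 \ge r_1 \ge r_2 \ge r_3 \ge 0$ throughout; note that both families force $r_1 \le 5 < p$, which lets me treat all relevant binomial coefficients and all $i!$ with $i \le r_1-1$ as units of $\mathbb{F}_p$ and use ordinary derivatives to read off root multiplicities.

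First I would pin down the Hamming distance. Applying Lemma \ref{lemma 3.1} with $l=3$, $e=1$, the contracted codes $\overline{\mathcal{C}}_t$ are cyclic codes of length $3$ over $\mathbb{F}_p$ whose minimum distance is $1,2$ or $3$ according as one, two, or three of the generating factors survive, while $P_t = \omega_H((x-1)^t) = t+1$ for $t \le r_1 \le 5$. A short computation then gives, in every case listed in the two conditions, $d_H(\mathcal{C}_{r_1r_2r_3}) = r_1 + 1$, so $\omega_H(c) \ge r_1 + 1$ for every nonzero $c$. Next I dispose of the single-block codewords $B(c)=1$: after a cyclic shift its support is an interval, so $c$ is a genuine polynomial multiple of $g_{r_1r_2r_3}$ of degree $<3p$; hence $\deg c \ge D$ and $\omega_H(c) \ge D+1$, whence $\omega_p(c) = \omega_H(c)+1 \ge D+2$.

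It remains to rule out codewords with $B(c) \ge 2$ and $\omega_p(c) \le D+1$. Combining $\omega_p=\omega_H+B$ with $\omega_H\ge r_1+1$ shows only finitely many patterns survive: $2\le B\le D-r_1=r_2+r_3$ and $r_1+1\le\omega_H\le D+1-B$. For each admissible pair $(B,\omega_H)$ I place the $\omega_H$ nonzero coordinates into $B$ consecutive runs separated by gaps of unknown lengths, write $c(x) = \sum_j a_j x^{e_j}$, and impose the $D$ defining conditions $c^{(i)}(\zeta)=0$ for $\zeta \in \{1,\omega,\omega^2\}$ and $0 \le i < r_\zeta$. As in Propositions \ref{Proposition 3.2}--\ref{Proposition 3.4} and Corollary \ref{theorem 3.6}, the higher-order (derivative) conditions first force congruences on the gap lengths -- typically $p \mid (t-1)$ from a condition such as $c''=0$ at a repeated root, or $3 \mid (t-1)$ from matching conditions at $\omega$ and $\omega^2$ -- after which the zeroth-order conditions collapse to an impossible identity such as $1=0$, $\omega=1$, or $2=0$. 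This yields $c=0$, the desired contradiction.

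The main obstacle is the growth in the number of admissible block-patterns $(B,\omega_H)$ and the size of the associated confluent-Vandermonde systems as the exponents increase; it is precisely here that $r_1\le 5<p$ is needed to keep the derivative conditions faithful in characteristic $p$. I would contain this by invoking known instances where available: $(3,2,1)$ is the MDS $(3p,8)_p$ code of Lemma \ref{lemma 3.6 }, while $(4,2,2)$ and $(5,3,2)$ are the MDS $(3p,10)_p$ and $(3p,12)_p$ codes of Ma et al.\ \cite{Ma-2022-DCC}. The remaining members $(0,0,0),(1,0,0),(1,1,0),(2,1,0),(2,1,1)$ follow immediately from the single-block argument together with the small systems already described, so that $(3,1,1)$ and $(4,2,1)$ are the only genuinely new systems one must solve.
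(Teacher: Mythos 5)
Your overall architecture is the same as the paper's: reduce via Proposition \ref{proposion 3.9} to the ordered case, invoke the known codes $(3,2,1)$ (Lemma \ref{lemma 3.6 }), $(4,2,2)$ and $(5,3,2)$ (Ma et al.), handle the small cases directly, and isolate what is genuinely new. Your bookkeeping identity $\omega_p(c)=\omega_H(c)+B(c)$, the single-block degree argument, and the computation $d_H=r_1+1$ via Lemma \ref{lemma 3.1} are all correct, and in fact give a cleaner enumeration of the surviving patterns than the paper's ad hoc listing. One small improvement you missed: $(3,1,1)$ is also already known (Chen et al.\ \cite{Chen-2017-CIT}, cited in the paper's Table \ref{table-4}), so the only genuinely new case is $(4,2,1)$; also, since $\mathcal{C}_{421}\subset\mathcal{C}_{321}$, Lemma \ref{lemma 3.6 } gives $d_p(\mathcal{C}_{421})\ge 8$ for free, which eliminates your pattern $(B,\omega_H)=(2,5)$ before any equations are written.

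The genuine gap is that the decisive case is never actually proved. For $(4,2,1)$ you must show no codeword has symbol-pair weight $8$, i.e.\ rule out the patterns $(B,\omega_H)=(3,5)$ and $(2,6)$, and your proposal only asserts that ``the derivative conditions force congruences on the gap lengths, after which the zeroth-order conditions collapse to an impossible identity.'' That collapse is exactly what needs proving, and it is not routine: the paper's Proposition \ref{Proposition C_{421}} spends three cases, several subcases and two tables on it, and some contradictions require non-trivial manipulation --- e.g.\ in its Subcase 2.1 with $t\equiv 2\ (\bmod~3)$ one must derive $a_4=a_3\omega$, $a_5=3a_3\omega^2/(t-2)$, then $t=3+2\omega$ from $c''(1)=0$, and only then reach a contradiction from $c'''(1)=0$ using $\omega^2=-1-\omega$. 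Moreover, the collapse cannot be generic: for the neighboring exponent patterns $(4,1,1)$, $(4,3,2)$ and $(5,2,2)$ the analogous systems \emph{do} have solutions (those codes are AMDS, not MDS, by the paper's Propositions \ref{proposition of C_{r11}}, \ref{proposition of C_{rr1}} and \ref{proposition of C_{rr2}}), so any valid argument must exploit the specific multiplicities $(4,2,1)$ rather than a general mechanism. As it stands, your text is a correct proof plan with the hardest step --- the entire content of the paper's new contribution to this theorem --- left unexecuted.
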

%In order to prove Theorem \ref{theorem of 3p MDS}, we need the following propositions.
Based on previous work and the results of this paper, all known MDS symbol-pair codes with $n-k\le 10$ from repeated-root cyclic codes  ${\mathcal{C}_{r_{1}r_{2}r_{3}}}, $ which are listed in the following Table \ref{table-4}.
\begin{table}[h]\label{table4}
	\begin{center}
		\begin{minipage}{\textwidth}
			\caption{All MDS symbol-pair codes of length $3p$ for $d_{p}\le 12$}\label{table-4}
			\begin{center}
				\begin{tabular}{@{}lllll@{}}			
					\toprule
					
					$r_1$ & $r_2$ & $r_3$ & \makecell{$(n-k,d_{p})_{p}$} & Ref. \\
					\midrule	
					0 & 2 & 0 & \makecell{$(2,4)_{p}$} & Trivially($r_1= 0$)\\
					\midrule  
					2 & 1 & 0 & \makecell{$(3,5)_{p}$} & Reference \cite{Chen-2017-CIT} \\
					\midrule			
					3 & 1 & 0 & \makecell{$(4,6)_{p}$} & Reference \cite{Chen-2017-CIT} \\ 
					\midrule					
					3 & 1 & 1 & \makecell{$(5,7)_{p}$} & Reference \cite{Chen-2017-CIT}\\ 			
					\midrule
					3 & 2 & 1 & \makecell{$(6,8)_{p}$} &Reference \cite{Chen-2017-CIT}\\
					\midrule
					4 & 2 & 2 & \makecell{$(8,10)_{p}$}&Reference \cite{Ma-2022-DCC} \\ 
					\midrule
					5 & 3 & 2 & \makecell{$(10,12)_{p}$} &Reference \cite{Ma-2022-DCC} \\
					\midrule
					2 & 1 & 1 & \makecell{$(4,6)_{p}$} & Theorem \ref{theorem 3.1} \\
					\midrule
					2 & 2 & 0 & \makecell{$(4,6)_{p}$} & Theorem \ref{theorem 3.1} \\
					\midrule
					4 & 2 & 1 & \makecell{$(7,9)_{p}$} &Proposition \ref{Proposition C_{421}}\\
					\footnotetext{These MDS symbol-pair codes are constructed by repeated-root cyclic codes.}	
				\end{tabular}
			\end{center}
		\end{minipage}
	\end{center}
\end{table}

Similar to the previous MDS symbol-pair codes from repeated-root cyclic codes  ${\mathcal{C}_{r_{1}r_{2}r_{3}}}, $
for  AMDS symbol-pair codes with $n-k\le 10$, which are listed in the following Table \ref{table-5}.

\begin{table}[h]\label{table5}
	\begin{center}
		\begin{minipage}{\textwidth}
			\caption{All AMDS symbol-pair codes of length $3p$ for $d_{p}< 12$}\label{table-5}
			\begin{center}
				\begin{tabular}{@{}lllll@{}}			
					\toprule
					
					$r_1$ & $r_2$ & $r_3$ & \makecell{$(n-k,d_{p})_{p}$} & Ref. \\
					\midrule
					4 & 3 & 2 & \makecell{$(9,10)_{p}$} &Reference \cite{Ma-2022-DCC} \\
					\midrule 
					0 & 3 & 0 & \makecell{$(3,4)_{p}$} & Proposition \ref{proposition of C_{rrr}}\\
					\midrule
					2 & 2 & 1 & \makecell{$(5,6)_{p}$} & Proposition \ref{proposition of C_{rrr}}\\ 
					\midrule					
					3 & 2 & 0 & \makecell{$(5,6)_{p}$} &Proposition \ref{proposition of C_{rrr}}\\ 
					\midrule										
					3 & 2 & 2 & \makecell{$(7,8)_{p}$} &Proposition \ref{proposition of C_{3rr}} \\ 
					\midrule
					3 & 3 & 1 & \makecell{$(7,8)_{p}$} &Proposition \ref{proposition of C_{3rr}} \\ 
					\midrule
					4 & 1 & 0 & \makecell{$(5,6)_{p}$} & Proposition \ref{proposition of C_{rrr}} \\
					\midrule 
					4 & 1 & 1 & \makecell{$(6,7)_{p}$} &Proposition \ref{proposition of C_{r11}} \\
					\midrule
					4 & 3 & 1 & \makecell{$(8,9)_{p}$} &Proposition \ref{proposition of C_{rr1}}\\
					\midrule
					5 & 2 & 1 & \makecell{$(8,9)_{p}$} &Proposition \ref{proposition of C_{rr1}} \\
					\midrule					
					5 & 2 & 2 & \makecell{$(9,10)_{p}$}&Proposition  \ref{proposition of C_{rr2}} \\ 
					
					\footnotetext{These AMDS symbol-pair codes are constructed by repeated-root cyclic codes.}	
				\end{tabular}
			\end{center}
		\end{minipage}
	\end{center}
\end{table}

Researchers have given some proofs of the Theorem \ref{theorem of 3p MDS}, and the Theorem \ref{theorem 3.1} in the previous paper also includes some proofs. Here we only need to prove that ${\mathcal{C}_{421}} $ is an MDS symbol-pair code.

\begin{Proposition}\label{Proposition C_{421}}
${\mathcal{C}_{421}} $ is an MDS ${\left( {3p,\;9} \right)_p}$ symbol-pair code.	
\end{Proposition}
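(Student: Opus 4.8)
The goal is to prove $d_p(\mathcal{C}_{421}) = 9$. Since $\deg g_{421} = 4+2+1 = 7$, the code has redundancy $n-k = 7$, and Lemma~\ref{lemma 3.2 } gives $d_p \le (n-k)+2 = 9$; hence everything reduces to establishing $d_p \ge 9$. First I would pin down the Hamming distance with Lemma~\ref{lemma 3.1}. Taking $l=3$, $e=1$ and exponent vector $(e_1,e_2,e_3)=(4,2,1)$ on the factors $x-1,\,x-\omega,\,x-\omega^2$ of $x^3-1$, the reduced codes are $\overline{\mathcal{C}}_1=\langle (x-1)(x-\omega)\rangle$ (a $[3,1]$ code, $d_H=3$), $\overline{\mathcal{C}}_2=\overline{\mathcal{C}}_3=\langle x-1\rangle$ ($d_H=2$) and $\overline{\mathcal{C}}_t=\langle 1\rangle$ ($d_H=1$) for $t\ge 4$; with $P_t=\wt((x-1)^t)=t+1$ for $t<p$ the products $P_t\,d_H(\overline{\mathcal{C}}_t)$ for $t=1,2,3,4$ are $6,6,8,5$, whose minimum $P_4\cdot 1=5$ is attained at $t=4$. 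Thus $d_H(\mathcal{C}_{421})=5$.

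Next I would reduce the pair-distance problem to a single forbidden weight. Because $g_{421}(x)=(x-1)\,g_{321}(x)$, we have the inclusion $\mathcal{C}_{421}\subseteq\mathcal{C}_{321}$; by Lemma~\ref{lemma 3.6 }, $\mathcal{C}_{321}$ is an MDS $(3p,8)_p$ code, so $d_p(\mathcal{C}_{321})=8$, and a subcode inherits at least the minimum pair-distance of its supercode, giving $d_p(\mathcal{C}_{421})\ge 8$. It therefore remains only to show that $\mathcal{C}_{421}$ contains no nonzero codeword of pair-weight exactly $8$. Writing $\omega_p(c)=\wt(c)+b(c)$, where $b(c)$ is the number of cyclic runs of consecutive nonzero coordinates, and using $\wt(c)\ge d_H=5$, the equation $\omega_p(c)=8$ forces $(\wt,b)\in\{(7,1),(6,2),(5,3)\}$; the configuration $(4,4)$ is excluded by $d_H=5$.

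For each surviving configuration I would fix a cyclic shift placing the lowest nonzero coordinate at $0$ and write the putative codeword as $c(x)=\sum_i a_i x^{n_i}$ with all $a_i\in\mathbb{F}_p^{\ast}$ supported on the prescribed runs, then impose the seven conditions equivalent to $g_{421}\mid c$, namely $c^{(j)}(1)=0$ for $j=0,1,2,3$, together with $c(\omega)=c'(\omega)=0$ and $c(\omega^2)=0$. The case $(7,1)$ is immediate, since one run of length $7$ yields $\deg c=6<7=\deg g_{421}$, impossible for a nonzero codeword. For $(6,2)$ and $(5,3)$ I would run through the admissible run-length patterns ($\{5,1\},\{4,2\},\{3,3\}$ and $\{3,1,1\},\{2,2,1\}$) and show that each resulting homogeneous system in the at-most-six unknowns $a_i$ is inconsistent, forcing all $a_i=0$ or an impossible identity such as $1=0$ or $\omega=1$ (contradicting $\omega^3=1$, $\omega\ne1$), in the same spirit as Propositions~\ref{Proposition 3.2}--\ref{Proposition 3.4}.

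The heart of the argument, and the step I expect to be hardest, is this final case analysis. The order-$4$ vanishing at $x=1$ produces Vandermonde-type relations among the exponents reduced modulo $p$ (which typically force the inter-run gaps to be multiples of $p$), whereas the conditions at $\omega,\omega^2$ constrain the exponents modulo $3$; the delicate point is to reconcile these mod-$p$ and mod-$3$ requirements simultaneously for every run pattern and every admissible gap length, and to confirm that no nonzero solution survives. Once all three configurations are eliminated we obtain $d_p(\mathcal{C}_{421})\ge 9$, and combined with the Singleton bound this gives $d_p=9$, so $\mathcal{C}_{421}$ is an MDS $(3p,9)_p$ symbol-pair code.
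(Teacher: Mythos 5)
Your setup coincides step for step with the paper's own proof: the computation $d_H(\mathcal{C}_{421})=5$ from Lemma~\ref{lemma 3.1}, the bound $d_p\geqslant 8$ obtained because $\mathcal{C}_{421}$ is a subcode of the MDS $(3p,8)_p$ code of Lemma~\ref{lemma 3.6 }, the reduction to showing that no codeword has symbol-pair weight exactly $8$, the split into the configurations $(7,1)$, $(6,2)$, $(5,3)$ of Hamming weight and number of runs, the run patterns $\{5,1\},\{4,2\},\{3,3\}$ and $\{3,1,1\},\{2,2,1\}$, and the degree argument disposing of $(7,1)$. All of this is correct. But your proof stops precisely where the paper's proof begins in earnest: for the remaining five patterns you only assert that ``each resulting homogeneous system in the at-most-six unknowns $a_i$ is inconsistent.'' That assertion is the entire mathematical content of the proposition, and nothing in your proposal establishes it.

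This is a genuine gap, not a deferrable verification, for two reasons. First, the inconsistency cannot hold for soft or generic reasons: the codes $\mathcal{C}_{331}$ and $\mathcal{C}_{322}$ have the same redundancy $7$ and impose the same number of vanishing conditions (just distributed differently among $1,\omega,\omega^2$), yet by Proposition~\ref{proposition of C_{3rr}} they have $d_p=8$, i.e.\ for them one of these very weight-$8$ systems \emph{is} consistent. So the conclusion hinges on the particular multiplicities $(4,2,1)$, and only the explicit case analysis can detect this. Second, the analysis is delicate: the paper must subdivide each pattern by the residues of the gap exponents modulo $3$ (nine cells in Table~\ref{table-2}, six in Table~\ref{table-3}, three per weight-$6$ subcase), and the contradiction arises differently in different cells --- sometimes from $c(1)=c(\omega)=c(\omega^2)=0$ alone, but in the hardest cells only after invoking $c''(1)=0$ and $c'''(1)=0$; for instance, for the pattern $\{5,1\}$ with $t\equiv 2\pmod 3$ one must derive $a_4=a_3\omega$ and $a_5=3a_3\omega^2/(t-2)$, then $t=3+2\omega$ from $c''(1)=0$, and finally the contradiction $3\omega^2=0$ using $\omega^2=-1-\omega$. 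Your guiding heuristic --- that the order-$4$ vanishing at $x=1$ typically forces the inter-run gaps to be multiples of $p$ --- is not what happens in most of these cells, so even the road map you sketch would not carry the computation through; the case-by-case work is the proof, and it is missing.
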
 
\begin{proof}
Since  $\mathcal{C}_{421}= \left\langle {{{(x - 1)}^4}{{(x - \omega )}^2}{{(x - {\omega ^2})}}} \right\rangle ,$ for any codeword $c \in \mathcal{C}$, we have $$c\left( 1 \right) = c\left( \omega  \right) = c\left( {  \omega^2 } \right) = c'\left( 1 \right)= c'\left( \omega \right) = c''\left( 1 \right) =c'''\left( 1 \right) = 0.$$

By Lemma \ref{lemma 3.1}, $\mathcal{C}_{421}$ is a $\left[ {3p,\;3p - 7,\;5} \right]$ cyclic code over ${{\Bbb F}_p}$. Since $\mathcal{C}_{421}$ is a subcode of Lemma \ref{lemma 3.6 }, we have ${d_p} \geqslant 8$.

To prove that $\mathcal{C}_{421}$ is an MDS ${\left( {3p,\;9} \right)_p}$ symbol-pair code, it suffices to verify that there does not exist codeword in $\mathcal{C}_{421}$ with symbol-pair weight 8. Then we have three cases to discuss.
\begin{description}	
	\item[\textbf{Case I.}] If there are codewords with Hamming weight $\omega_H=5$ and symbol-pair weight $\omega_p=8$, then its certain
	cyclic shift must be one of the following forms
	$$ \left( { \star , \star ,{0_{s_1}}, \star ,\star,{0_{s_2}}, \star ,{0_{s_3}}} \right)$$ or
	$$ \left( { \star ,\star , \star ,{0_{s_1}}, \star ,{0_{s_2}}, \star ,{0_{s_3}}} \right),$$
	where each $ \star $ denotes an element in ${\Bbb F}_p^{\text{*}}$ and  ${0_{s_1}}$, ${0_{s_2}}$,${0_{s_3}}$ are all-zero vectors.

	$  {\bf{Subcase\; 1.1}}$ For the case of $$ \left( { \star , \star ,{0_{s_1}}, \star ,\star,{0_{s_2}}, \star ,{0_{s_3}}} \right),$$ without loss of generality, suppose that the first coordinate of $c\left( x \right)$ is 1. We denote that
	
	$$c\left( x \right) = 1 + {a_1}x + {a_2}{x^l} + {a_3}{x^{l + 1}} + {a_4}{x^t}.$$
	
	When $t \equiv 0({\bmod~3}) $ and $l \equiv 0({\bmod~3}) $, it follows from $c\left( 1 \right) = c\left( { \omega} \right) =c\left( { \omega^2} \right) = 0$ that
	\begin{equation*}  
		\left\{  
		\begin{array}{lr}  
			1+{a_1} + {a_2} + {a_3} + {a_4} = 0, &  \\  
			1+{a_1}\omega + {a_2} + {a_3}\omega + {a_4} = 0, &\\  
			1+{a_1}\omega^2 + {a_2} + {a_3}\omega^2 + {a_4}  = 0. &    
		\end{array}  
		\right.  
	\end{equation*}
	By solving the system, we have	 $ a_1=-a_3 $. However, $c'\left( 1 \right) =c'\left( \omega \right) =0$ induces that
	\begin{equation*}  
		\left\{  
		\begin{array}{lr}   
			{a_1} + t{a_2} +\left( t+1\right) {a_3} + l{a_4}  = 0, &\\
			{a_1} + t{a_2}\omega^2 +\left( t+1\right) {a_3} + l{a_4}\omega^2  = 0. &    
		\end{array}  
		\right.  
	\end{equation*}
	Together with $  a_1=-a_3 $, one can immediately get
	
	\begin{equation*}  
		\left\{  
		\begin{array}{lr}  
			t{a_2} + t {a_3} + l{a_4}  = 0, &\\
			t{a_2}\omega^2 +t {a_3} + l{a_4}\omega^2  = 0, &    
		\end{array}  
		\right.  
	\end{equation*}
	and we can get $t(1-\omega^2)a_3=0$, which is impossible, since $t \equiv 0\left( {\bmod~ 3} \right)$ and the code length is $3p$. 
	
	When $l \equiv i({\bmod~3}) $ and $t \equiv j({\bmod~3}),\, i,j=0,1,2 $,
	values in all $i$  and $j$ of ${\bf{Subcase\; 1.1}}$ are shown in the following Table \ref{table-2}.
\end{description}

\begin{table}[h]\label{table2}
	\begin{center}
		\begin{minipage}{\textwidth}
			\caption{Summary of  ${\bf{Subcase\; 1.1}}$}\label{table-2}
			\begin{center}
			\begin{tabular}{@{}lllll@{}}			
				\toprule
			
				$i$ & $j$ & Conditions &\makecell{Results} & Contradictory \\
				\midrule
				0 & 0 &$\left[\kern-0.15em\left[ 1 
				\right]\kern-0.15em\right],\left[\kern-0.15em\left[ 2 
				\right]\kern-0.15em\right]$ & \makecell{$t(1-\omega^2)a_3=0$} & $t\le  3p-2$ \\ 
				\midrule
				0 & 1 &$\left[\kern-0.15em\left[ 1 
				\right]\kern-0.15em\right],\left[\kern-0.15em\left[ 2 
				\right]\kern-0.15em\right],\left[\kern-0.15em\left[ 3 
				\right]\kern-0.15em\right]$& \makecell{$\omega^2-1=0 $  } & $\omega^3=1$ \\ 
				\midrule
				0 & 2 &$\left[\kern-0.15em\left[ 1 
				\right]\kern-0.15em\right],\left[\kern-0.15em\left[ 3 
				\right]\kern-0.15em\right]$ & \makecell{$ a_1+a_3+a_4=0,$\\$a_1+a_3-a_4=0 $ }& $a_4\in {\Bbb F}_p^{\text{*}}$\\
				\midrule
				1 & 0 &$\left[\kern-0.15em\left[ 1 
				\right]\kern-0.15em\right],\left[\kern-0.15em\left[ 3 
				\right]\kern-0.15em\right]$ & \makecell{$ a_1+a_2+a_3=0,$\\$a_1+a_2-a_3=0 $ }& $a_3\in {\Bbb F}_p^{\text{*}}$\\
				\midrule
				1 & 1 & $\left[\kern-0.15em\left[ 1 
				\right]\kern-0.15em\right],\left[\kern-0.15em\left[ 3 
				\right]\kern-0.15em\right]$ & \makecell{$ 3=0 $ }& $p\ne 3$\\ 
				\midrule
				1 & 2 & $\left[\kern-0.15em\left[ 1 
				\right]\kern-0.15em\right],\left[\kern-0.15em\left[ 3 
				\right]\kern-0.15em\right]$ & \makecell{$ 3=0 $ }& $p\ne 3$\\ 
				\midrule
				2 & 0 & $\left[\kern-0.15em\left[ 1 
				\right]\kern-0.15em\right],\left[\kern-0.15em\left[ 3 
				\right]\kern-0.15em\right]$ & \makecell{$ a_1-a_2=0 $,\\$a_1+a_2=0$ }& $a_1,a_2 \in {\Bbb F}_p^{\text{*}}$.\\ 
				\midrule
				2 & 1 & $\left[\kern-0.15em\left[ 1 
				\right]\kern-0.15em\right],\left[\kern-0.15em\left[ 3 
				\right]\kern-0.15em\right]$ & \makecell{$a_1+a_2+a_4=0,$\\$a_1-a_2+a_4=0 $ }& $a_2 \in {\Bbb F}_p^{\text{*}}$\\ 
				\midrule
				2 & 2 &$\left[\kern-0.15em\left[ 1 
				\right]\kern-0.15em\right],\left[\kern-0.15em\left[ 3 
				\right]\kern-0.15em\right]$ &  \makecell{$a_1+a_2+a_4=0,$\\$a_1-a_2-a_4=0 $} &  $a_1 \in {\Bbb F}_p^{\text{*}}$ \\ 

			\end{tabular}
			\footnotetext{ Conditions $\left[\kern-0.15em\left[ 1 \right]\kern-0.15em\right]$, $\left[\kern-0.15em\left[ 2 \right]\kern-0.15em\right]$  and $\left[\kern-0.15em\left[ 3 \right]\kern-0.15em\right]$ represent $c\left( 1 \right) = c\left( { \omega} \right) =c\left( { \omega^2} \right) = 0$, $ c'\left( 1 \right)= c'\left( \omega \right) =0$ and  $c\left( 1 \right) + c\left( { \omega} \right) +c\left( { \omega^2} \right) = 0$, respectively.}

	\end{center}
		\end{minipage}
	\end{center}
\end{table}

\begin{description}
	\item 
	$  {\bf{Subcase\; 1.2}}$ For the subcase of $$ \left( { \star , \star , \star ,{0_{s_1}},\;\star,{0_{s_2}}, \star ,{0_{s_3}}} \right),$$ without loss of generality, suppose that the first coordinate of $c\left( x \right)$ is 1. We denote that
	
	$$c\left( x \right) = 1 + {a_1}x + {a_2}{x^2} + {a_3}{x^l} + {a_4}{x^t}.$$
	
Suppose that $l \equiv i({\bmod~3}) $ and $t \equiv j({\bmod~3}),\, i,j=0,1,2 $, since $l$ and $t$ positionally equivalent, we need to discuss six situations here. Similar to ${\bf{Subcase\; 1.1}}$, we summarize all $i$ and $j$ of ${\bf{Subcase\; 1.2}}$ in the following Table \ref{table-3}.
\end{description}

\begin{table}[h]\label{table3}
	\begin{center}
				\begin{minipage}{\textwidth}
			\caption{Summary of ${\bf{Subcase\; 1.2}}$}\label{table-3}
			\begin{center}
		
			\begin{tabular}{@{}lllll@{}}
					\toprule				
				$i$ & $j$ & Conditions  & \makecell{Results}& Contradictory \\
				\midrule
				0 & 0 &$\left[\kern-0.15em\left[ 1 
				\right]\kern-0.15em\right],\left[\kern-0.15em\left[ 2 
				\right]\kern-0.15em\right]$ & \makecell{$ a_1-a_2=0 $,\\$a_1+a_2=0$ }& $a_1,a_2 \in {\Bbb F}_p^{\text{*}}$\\  
				\midrule
				0 & 1 & $\left[\kern-0.15em\left[ 1 \right]\kern-0.15em\right],\left[\kern-0.15em\left[ 2 \right]\kern-0.15em\right]$ & \makecell{$ a_1-a_2+a_4=0 $,\\$a_1+a_2+a_4=0$ } & $a_2 \in {\Bbb F}_p^{\text{*}}$\\
				\midrule  
				0 & 2 & $\left[\kern-0.15em\left[ 1 \right]\kern-0.15em\right],\left[\kern-0.15em\left[ 2 \right]\kern-0.15em\right]$ & \makecell{$ a_1+a_2+a_4=0 $,\\$a_1-a_2-a_4=0$ } & $a_1 \in {\Bbb F}_p^{\text{*}}$\\
				\midrule
				1 & 1 & $\left[\kern-0.15em\left[ 1 \right]\kern-0.15em\right],\left[\kern-0.15em\left[ 2 \right]\kern-0.15em\right]$ & $\makecell{3=0}$ & $p\ne 3$ \\
				\midrule 
				1 & 2 & $\left[\kern-0.15em\left[1\right]\kern-0.15em\right],\left[\kern-0.15em\left[ 2 \right]\kern-0.15em\right]$ & $\makecell{3=0}$ & $p\ne 3$ \\
				\midrule
				2 & 2 & $\left[\kern-0.15em\left[1\right]\kern-0.15em\right],\left[\kern-0.15em\left[ 2 \right]\kern-0.15em\right]$ &\makecell{$3=0$} & $p\ne 3 $\\
			
		%	\botrule
			\end{tabular}

			\footnotetext{ Conditions $\left[\kern-0.15em\left[ 1 
					\right]\kern-0.15em\right]$ and $\left[\kern-0.15em\left[ 2 \right]\kern-0.15em\right]$  represent $c\left( 1 \right) = c\left( { \omega} \right) =c\left( { \omega^2} \right) = 0$ and $c\left( 1 \right) + c\left( { \omega} \right) +c\left( { \omega^2} \right) = 0$,  respectively.}
			\end{center}
				\end{minipage}
			
	\end{center}

\end{table}

\begin{description}

	\item[\textbf{Case II.}] If there are codewords with Hamming weight $\omega_H=6$ and symbol-pair weight $\omega_p=8$, then its certain
	cyclic shift must be one of the following forms
	$$ \left(  \star , \star ,\star ,\star ,\star ,{0_{s_1}}, \star,{0_{s_2}} \right),\;\;$$ 
	$$ \left(  \star ,\star ,\star , \star ,{0_{s_1}}, \star ,\star ,{0_{s_2}}\right)\;\;\;$$ or 
	$$ \left(  \star ,\star ,\star , {0_{s_1}},\star ,\star , \star ,{0_{s_2}}\right),$$
	where each $ \star $ denotes an element in ${\Bbb F}_p^{\text{*}}$ and  ${0_{s_1}}$, ${0_{s_2}}$, ${0_{s_3}}$ are all-zero vectors.
	
	${\bf{Subcase\; 2.1}}$ For the subcase of $$ \left( { \star , \star , \star,\star , \star ,{0_{s_1}},\star,{0_{s_2}}} \right),$$ without loss of generality, suppose that the first coordinate of $c\left( x \right)$ is 1. We denote 
	
	$$c\left( x \right) = 1 + {a_1}x + {a_2}{x^2} + {a_3}{x^{ 3} }+{a_4}{x^4} + {a_5}{x^{ t} } .$$
	\begin{enumerate}

		\item When $t \equiv 0\left( {\bmod~3} \right)$, it be derived from $c\left( 1 \right) = c\left( { \omega} \right) =c\left( { \omega^2} \right) = 0$ and $ c\left( 1 \right) + c\left( { \omega} \right) +c\left( { \omega^2} \right) = 0$ that
		\begin{equation*}  
			\left\{  
			\begin{array}{lr}  
				{a_1}+{a_2}+{a_4}=0, &  \\  
				{a_1}-{a_2}+{a_4}=0, &
			\end{array}  
			\right.  
		\end{equation*}
		then we have $2a_2=0 $, which is impossible.				
		
		\item When $t \equiv 1\left( {\bmod~3} \right)$, similar to $t \equiv 0\left( {\bmod~3} \right)$, $2a_2=0 $ can be obtained
		from	$c\left( 1 \right) = c\left( { \omega} \right) =c\left( { \omega^2} \right) = 0$ and $ c\left( 1 \right) + c\left( { \omega} \right) +c\left( { \omega^2} \right) = 0$.	
		
		\item  When $t \equiv 2\left( {\bmod~3} \right)$, it follows from $c\left( 1 \right) = c\left( { \omega} \right) =c\left( { \omega^2} \right) = 0$ that
		\begin{equation*}  
			\left\{  
			\begin{array}{lr}  
				1+{a_1}+{a_2}+{a_3}+{a_4}+{a_5}=0, &  \\  
				1+{a_1}\omega+{a_2}\omega^2+{a_3}+{a_4}\omega+{a_5}\omega^2=0, &\\  
				1+{a_1}\omega^2+{a_2}\omega+{a_3}+{a_4}\omega^2+ {a_5}\omega=0. &    
			\end{array}  
			\right.  
		\end{equation*}
		By solving the system, we have $1+a_3=0,a_1+a_4=0,a_2+a_5=0$. Then $c'\left( 1 \right)= c'\left( \omega \right) =0$  indicates
		\begin{equation*}  
			\left\{  
			\begin{array}{lr}  
				{a_1} + 2{a_2} +3 {a_3} + 4{a_4}+ t{a_5}  = 0, &\\
				{a_1} + 2{a_2}\omega +3 {a_3}\omega^2 + 4{a_4}+ t{a_5} \omega^2 = 0, &    
			\end{array}  
			\right.  
		\end{equation*}
		which means that ${a_4}={a_3}\omega $ and ${a_5} = {{3{a_3}\omega^2 } \over {t - 2}}$ (since $t \equiv 2\left( {\bmod~3} \right)$, then $p$ is not a divisor of $t-2$, otherwise $t-2 \ge 3p$).
		
		By $ c''\left( 1 \right) =0$, we have $t = 3+2\omega$. Together with  ${a_3}\omega = {a_4}$, ${a_5} = {{3{a_3}\omega^2 } \over {t - 2}}$ and $c'''\left( 1 \right) = 0$, then one can derive that
		$$ 6+24\omega+3t(t-1)\omega^2=0,$$ we have $$2+8\omega+(3+2\omega)(2+2\omega)\omega^2=0,$$ combining with $\omega^2=-1-\omega$, we can obtain $3\omega^2=0$, which is impossible.
		
	\end{enumerate}

	${\bf{Subcase\; 2.2}}$ For the subcase of $$ \left( { \star , \star , \star,\star ,{0_{s_1}}, \star ,\star,{0_{s_2}}} \right),$$ without loss of generality, suppose that the first coordinate of $c\left( x \right)$ is 1. We denote that
	
	$$c\left( x \right) = 1 + {a_1}x + {a_2}{x^2} + {a_3}{x^{ 3} }+{a_4}{x^t} + {a_5}{x^{ t+1} } .$$
	\begin{enumerate}

		\item  When $t \equiv 0\left( {\bmod~3} \right)$, similar to $t \equiv 0\left( {\bmod~3} \right)$ in ${\bf{Subcase\; 2.1}}$,	$2a_2=0 $ can be derived from
		$c\left( 1 \right) = c\left( { \omega} \right) =c\left( { \omega^2} \right) = 0$.	
		
		\item  When $t \equiv 1\left( {\bmod~3} \right)$, 
		similar to $t \equiv 2\left( {\bmod~3} \right)$ in ${\bf{Subcase\; 2.1}}$, $1+a_3=0,a_1+a_4=0,$ and $a_2+a_5=0$ can be obtained from $c\left( 1 \right) = c\left( { \omega} \right) =c\left( { \omega^2} \right) = 0$.
		
		Then $ {a_5}={a_4}\omega$ and ${a_5} = {{3{a_3}\omega^2 } \over {t - 1}}$ can be derived from $c'\left( 1 \right)= c'\left( \omega \right) =0$. 
		$c''\left( 1 \right)=0$ means $t=-\omega$.
		
		Finally, combined with $c'''\left( 1 \right)=0$, we have $\omega^2-\omega = 0$, a contradiction.			
		
		\item When $t \equiv 2\left( {\bmod~3} \right)$, similar to $t \equiv 0\left( {\bmod~3} \right)$ in ${\bf{Subcase\; 2.1}}$,	$2a_2=0 $ can be derived from
		$c\left( 1 \right) = c\left( { \omega} \right) =c\left( { \omega^2} \right) = 0$.	
	\end{enumerate}
	${\bf{Subcase\; 2.3}}$ For the subcase of $$ \left( { \star , \star , \star,{0_{s_1}},\star , \star ,\star,{0_{s_2}}} \right),$$ without loss of generality, suppose that the first coordinate of $c\left( x \right)$ is 1. We denote that
	
	$$c\left( x \right) = 1 + {a_1}x + {a_2}{x^2} + {a_3}{x^{ t} }+{a_4}{x^{t+1}} + {a_5}{x^{ t+2} } .$$
	\begin{enumerate}

	\item 	When $t \equiv 0\left( {\bmod~3} \right)$, it follows from $c\left( 1 \right) = c\left( { \omega} \right) =c\left( { \omega^2} \right) = 0$ that
	\begin{equation*}  
		\left\{  
		\begin{array}{lr}  
			1+{a_1}+{a_2}+{a_3}+{a_4}+{a_5}=0, &  \\  
			1+{a_1}\omega+{a_2}\omega^2+{a_3}+{a_4}\omega+{a_5}\omega^2=0, &\\  
			1+{a_1}\omega^2+{a_2}\omega+{a_3}+{a_4}\omega^2+ {a_5}\omega=0. &    
		\end{array}  
		\right.  
	\end{equation*}
	By solving the system, we have  $1+a_3=0,a_1+a_4=0,a_2+a_5=0$.	
	Then $c'\left( 1 \right)= c'\left( \omega \right) =0$  indicates
	\begin{equation*}  
		\left\{  
		\begin{array}{lr}  
			{a_1} + 2{a_2} +t {a_3} + \left( t+1\right) {a_4}+ \left( t+2\right) {a_5}  = 0, &\\
			{a_1} + 2{a_2}\omega +t {a_3}\omega^2 + \left( t+1\right) {a_4}+ \left( t+2\right){a_5} \omega = 0, &    
		\end{array}  
		\right.  
	\end{equation*}
	which means that $  {a_5} ={a_4}\omega=a_3\omega^2$. Then $c''\left( 1 \right)= 0$ implies that
	$$2{a_2} +t\left( t-1\right)  {a_3} +t \left( t+1\right) {a_4}+ \left( t+1\right) \left( t+2\right) {a_5}  = 0,$$
	which implies $t(3\omega^2+\omega-1=0) $. Since  $t \equiv 0\left( {\bmod 3} \right), \omega=-\omega^2-1$ and the code length $3p$, we have $2(\omega^2-1)=0$, a contradiction.				
	
	\item 	When $t \equiv 1\left( {\bmod~3} \right)$, with arguments similar to the previous $t \equiv 0\left( {\bmod~3} \right)$, by $c\left( 1 \right) = c\left( { \omega} \right) =c\left( { \omega^2} \right) = 0$ and $c'\left( 1 \right)= c'\left( \omega \right) =0$, we have ${a_4} = {a_3}\omega= {{{a_5}\omega^2{\left( t +2\right) } } \over {t-1}}$. Together with $ c''\left( 1 \right)  = 0$, $\omega^2-\omega =0$ can be derived, which is impossible.

	\item 	When $t \equiv 2\left( {\bmod~3} \right)$, with arguments similar to the previous $t \equiv 0\left( {\bmod~3} \right)$, we can derive ${a_5}={a_4}\omega =  {{{a_3}\omega^2{\left( t - 2\right) } } \over {t+1}}$ from  $c'\left( 1 \right)= c'\left( \omega \right) =0$ and $c\left( 1 \right) = c\left( { \omega} \right) =c\left( { \omega^2} \right) = 0$. Together with $ c''\left( 1 \right)  = 0$, we have $t =p+1$. It follows from $c'\left( 1 \right)= c'\left( \omega \right) =0$ that
	 	\begin{equation*}  
	 	\left\{  
	 	\begin{array}{lr}  
	 		(t-1) {a_3} + \left( t+1\right) {a_4}+ \left( t+1\right) {a_5}  = 0, &\\
	 		(t-1) {a_3}\omega + \left( t+1\right) {a_4}\omega^2+ \left( t+1\right){a_5}  = 0. &    
	 	\end{array}  
	 	\right.  
	 \end{equation*}
	 Combining $t =p+1$,  we have $\omega^2=1$, which contradicts that $\omega$ is primitive 3-th root of unity in ${\Bbb F}_{p}$.
		\end{enumerate}
%	\noindent We summarize all $t$ of ${\bf{Subcase\; 2.3}}$ in the following TABLE IV.
	
\end{description}

\begin{description}
	\item[\textbf{Case III.}] If there are codewords in $\mathcal{C}$ with Hamming weight 7 and symbol-pair weight 8, then its certain cyclic shift must have the form $$ \left( { \star , \star , \star , \star ,\star , \star , \star ,{0_s}} \right),$$ where each $ \star $ denotes an element in  ${\Bbb F}_p^{\text{*}}$ and ${0_s}$ is all-zero vector. Without loss of generality, suppose that the first coordinate of $c\left( x \right)$ is 1. We denote
	
	$$c\left( x \right) = 1 + {a_1}x + {a_2}{x^2} + {a_3}{x^3} + {a_4}{x^4} + {a_5}{x^5}+ {a_6}{x^6}.$$
	
	This leads to $\deg\left( {c\left( x \right)} \right) = 6 < 7 = \deg\left( {g\left( x \right)} \right)$.

\end{description}	

As a result,  $\mathcal{C}_{421}$ is an MDS ${\left( {3p,{\text{\;}}9} \right)_p}$ symbol-pair code.	
\end{proof}
In what follows, let's determine the minimum symbol-pair distance for $\mathcal{C}_{r_{1}r_{2}r_{3}}$ in the previous paper by using the following propositions.

\begin{Proposition}\label{proposition of C_{rrr}}
	\item 
\begin{enumerate}
\item The minmum distance of $\mathcal{C}_{0r_{2}0}$ is $d_{p}=4$, when $r_{2} \ge 2$.
\item The minmum distance of $\mathcal{C}_{210}$ is $d_{p}=5$.
\item The minmum distance of $\mathcal{C}_{r_{1}r_{2}0}$ is $d_{p}=6$, when   $r_{1}+r_{2} \ge 4$ and $r_{1},r_{2} \in {\Bbb F}_p^{\text{*}}$.
\item The minmum distance of $\mathcal{C}_2{r_{2}r_{3}}$ is $d_{p}=6$, when  $2\le r_{2}+r_{3}\le 4$.
\end{enumerate}	
\end{Proposition}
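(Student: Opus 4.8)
The plan is to determine each $d_p$ by trapping it between a lower and an upper bound that are forced to agree. Throughout I would read off the Hamming distance of the repeated-root code from Lemma \ref{lemma 3.1}, interpreting the residue codes $\overline{\mathcal{C}}_t$ as the length-$3$ cyclic codes generated by the product of those linear factors $x-1,x-\omega,x-\omega^2$ whose multiplicity exceeds $t$, with $P_t=\wt((x-1)^t)=t+1$ for $0\le t\le p-1$. Lower bounds on $d_p$ would then come from Lemma \ref{lemma 3.3 }, which promotes $d_p\ge d_H+1$ to $d_p\ge d_H+2$ as soon as the code fails to be Hamming-MDS, and from the monotonicity principle that $\mathcal{C}_{r_{1}r_{2}r_{3}}\subseteq\mathcal{C}_{r_{1}'r_{2}'r_{3}'}$ whenever $r_{i}'\le r_{i}$, so that $d_p(\mathcal{C}_{r_{1}r_{2}r_{3}})\ge d_p(\mathcal{C}_{r_{1}'r_{2}'r_{3}'})$. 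Upper bounds would come either from the Singleton bound of Lemma \ref{lemma 3.2 } or from exhibiting one sparse codeword of the prescribed symbol-pair weight, built from the characteristic-$p$ identities $x^p-\alpha=(x-\alpha)^p$ for $\alpha\in\{1,\omega,\omega^2\}$ and $x^3-1=(x-1)(x-\omega)(x-\omega^2)$.

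For (1) and (2) I would follow the Hamming route. In (1), Lemma \ref{lemma 3.1} gives $d_H=2$, the minimum being attained at $t=0$ where $\overline{\mathcal{C}}_0=\langle x-\omega\rangle$ has length-$3$ distance $2$; concretely this is realized by $x^p-\omega=(x-\omega)^p\in\mathcal{C}_{0r_{2}0}$. Since $r_2\ge 2$ makes $2<r_2+1$, the code is not Hamming-MDS, so Lemma \ref{lemma 3.3 } yields $d_p\ge 4$; and as the two nonzero coordinates of $x^p-\omega$ lie at $0$ and $p$, hence are isolated in length $3p$, this word has $\omega_p=4$, forcing $d_p=4$. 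In (2), the same lemma gives $d_H=3$ for $g=(x-1)^2(x-\omega)$ (minimum at $t=0$, where $\overline{\mathcal{C}}_0=\langle(x-1)(x-\omega)\rangle$ has distance $3$); the code is not Hamming-MDS since $3<n-k+1=4$, so $d_p\ge 5$ by Lemma \ref{lemma 3.3 }, while Lemma \ref{lemma 3.2 } caps $d_p\le n-k+2=5$, whence $d_p=5$.

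For (3) and (4) I would obtain $d_p\ge 6$ by monotonicity and $d_p\le 6$ from a single explicit codeword. In (3), the hypotheses $r_1\ge r_2\ge 1$ and $r_1+r_2\ge 4$ force $\mathcal{C}_{r_{1}r_{2}0}$ to be a subcode of $\mathcal{C}_{310}$ (when $r_2=1$, so $r_1\ge 3$) or of $\mathcal{C}_{220}$ (when $r_2\ge 2$), both of which are the MDS $(4,6)_p$ codes recorded in Table \ref{table-4}; hence $d_p\ge 6$. The matching codeword is $(x-1)^p(x-\omega)^p=x^{2p}+\omega^2x^p+\omega$, which belongs to $\mathcal{C}_{r_{1}r_{2}0}$ for all $r_1,r_2\le p$ and has three isolated nonzero coordinates at $0,p,2p$, so $\omega_p=6$. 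In (4), every $\mathcal{C}_{2r_{2}r_{3}}$ with $2\le r_2+r_3\le 4$ is a subcode of $\mathcal{C}_{220}$ (if $r_3=0$, forcing $r_2\ge 2$) or of $\mathcal{C}_{211}$ (if $r_3\ge 1$), again MDS $(4,6)_p$, so $d_p\ge 6$; the universal witness is $(x^3-1)^2=(x-1)^2(x-\omega)^2(x-\omega^2)^2$, which lies in $\mathcal{C}_{222}\subseteq\mathcal{C}_{2r_{2}r_{3}}$ and has nonzero coordinates only at $0,3,6$, giving $\omega_p=6$.

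The step I expect to be the main obstacle is the lower bound $d_p\ge 6$ in (3) and (4), since Lemma \ref{lemma 3.3 } by itself stops at $d_p\ge d_H+2=5$; the real content is the exclusion of symbol-pair weight $5$. The monotonicity reduction above outsources this to the base codes, but if one wants it directly then a weight-$5$ codeword must have Hamming weight $4$ with four consecutive nonzeros, or Hamming weight $3$ in the pattern $(\star,\star,0_{s_{1}},\star,0_{s_{2}})$. The first is impossible because its degree-$<n$ representative has degree $3<\deg g$; the second, written $c(x)=1+a_1x+a_2x^t$, would be killed exactly as in Propositions \ref{Proposition 3.2}--\ref{Proposition 3.4}, by feeding the available vanishing conditions among $c(1)=c'(1)=c(\omega)=c(\omega^2)=0$ (together with a second derivative when a multiplicity equals $3$) into the resulting linear system and deducing $1=0$ or $a_1=0$. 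Verifying that these eliminations close off every residue of $t\bmod 3$ for the base codes $\mathcal{C}_{310},\mathcal{C}_{220},\mathcal{C}_{211}$ is the one place that demands genuine care.
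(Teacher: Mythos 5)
Your argument is correct and in fact more complete than the paper's own proof, which shares your skeleton (Lemma \ref{lemma 3.1} for $d_H$, Lemma \ref{lemma 3.3 } plus subcode monotonicity for lower bounds on $d_p$) but is far terser. The concrete differences: for the upper bounds in (1), (3), (4) the paper writes nothing, implicitly invoking the inequality $d_p\le 2d_H$ (true, but never stated anywhere in the paper), whereas you exhibit explicit witnesses $(x-\omega)^p$, $(x-1)^p(x-\omega)^p$ and $(x^3-1)^2$, each with isolated nonzero coordinates; for (2) the paper cites Corollary \ref{theorem 3.6}, whereas you cap $d_p\le n-k+2=5$ with the Singleton bound of Lemma \ref{lemma 3.2 }, which is cleaner and self-contained; and the paper's written proof never addresses case (4) at all, which your reduction to $\mathcal{C}_{220}$ or $\mathcal{C}_{211}$ together with the witness $(x^3-1)^2\in\mathcal{C}_{222}\subseteq\mathcal{C}_{2r_2r_3}$ settles. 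What your route buys is rigor and completeness; what the paper's buys is brevity. One caveat you inherit from the paper rather than create: the lower bound in (4) for $r_3\ge 1$ rests on Table \ref{table-4}'s assertion that $\mathcal{C}_{211}=\langle(x-1)^2(x-\omega)(x-\omega^2)\rangle$ of length $3p$ is MDS via Theorem \ref{theorem 3.1}, yet Proposition \ref{Proposition 3.2} with $l=3$ formally produces a code of length $6p$ (see the footnotes to Table \ref{table-1}); the MDS property does hold at length $3p$, but it needs exactly the short direct elimination over the residues of $t$ modulo $3$ that you sketch in your final paragraph, so that check should be promoted from a remark to part of the proof.
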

\begin{proof}

 For $\mathcal{C}_{0r_{2}0}=\left\langle {{(x - \omega )}^{r_{2}}} \right\rangle$, Lemma \ref{lemma 3.1} shows $d_{H}(\mathcal{C}_{0r_{2}0})=2$, then one can obtain $d_{p}(\mathcal{C}_{0r_{2}0})=4$ by Lemma \ref{lemma 3.3 }.

Theorem \ref{theorem 3.6} proves the minmum distance of $\mathcal{C}_{210}$ is $d_{p}=5$.
	
Since $\mathcal{C}_{r_{1}r_{2}0}$ is a subcode of Theorem \ref{theorem 3.6} and Lemma \ref{lemma 3.1} means $d_{H}(\mathcal{C}_{r_{1}r_{2}0})=3$. Then we have $d_{p}(\mathcal{C}_{r_{1}r_{2}0})=6$.
\end{proof}
Through Proposition \ref{proposition of C_{rrr}}, we can obtain all MDS and AMDS symbol-pair codes of length $3p$ with minimum symbol-pair distance of 4 to 6. Next, we look for MDS and AMDS symbol-pair codes with the minimum symbol pair distance $d_{p}=7$.
\begin{Proposition}\label{proposition of C_{r11}}
The minmum distance of $\mathcal{C}_{r_{1}11}$ is $d_{p}=7$, when   $3\le r_{1}\le p-1$.
\end{Proposition}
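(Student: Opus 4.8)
The plan is to sandwich $d_p(\mathcal{C}_{r_1 1 1})$ between $7$ and $7$, after first nailing down the Hamming distance. I would compute $d_H$ via Lemma~\ref{lemma 3.1}, applied with $l=3$ and $p^{e}=p$ to the three linear factors of $x^{3}-1$ carrying exponents $(r_1,1,1)$. The term $t=0$ gives the zero code and is discarded; for $t=1$ we have $\overline{\mathcal{C}}_1=\langle x-1\rangle$ with $d_H=2$ and $P_1=2$, contributing $4$; for $2\le t\le r_1-1$ we again get $\overline{\mathcal{C}}_t=\langle x-1\rangle$ and $2(t+1)\ge 6$; and for $r_1\le t\le p-1$ we get the full length-$3$ code and $P_t=t+1\ge r_1+1\ge 4$. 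Hence $d_H(\mathcal{C}_{r_1 1 1})=4$. Since $4<r_1+3=n-k+1$ the code is not Hamming-MDS, so Lemma~\ref{lemma 3.3 } gives $d_p\ge d_H+2=6$.

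Next I would push this to $d_p\ge 7$ using $\omega_p(c)=\omega_H(c)+b(c)$, where $b(c)$ is the number of cyclic runs of nonzero coordinates (valid for $0<\omega_H(c)<n$). A codeword with $\omega_p\le 6$ then has $\omega_H\le 5$ and falls into exactly three configurations: (a) $\omega_H=4$ in one run, (b) $\omega_H=5$ in one run, and (c) $\omega_H=4$ in two runs. Configurations (a) and (b) are immediate: a cyclic shift turns such a codeword into a polynomial of degree at most $4<5\le r_1+2=\deg g$, impossible for a nonzero codeword. The substance is (c): after a shift the codeword is $c(x)=1+a_1x+a_2x^{t}+a_3x^{t+1}$ (run-lengths $(2,2)$) or has the $(1,3)$/$(3,1)$ shape $c(x)=1+a_1x^{t}+a_2x^{t+1}+a_3x^{t+2}$, with $t\ge 3$ and the second run separated from the first. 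Imposing the five conditions $c(1)=c'(1)=c''(1)=0$ (available since $r_1\ge 3$) and $c(\omega)=c(\omega^2)=0$, I would split on $t\bmod 3$ and solve the resulting linear systems; every branch collapses to $3=0$ or to a divisibility such as $p\mid t$ that is incompatible with the fixed value of $t\bmod 3$ and the range $0<t<3p$. This eliminates (c) and yields $d_p\ge 7$. I expect this two-run elimination to be the main obstacle, since it is the only configuration not killed by a one-line degree count; note it only uses $r_1\ge 3$, so it is uniform over the whole range.

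A cleaner route, which I would at least record, is to observe that Lucas's theorem gives $\binom{s}{j}\equiv\binom{s\bmod p}{j}\pmod p$ for $0\le j<p$; consequently, once $r_1$ is at least the number of distinct residues mod $p$ in the support, $(x-1)^{r_1}\mid c$ is equivalent to the coefficients of $c$ summing to zero within each residue class mod $p$. For a weight-$4$ support this forbids singleton residue classes, and a brief check shows the only two-run supports surviving are $\{0,1,p,p+1\}$ and $\{0,1,2p,2p+1\}$ (up to a shift); both are then killed because $c(\omega)=c(\omega^2)=0$ becomes inconsistent (it forces $3=0$, respectively $\omega=\omega^2$). This streamlining is transparent for $r_1\ge 4$, the remaining value $r_1=3$ being the known MDS code $\mathcal{C}_{311}$ recorded in Table~\ref{table-4}, whose distance is already $7$.

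Finally, for the reverse bound $d_p\le 7$ I would exhibit the explicit codeword $c(x)=1+x-x^{p}-x^{2p+1}$. Using $\omega^{p}=\omega$ and $\omega^{2p+1}=1$ one checks $c(\omega)=1+\omega-\omega-1=0$ and likewise $c(\omega^2)=0$; and since $x^{p}=(x-1)^{p}+1$ gives $c(x)=-(x-1)^{p}-x(x-1)^{2p}-2x(x-1)^{p}$, the codeword is divisible by $(x-1)^{p}$, hence by $(x-1)^{r_1}$ for every $r_1\le p-1$, so $c\in\mathcal{C}_{r_1 1 1}$. Its support $\{0,1,p,2p+1\}$ forms three cyclic runs, whence $\omega_p(c)=4+3=7$. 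Combined with $d_p\ge 7$, this gives $d_p(\mathcal{C}_{r_1 1 1})=7$ for all $3\le r_1\le p-1$.
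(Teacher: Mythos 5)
Your proof is correct, but it takes a genuinely different route from the paper's for the lower bound. The paper obtains $d_p(\mathcal{C}_{r_1 11})\geq 7$ in one line: for $r_1\geq 3$, $\mathcal{C}_{r_1 11}$ is a subcode of $\mathcal{C}_{311}$, which is already known to be an MDS $(3p,7)_p$ symbol-pair code by \cite{Chen-2017-CIT}, and pair distance can only increase on subcodes. You instead rebuild the bound from scratch: $d_H=4$ via Lemma \ref{lemma 3.1}, then $d_p\geq 6$ via Lemma \ref{lemma 3.3 }, then elimination of the $(\omega_H,\mathrm{runs})=(5,1)$ and $(4,2)$ configurations. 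I checked your asserted case collapses: with the five conditions $c(1)=c'(1)=c''(1)=c(\omega)=c(\omega^2)=0$, every branch of the two-run analysis does end in $3=0$, $2=0$, some $a_i=0$, or a divisibility incompatible with $0<t<3p$, and your Lucas-type shortcut for $r_1\geq 4$ (class sums vanish, so the only candidate supports are $\{0,1,p,p+1\}$ and $\{0,1,2p,2p+1\}$, both killed by $c(\omega)=c(\omega^2)=0$) is also valid. What your route buys is an argument uniform in $r_1$ that does not lean on the subcode trick; what the paper's buys is brevity --- and note your ``cleaner route'' still cites \cite{Chen-2017-CIT} for $r_1=3$, so it is no more self-contained than the paper at that point.

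On the upper bound you are actually ahead of the paper: the paper's witness is wrong and yours is the corrected version. The paper exhibits $c(x)=1-x-x^{p}+x^{2p+1}$, but this factors as $-(x-1)^{p}\left(1-x-x^{p+1}\right)$, and since $p\equiv 1 \ (\bmod\ 3)$ gives $\omega^{p+1}=\omega^{2}$, the second factor evaluates at $\omega$ to $1-\omega-\omega^{2}=2\neq 0$; hence the paper's polynomial is not divisible by $x-\omega$ and is not a codeword of $\mathcal{C}_{r_1 11}$. Your polynomial $c(x)=1+x-x^{p}-x^{2p+1}=-(x-1)^{p}\left(x^{p+1}+x+1\right)$ is divisible by $(x-1)^{p}(x-\omega)(x-\omega^{2})$ because $\omega^{p+1}+\omega+1=\omega^{2}+\omega+1=0$, so it lies in $\mathcal{C}_{r_1 11}$ for every $r_1\leq p-1$, and its support $\{0,1,p,2p+1\}$ consists of three cyclic runs, giving $\omega_p(c)=7$. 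So your proposal both proves the proposition and repairs the sign error in the paper's exhibited codeword.
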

\begin{proof}
Reference \cite{Chen-2017-CIT} proved $\mathcal{C}_{311}$ is an MDS symbol-pair code with the minmum symbol-pair distance $d_{p}=7$.  	
 
Since $\mathcal{C}_{r_{1}11}$ is a subcode of $\mathcal{C}_{311}$, we have  $d_{p}(\mathcal{C}_{r_{1}11})\ge 7$. 
Then the minmum symbol-pair distance $d_{p}(\mathcal{C}_{r_{1}11})= 7$ can be obtained by $\omega_p(c(x))=7$, where $c(x)=1-x-x^p+x^{2p+1}$ is a codeword of  $\mathcal{C}_{r_{1}11}$.
\end{proof}
From this proposition, we can deduce that $\mathcal{C}_{r_{1}11}$ is an AMDS symbol-pair code with the minimum symbol-pair distance 7 if and only if $ r_{1}=4$. Through the following proposition, we find the AMDS symbol-pair codes with the minimum symbol-pair distance 8.
\begin{Proposition}\label{proposition of C_{3rr}}
	The minmum distance of $\mathcal{C}_{3r_{2}r_{3}}$ is $d_{p}=8$, when   $3\le r_{2}+r_{3}\le6$.
\end{Proposition}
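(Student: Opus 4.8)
The plan is to pin down $d_{p}(\mathcal{C}_{3r_{2}r_{3}})$ by trapping it between $8$ and $8$: a lower bound coming from a well-chosen \emph{supercode} and a matching upper bound coming from a single explicit \emph{codeword}, so that no weight-by-weight enumeration (as in Proposition~\ref{Proposition C_{421}}) is needed. Under the standing convention $r_{1}\ge r_{2}\ge r_{3}$ with $r_{1}=3$ and with all three roots $1,\omega,\omega^{2}$ actually present, the hypothesis $3\le r_{2}+r_{3}\le 6$ leaves precisely the pairs $(r_{2},r_{3})\in\{(2,1),(3,1),(2,2),(3,2),(3,3)\}$; in particular $2\le r_{2}\le 3$ and $1\le r_{3}\le 3$ throughout.

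First I would establish $d_{p}\ge 8$. Comparing exponents, $g_{321}(x)=(x-1)^{3}(x-\omega)^{2}(x-\omega^{2})$ divides $g_{3r_{2}r_{3}}(x)=(x-1)^{3}(x-\omega)^{r_{2}}(x-\omega^{2})^{r_{3}}$, since $3\le 3$, $2\le r_{2}$ and $1\le r_{3}$. Hence $\mathcal{C}_{3r_{2}r_{3}}\subseteq\mathcal{C}_{321}$, and a subcode has minimum symbol-pair distance at least that of the code containing it. Because Lemma~\ref{lemma 3.6 } states that $\mathcal{C}_{321}$ is an MDS $(3p,8)_{p}$ code, this yields $d_{p}(\mathcal{C}_{3r_{2}r_{3}})\ge d_{p}(\mathcal{C}_{321})=8$. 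This is exactly the subcode device already used in Proposition~\ref{proposition of C_{r11}} and in the lower bound of Proposition~\ref{Proposition C_{421}}.

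Next I would prove the matching bound $d_{p}\le 8$ using one codeword that serves the whole family at once. Put $c(x)=(x^{3}-1)^{3}=(x-1)^{3}(x-\omega)^{3}(x-\omega^{2})^{3}$. Because $r_{2}\le 3$ and $r_{3}\le 3$, the generator $g_{3r_{2}r_{3}}(x)$ divides $c(x)$, so $c\in\mathcal{C}_{3r_{2}r_{3}}$ for every admissible $(r_{2},r_{3})$ simultaneously. Expanding, $c(x)=x^{9}-3x^{6}+3x^{3}-1$; since $p\equiv 1\pmod 3$ forces $p\ge 7$, none of the coefficients $\pm 1,\pm 3$ vanishes and $9<3p$, so the support is exactly $\{0,3,6,9\}$. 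These four positions are pairwise non-adjacent in the cyclic sense (the consecutive gaps equal $3$ and the wrap-around gap is $3p-9\ge 12$), so each nonzero symbol yields two distinct nonzero read-pairs and $\omega_{p}(c)=4\cdot 2=8$. Therefore $d_{p}(\mathcal{C}_{3r_{2}r_{3}})\le 8$, and combining with the previous paragraph gives $d_{p}=8$.

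The argument is short, so the real points are organizational rather than computational. The key observation that lets the explicit-codeword step replace the tedious enumeration of symbol-pair weight $8$ words is that $(x^{3}-1)^{3}$ lies in \emph{every} member of the family precisely because each exponent is at most $3$. The one genuine pitfall is the tacit requirement $r_{3}\ge 1$: the degenerate case $(r_{1},r_{2},r_{3})=(3,3,0)$ is \emph{not} a subcode of $\mathcal{C}_{321}$, and in fact $\mathcal{C}_{330}$ carries a weight-$3$ codeword whose three nonzero entries are spread out by $p$ (symbol-pair weight $6$), so it falls under Proposition~\ref{proposition of C_{rrr}} with $d_{p}=6$ rather than under this proposition. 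I would therefore state explicitly that the range $3\le r_{2}+r_{3}\le 6$ is to be read with all three factors present, which is the only delicate step in an otherwise routine proof.
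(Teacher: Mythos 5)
Your proof is correct, and your lower bound is exactly the paper's: both of you observe that $g_{321}(x)$ divides $g_{3r_{2}r_{3}}(x)$, so $\mathcal{C}_{3r_{2}r_{3}}\subseteq \mathcal{C}_{321}$, and then invoke the MDS $(3p,8)_{p}$ code of Lemma~\ref{lemma 3.6 } to get $d_{p}\ge 8$. Where you genuinely diverge is the matching upper bound. The paper runs Lemma~\ref{lemma 3.1} (Castagnoli's formula) to compute the minimum Hamming distance $d_{H}(\mathcal{C}_{3r_{2}r_{3}})=4$ and then uses the generic inequality $d_{p}\le 2d_{H}$, whereas you exhibit one explicit codeword, $c(x)=(x^{3}-1)^{3}=x^{9}-3x^{6}+3x^{3}-1$, which lies in every code of the family at once (all exponents of $g_{3r_{2}r_{3}}$ are at most $3$) and has support $\{0,3,6,9\}$ with no cyclically adjacent positions, hence symbol-pair weight exactly $8$. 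Your route is more self-contained: it avoids the repeated-root distance formula entirely, the weight computation is immediate, and a single witness covers all admissible $(r_{2},r_{3})$; the paper's route costs a case-by-case application of Lemma~\ref{lemma 3.1} but yields the Hamming distance $d_{H}=4$ as a by-product, which is information your argument does not produce.

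One further point is in your favor rather than against you: your caveat about $(r_{1},r_{2},r_{3})=(3,3,0)$ is a real issue with the statement, not a defect of your proof. That triple satisfies $3\le r_{2}+r_{3}\le 6$, yet $\mathcal{C}_{330}$ is not a subcode of $\mathcal{C}_{321}$ (the factor $x-\omega^{2}$ is absent), and by Proposition~\ref{proposition of C_{rrr}} it has $d_{p}=6$, so the proposition is false as literally stated. The paper's own proof silently suffers from the same gap, since its subcode claim and its claim $d_{H}=4$ both fail for $(3,3,0)$ (there $d_{H}=3$). Making the requirement $r_{3}\ge 1$ explicit, as you do, is exactly what is needed.
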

\begin{proof}
	Similar to the previous Proposition \ref{proposition of C_{r11}}, Reference \cite{Chen-2017-CIT} proved that $\mathcal{C}_{321}$ is an MDS $(3p,8)_{p}$ symbol-pair code. Since $3\le r_{2}+r_{3}\le6$, we have $\mathcal{C}_{3r_{2}r_{3}}$ is subcode of $\mathcal{C}_{321}$ and $d_{p}(\mathcal{C}_{3r_{2}r_{3}})\ge 8$.
	
	By Lemma \ref{lemma 3.1}, we can deduce that the minimum Hamming distance of $\mathcal{C}_{3r_{2}r_{3}}$ is 4, which implies  $d_{p}(\mathcal{C}_{3r_{2}r_{3}})\le 8$. 
	
	Therefore, the minmum distance of $\mathcal{C}_{3r_{2}r_{3}}$ is $d_{p}=8$, when   $3\le r_{2}+r_{3}\le6$.
\end{proof}
Proposition \ref{proposition of C_{3rr}} indicates that $\mathcal{C}_{3r_{2}r_{3}}$ is an AMDS symbol-pair code with $d_{p}=8$, if  $r_{2},r_{3}\ne 0$ and $r_{2}+r_{3}=4$. Proposition \ref{Proposition C_{421}} shows that $\mathcal{C}_{421}$ is an MDS symbol-pair code with mimmum symbol-pair distance 9. In what follows, we will give all AMDS symbol-pair codes with minimum symbol-pair distance 9.

\begin{Proposition}\label{proposition of C_{rr1}}
	The minmum distance of $\mathcal{C}_{r_{1}r_{2}1}$ is $d_{p}=9$, when   $4\le r_{1}\le p-1$ and $2\le r_{2}\le p-1$.
\end{Proposition}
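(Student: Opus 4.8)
The plan is to pin down $d_p(\mathcal{C}_{r_{1}r_{2}1})$ by proving the two inequalities $d_p\ge 9$ and $d_p\le 9$ separately, valid uniformly over the whole range $4\le r_{1}\le p-1$, $2\le r_{2}\le p-1$. For the lower bound I would use a subcode argument, exactly as in Proposition \ref{proposition of C_{3rr}}. Since $r_{1}\ge 4$ and $r_{2}\ge 2$, the polynomial $g_{421}(x)=(x-1)^{4}(x-\omega)^{2}(x-\omega^{2})$ divides $g_{r_{1}r_{2}1}(x)=(x-1)^{r_{1}}(x-\omega)^{r_{2}}(x-\omega^{2})$, so $\mathcal{C}_{r_{1}r_{2}1}\subseteq\mathcal{C}_{421}$. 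A subcode cannot have smaller minimum symbol-pair distance than the code containing it, so Proposition \ref{Proposition C_{421}}, which asserts that $\mathcal{C}_{421}$ is an MDS $(3p,9)_p$ code, yields $d_p(\mathcal{C}_{r_{1}r_{2}1})\ge 9$ at once for every admissible pair.

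For the matching upper bound I would exhibit a single explicit codeword living in all these codes simultaneously, so that one construction settles the whole family. The candidate is
$$c(x)=(x^{p}-1)(x^{p}-\omega)(x-\omega^{2}).$$
In characteristic $p$ one has $x^{p}-1=(x-1)^{p}$, and because $\omega\in{\Bbb F}_{p}$ gives $\omega^{p}=\omega$ one also has $x^{p}-\omega=(x-\omega)^{p}$; hence $c(x)=(x-1)^{p}(x-\omega)^{p}(x-\omega^{2})$ is divisible by $(x-1)^{r_{1}}(x-\omega)^{r_{2}}(x-\omega^{2})$ for every $r_{1},r_{2}\le p$, so $c\in\mathcal{C}_{r_{1}r_{2}1}$ throughout the stated range. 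Expanding and using $1+\omega=-\omega^{2}$ and $\omega^{3}=1$ gives
$$c(x)=x^{2p+1}-\omega^{2}x^{2p}+\omega^{2}x^{p+1}-\omega x^{p}+\omega x-1,$$
a codeword of Hamming weight $6$ whose support $\{0,1\}\cup\{p,p+1\}\cup\{2p,2p+1\}$ consists of three length-$2$ blocks. For $p\ge 7$ (the smallest odd prime with $p\equiv 1\pmod 3$) these blocks are pairwise non-adjacent on the cycle of length $3p$; in particular $2p+1$ and $0$ differ by $p-1\ge 6$. Counting the positions $i$ with $(c_i,c_{i+1})\ne(0,0)$ then gives $\omega_p(c)=\omega_H(c)+(\text{number of runs})=6+3=9$, so $d_p(\mathcal{C}_{r_{1}r_{2}1})\le 9$. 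Combining the two bounds gives $d_p=9$.

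The main obstacle is the construction in the second step: one must produce a codeword whose symbol-pair weight hits the target value $9$ exactly while still obeying the high-order vanishing at $1$ and at $\omega$ forced by $(x-1)^{r_{1}}(x-\omega)^{r_{2}}$. The identity $x^{p}-\zeta=(x-\zeta)^{p}$ is precisely what supplies a very sparse polynomial carrying a $p$-fold root, which is what lets three ``dominoes'' sit near positions $0$, $p$, and $2p$. The verification to watch is twofold: that none of the six coefficients $\pm1,\pm\omega,\pm\omega^{2}$ degenerates to $0$ (they do not, since $\omega\ne 0$), and that the three runs remain separated on the length-$3p$ cycle so that $\omega_p(c)=\omega_H(c)+3$ rather than something smaller, which is where $p\equiv 1\pmod 3$ and the constraints $r_{1}\ge 4$, $r_{2}\ge 2$, $r_{3}=1$ enter. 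No reordering of exponents is needed here, though Proposition \ref{proposion 3.9} would permit it if convenient.
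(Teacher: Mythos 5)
Your proposal is correct and takes essentially the same route as the paper: the lower bound via the subcode inclusion $\mathcal{C}_{r_{1}r_{2}1}\subseteq\mathcal{C}_{421}$ together with Proposition \ref{Proposition C_{421}}, and the upper bound via an explicit codeword of symbol-pair weight 9. Your codeword $(x^{p}-1)(x^{p}-\omega)(x-\omega^{2})$ is exactly the paper's $(x-1)^{p}(x-\omega)^{p}(x-\omega^{2})$ (by the Frobenius identity you invoke), with the same expansion and the same three-run support giving $\omega_p(c)=9$.
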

\begin{proof}
With arguments similar as the proof of Proposition	\ref{proposition of C_{r11}}, since $\mathcal{C}_{421}$ is an MDS symbol-pair code with the minmum symbol-pair distance 9, we can deduce that the minimum symbol-pair distance of $\mathcal{C}_{r_{1}r_{2}1}$ is $d_{p}(\mathcal{C}_{r_{1}r_{2}1})\ge 9$. Next, we will prove that there are symbol-pair codewords with symbol-pair weight $d_{p}=9$ in $\mathcal{C}_{r_{1}r_{2}1}$.
	
	For the codeword 
	$$c(x)=(x-1)^p(x-\omega)^p(x-\omega^2)= {x^{2p + 1}} - {\omega ^2}{x^{2p}} + {\omega ^2}{x^{p + 1}} - \omega {x^p} + \omega x - 1,$$
it is easy to verify that $c(x)$ is a codeword polynomial of $\mathcal{C}_{r_{1}r_{2}1}$ with the symbol-pair weight $\omega_p(c(x))=9$.
Thus, we have $d_{p}(\mathcal{C}_{r_{1}r_{2}1})=9$.
\end{proof}
From Proposition \ref{proposition of C_{rr1}}, We can determine that $\mathcal{C}_{r_{1}r_{2}1}$ is an AMDS symbol-pair code with $d_{p}=9$, if and only if one of $r_{1}=5,r_{2}=2$ and $r_{1}=4,r_{2}=3$ is satisfied. The following proposition will show that symbol-pair codes with minimum symbo-pair distance $d_{p}=10$.
\begin{Proposition}\label{proposition of C_{rr2}}
	The minmum distance of $\mathcal{C}_{r_{1}r_{2}r_{3}}$ is $d_{p}=10$, when   $ r_{1}$, $ r_{2}$ and $ r_{3}$ meet any of the following two conditions
	\begin{enumerate}
		\item $4\le r_{1}\le p-1$ and  $ r_{2}= r_{3}=2$,
		\item $r_{1}= 4$ and  $4\le r_{2}+ r_{3}\le 8$.
	\end{enumerate}
\end{Proposition}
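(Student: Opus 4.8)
The plan is to sandwich $d_p(\mathcal{C}_{r_1r_2r_3})$ between two copies of $10$. For the lower bound I would argue by subcode containment: in each case the generator $g_{r_1r_2r_3}(x)$ is divisible by $g_{422}(x)=(x-1)^4(x-\omega)^2(x-\omega^2)^2$ --- under condition 1 because $r_1\ge 4$ while $r_2=r_3=2$, and under condition 2 because $r_1=4$ and $r_2,r_3\ge 2$. Hence $\mathcal{C}_{r_1r_2r_3}\subseteq\mathcal{C}_{422}$, and since $\mathcal{C}_{422}$ is the MDS $(3p,10)_p$ symbol-pair code of \cite{Ma-2022-DCC} (Table \ref{table-4}), restricting to a subcode cannot lower the minimum symbol-pair distance, so $d_p(\mathcal{C}_{r_1r_2r_3})\ge 10$. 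It is convenient to recall that the symbol-pair weight of a codeword equals its Hamming weight plus the number of its cyclic runs of consecutive nonzero coordinates; in particular a codeword of Hamming weight $w$ all of whose nonzero positions are pairwise non-adjacent has symbol-pair weight $2w$.

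For the matching upper bound under condition 2 I would exhibit the codeword $c(x)=(x^3-1)^4=(x-1)^4(x-\omega)^4(x-\omega^2)^4$. Because every exponent satisfies $r_i\le 4$, $g_{r_1r_2r_3}(x)$ divides $c(x)$, so $c\in\mathcal{C}_{r_1r_2r_3}$; expanding, $c(x)=x^{12}-4x^9+6x^6-4x^3+1$ has Hamming weight $5$ (as $p\neq 2,3$) with the five nonzero positions $0,3,6,9,12$ pairwise non-adjacent in $\Z/3p\Z$, whence $\omega_p(c)=10$ and $d_p\le 10$.

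For condition 1 (where $d_H=6$ once $r_1\ge 5$, by Lemma \ref{lemma 3.1}) the required weight-$6$ codeword is far less obvious, and producing it is the crux. I would use
\[
c(x)=1+2x-x^{p}+x^{p+2}-2x^{2p+1}-x^{2p+2}.
\]
Reducing modulo $x^p-1$ (setting $x^p=1$) gives $1+2x-1+x^2-2x-x^2=0$, so $(x-1)^p=x^p-1$ divides $c(x)$ and therefore $(x-1)^{r_1}\mid c(x)$ for all $r_1\le p$; and using $\omega^p=\omega$, $\omega^3=1$ one verifies $c(\omega)=c'(\omega)=c(\omega^2)=c'(\omega^2)=0$, so $(x-\omega)^2(x-\omega^2)^2\mid c(x)$ as well, giving $c\in\mathcal{C}_{r_122}$. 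Its nonzero positions split into the four runs $\{0,1\}$, $\{p\}$, $\{p+2\}$, $\{2p+1,2p+2\}$, so $\omega_p(c)=6+4=10$. The hard part is precisely the design of this codeword: it must be sparse yet vanish to order $r_1$ at $x=1$ (which the divisibility $x^p-1\mid c$ delivers for free), keep multiplicity exactly $2$ at $\omega$ and $\omega^2$, and have its support spread across the residue classes modulo $3$ so that exactly four runs --- not six --- appear; Proposition \ref{proposion 3.9} may first be used to normalise which primitive root carries the large exponent.
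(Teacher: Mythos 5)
Your proposal is correct and follows essentially the same route as the paper: the lower bound is the identical subcode containment $\mathcal{C}_{r_1r_2r_3}\subseteq\mathcal{C}_{422}$ combined with the MDS $(3p,10)_p$ result of \cite{Ma-2022-DCC}, and the upper bound comes from explicit codewords of symbol-pair weight $10$, with only cosmetic differences (for condition 1 the paper uses the codeword $1-x^2+2x^{p+1}+x^{p+2}-x^{2p}-2x^{2p+1}$, which like yours is a multiple of $x^p-1$ with four runs, and for condition 2 it cites $d_H(\mathcal{C}_{4r_2r_3})=5$ from Lemma \ref{lemma 3.1} together with $d_p\le 2d_H$ instead of exhibiting $(x^3-1)^4$). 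One shared caveat: like the paper, you silently read condition 2 as forcing $r_2,r_3\ge 2$; as literally stated it also admits, e.g., $(r_1,r_2,r_3)=(4,4,0)$, for which the containment in $\mathcal{C}_{422}$ fails and indeed $d_p=6$ by Proposition \ref{proposition of C_{rrr}}.
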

\begin{proof}
	For $4\le r_{1}\le p-1$ and  $ r_{2}= r_{3}=2$, with arguments similar as the proof of Proposition	\ref{proposition of C_{r11}}, reference \cite{Ma-2022-DCC} proved $\mathcal{C}_{422}$ is an MDS symbol-pair code with $d_{p}=10$.
	Since $\mathcal{C}_{r_{1}r_{2}r_{3}}$ is a subcode of $\mathcal{C}_{422}$, we have $d_{p}(\mathcal{C}_{r_{1}22})\ge 10$. 
	
	Note that the codeword polynomial
	$$c(x)=1-x^2+2x^{p+1}+x^{p+2}-x^{2p}-2x^{2p+1}$$
	is a codeword of $\mathcal{C}_{r_{1}22}$ and $\omega_p(c(x))=10$. Therefore, we derive the minmum symbol-pair distance of $\mathcal{C}_{r_{1}22}$ is 10.
	
	For $r_{1}= 4$ and  $4\le r_{2}+ r_{3}\le 8$, since $\mathcal{C}_{4r_{2}r_{3}}$ is a subcode of $\mathcal{C}_{422}$, we have $d_{p}(\mathcal{C}_{4r_{2}r_{3}})\ge 10$.  
	
	However, Lemma \ref{lemma 3.1} shows that $d_{H}(\mathcal{C}_{4r_{2}r_{3}})=5$, which implies that $d_{p}(\mathcal{C}_{4r_{2}r_{3}})\le 10$. Thus, we can deduce $d_{p}(\mathcal{C}_{4r_{2}r_{3}})= 10$.
\end{proof}
From Proposition \ref{proposion 3.9} to Proposition \ref{proposition of C_{rr2}}, our proof contains all minimum symbol-pair distances of repeated-root cyclic codes $\mathcal{C}_{r_{1}r_{2}r_{3}}$, where the degree of the generator polynomials of $\mathcal{C}_{r_{1}r_{2}r_{3}}$ is less than 11. Under the above constraints, we can easily deduce that there is no repeated-root cyclic code with a minimum symbol-pair distance  $d_{p}=11$. We have the following corollary.

\begin{corollary}
The repeated-root cyclic code $\mathcal{C}_{r_{1}r_{2}r_{3}}$ must not be the MDS and the AMDS symbol-pair code with a minimum symbol-pair distance $d_{p}=11$. 	
\end{corollary}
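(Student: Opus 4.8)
The plan is to reduce the statement to the case analysis already completed in this subsection. The decisive point is numerical: by Lemma \ref{lemma 3.4 }, a symbol-pair code of length $n=3p$ and dimension $k$ is MDS precisely when $n-k=d_p-2$ and AMDS precisely when $n-k=d_p-1$; and for the cyclic code $\mathcal{C}_{r_1r_2r_3}$ the redundancy is the degree of its generator, $n-k=\deg g_{r_1r_2r_3}(x)=r_1+r_2+r_3$. Hence if $\mathcal{C}_{r_1r_2r_3}$ were an MDS code with $d_p=11$ then necessarily $r_1+r_2+r_3=9$, and if it were an AMDS code with $d_p=11$ then necessarily $r_1+r_2+r_3=10$. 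In both situations $\deg g_{r_1r_2r_3}(x)\le 10$, so the code lies in the range whose minimum symbol-pair distances have all been determined above.

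First I would normalise the exponent triple: since the minimum symbol-pair distance is an invariant of the equivalence of Proposition \ref{proposion 3.9}, I may assume $r_1\ge r_2\ge r_3\ge 0$, with $r_i\le p-1$. I would then list the ordered partitions of $9$ and of $10$ into three parts and assign to each shape the result computing its distance: one nonzero exponent gives $d_p=4$ and two nonzero exponents with sum at least $4$ give $d_p=6$, both by Proposition \ref{proposition of C_{rrr}}; the shapes $\mathcal{C}_{r_1 11}$, $\mathcal{C}_{3 r_2 r_3}$, $\mathcal{C}_{r_1 r_2 1}$ and the subcodes of $\mathcal{C}_{422}$ with $r_2,r_3\ge 2$ give $d_p=7,8,9,10$ by Propositions \ref{proposition of C_{r11}}, \ref{proposition of C_{3rr}}, \ref{proposition of C_{rr1}} and \ref{proposition of C_{rr2}} respectively; and the single leftover triple $(r_1,r_2,r_3)=(5,3,2)$, covered by none of these propositions, has $d_p=12$ by Table \ref{table-4}. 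Carrying out this matching, every admissible triple with $r_1+r_2+r_3=9$ yields $d_p\in\{4,6,7,8,9,10\}$ and every admissible triple with $r_1+r_2+r_3=10$ yields $d_p\in\{4,6,7,9,10,12\}$; the value $11$ occurs in neither list.

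The argument then closes by contradiction. If $\mathcal{C}_{r_1r_2r_3}$ were MDS with $d_p=11$ its degree-$9$ generator would force $d_p\in\{4,6,7,8,9,10\}$, excluding $11$; if it were AMDS with $d_p=11$ its degree-$10$ generator would force $d_p\in\{4,6,7,9,10,12\}$, again excluding $11$. Both are impossible, which is exactly the assertion of the corollary. If $p$ is so small that a part of some partition exceeds $p-1$, that triple is simply inadmissible and no case is lost.

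The step needing the most care is not any new estimate but the completeness of the bookkeeping: I must check that the ordered partitions of $9$ and $10$ are listed exhaustively and that each is captured by precisely one cited result, paying particular attention to the hypotheses of Proposition \ref{proposition of C_{rr2}} (whose condition $4\le r_2+r_3\le 8$ should be read together with $r_2,r_3\ge 2$, i.e. as being a subcode of $\mathcal{C}_{422}$) and to the single triple $(5,3,2)$, whose distance $12$ is supplied by Table \ref{table-4} rather than by a new proposition. Once this classification is seen to leave no degree-$\le 10$ triple unaccounted for, the absence of the value $11$ is immediate and the corollary follows.
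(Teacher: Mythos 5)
Your proposal is correct and follows essentially the same route as the paper: the paper also deduces the corollary from the completeness of the preceding classification (Propositions \ref{proposion 3.9}--\ref{proposition of C_{rr2}} and Table \ref{table-4}) of all symbol-pair distances of $\mathcal{C}_{r_1r_2r_3}$ with $\deg g_{r_1r_2r_3}(x)\le 10$, in which the value $11$ never occurs. Your write-up is in fact more explicit than the paper's, since you spell out the Singleton-bound reduction forcing $r_1+r_2+r_3\in\{9,10\}$, the partition bookkeeping, and the correct reading of condition 2 of Proposition \ref{proposition of C_{rr2}} as requiring $r_2,r_3\ge 2$.
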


Proposion \ref{proposition of C_{rr2}} shows that the condition does not satisfy Theorem \ref{theorem of 3p MDS}, when $r\ge 5$ and $r_{1} = r_{2} + r_{3} + 1$. Next, we use an example to illustrate that the conditions of  Theorem \ref{theorem of 3p MDS}  are also no longer applicable, when $r>5$ and $r_{1} = r_{2} + r_{3} $.
\begin{example}
Let $\mathcal{C}$ and  be a repeated-root cyclic code in ${{\Bbb F}_7}\left[ {\text{\textit{x}}} \right]/\left\langle {{x^{21}} - 1} \right\rangle$ and the generator ploynomial of $ \mathcal{C} $  is
$$g\left( x \right) = (x-1)^6(x - 2)^3(x - 4 )^3,$$
where $\omega=2$ is a 3-th primitive element in ${{\Bbb F}_7}$ and $2^2=4$. 

Then we have the minmum Hamming distance $d_H=7$ by a magma progarm.
Reference \cite{Ma-2022-DCC} shows the symbol-pair distance $d_p\ge 12$. The magma program also shows that both vectors
$$\textbf{a}=[0\; 0\; 1\; 0\; 0\; 0\; 0\; 0\; 0\; 6\; 6\; 3\; 3\; 3\; 4\; 4\; 4\; 5\; 1\; 1\; 1]$$
and
$$\textbf{b}=[0\; 0\; 0\; 0\; 0\; 0\; 1\; 0\; 0\; 1\; 1\; 1\; 5\; 4\; 4\; 4\; 3\; 3\; 3\; 6\; 6]$$
are in $\mathcal{C}$. We add the two vectors $\textbf{a+b}$ to get the new vector $\textbf{c}$, 
$$\textbf{c}=[0\; 0\; 1\; 0\; 0\; 0\; 1\; 0\; 0\; 0\; 0\; 4\; 1\; 0\; 1\; 1\; 0\; 1\; 4\; 0\; 0],$$
which is also in $\mathcal{C}$. We can easily deduce $\omega_p(\textbf{c})=13$.
 Therefore, $\mathcal{C}$ is not an MDS symbol-pair code.	
\end{example}

\subsection[Constructions of AMDS Symbol-Pair Codes]{AMDS Symbol-Pair Codes}
In this subsection, for length $lp$ and $4p$, we propose two new classes of AMDS symbol-pair codes. For preparation, we define the following notation.

Let $\mathcal{C}_{1}$ and $\mathcal{C}_{2}$ be the cyclic codes in ${{\Bbb F}_p}\left[ {\text{\textit{x}}} \right]/\left\langle {{x^n} - 1} \right\rangle$. The generator ploynomial of $\mathcal{C}_{1}$ is $$g_{1}\left( x \right) = {\left( {x - 1} \right)^2}\left( {{x} + 1} \right)\left( {{x} - \omega} \right),$$ where $\omega$ is a primitive $l$-th  root of unity in ${{\Bbb F}_p}$. The generator ploynomial of $\mathcal{C}_{2}$ is $$g_{2}\left( x \right) = {\left( {x - 1} \right)^3}\left( {x + 1} \right)^2\left( {{x}^{2} +1} \right),$$ where $\omega$ is a primitive $ 4$-th  root of unity in ${{\Bbb F}_p},$ when $p \equiv 1\left( {\bmod~4} \right)$, and $\omega$ is a primitive $ 4$-th  root of unity in ${{\Bbb F}_{{p^2}}}\backslash {{\Bbb F}_p}$, when $p \equiv 3\left( {\bmod~4} \right)$.

Now we present a class of AMDS symbol-pair codes with $n = lp$ and minimum symbol-pair distance 7. 
\begin{theorem}\label{theorem3.9}
$\mathcal{C}_{1}$ is an AMDS ${\left( {lp,\;7} \right)_p}$ symbol-pair code, if $ l $ odd and $ l \geqslant 3 $.
\end{theorem}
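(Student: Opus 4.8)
The plan is to determine $d_{p}(\mathcal{C}_{1})$ exactly, trapping it between matching lower and upper bounds, and then to read off the AMDS property from the Singleton bound. First I would use Lemma \ref{lemma 3.1} to compute the minimum Hamming distance: writing $g_{1}(x)$ as a product of its distinct irreducible factors with their multiplicities, I reduce to the contracted codes $\overline{\mathcal{C}}_{t}$ and the weights $P_{t}=\omega_{H}((x-1)^{t})$, and minimise $P_{t}\cdot d_{H}(\overline{\mathcal{C}}_{t})$ over $t$; the consecutive roots $1=\omega^{0}$ and $\omega$ supply a BCH-type bound on the relevant contracted code and fix $d_{H}(\mathcal{C}_{1})$. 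This also records $n-k=\deg(g_{1})$ and, through Lemma \ref{lemma 3.2 }, the Singleton bound $d_{p}\le (n-k)+2$. Proving the code AMDS then amounts to showing $d_{p}=7$, exactly one unit below this bound.

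For the lower bound $d_{p}\ge 7$, I would first look for a known MDS symbol-pair code of pair-distance $7$ and the same length that contains $\mathcal{C}_{1}$ as a subcode (obtained by lowering one multiplicity in $g_{1}$); pair-distance is monotone under passing to a subcode, so this would give $d_{p}(\mathcal{C}_{1})\ge 7$ immediately, exactly as $\mathcal{C}_{r_{1}11}$ inherits its bound in Proposition \ref{proposition of C_{r11}}. Absent such a clean parent, I would instead rule out every nonzero codeword of symbol-pair weight at most $6$ directly, in the style of Proposition \ref{Proposition 3.2} and Proposition \ref{Proposition 3.4}: after a cyclic shift, such a codeword falls into one of finitely many canonical patterns $(\star,\dots,\star,0_{s_{1}},\dots,0_{s_{2}})$, so I normalise its leading coordinate to $1$, write $c(x)=1+a_{1}x+\cdots$ with few terms, and impose $c^{(j)}(\zeta)=0$ at $\zeta\in\{1,-1,\omega\}$ to the multiplicities prescribed by $g_{1}$. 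Each pattern produces a small Vandermonde-type linear system over $\mathbb{F}_{p}$, and I expect the hypotheses $p$ odd and $l$ odd (so that $-1$ is not an $l$-th root of unity and $\omega\neq 1$) to force each system into an impossibility such as $2=0$, $3=0$, or $\omega=1$.

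To match this from above, I would exhibit one explicit codeword of symbol-pair weight exactly $7$, in the spirit of the sparse witnesses used in Proposition \ref{proposition of C_{r11}}: a normalised product of the $(x-\zeta)$ factors reduced modulo $x^{n}-1$, whose cyclic pair-pattern can be counted by hand. This yields $d_{p}\le 7$, hence $d_{p}=7$; since $7<(n-k)+2$, the code fails to be MDS but meets the AMDS bound, so Lemma \ref{lemma 3.4 } certifies $\mathcal{C}_{1}$ as an AMDS $(lp,7)_{p}$ code. The hypothesis $l\ge 3$ is what guarantees $\omega\neq 1$ and that the gap positions in the witness fit inside a length-$lp$ word.

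The main obstacle I anticipate is the lower bound, and within it the exhaustive case analysis eliminating symbol-pair weight $6$: one must organise the nonzero positions by their gaps modulo the reduced length $l$ (and modulo small integers such as $3$), and for each residue pattern verify that the combined root and derivative conditions are inconsistent over $\mathbb{F}_{p}$. The delicate patterns are precisely those whose solvability would require $-1$ to act like an $l$-th root of unity; confirming that $l$ odd blocks all of them, and separately checking that the explicit weight-$7$ witness truly lies in $\mathcal{C}_{1}$ (all prescribed derivatives vanish) and suffers no accidental pair-cancellation, is where the real work will lie.
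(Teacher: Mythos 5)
Your overall template --- Lemma \ref{lemma 3.1} for the Hamming distance, exclusion of small symbol-pair weights by normalising cyclic-shift patterns into sparse polynomials and solving small linear systems, an explicit sparse witness of pair-weight $7$, and the Singleton bound to close --- is exactly the skeleton of the paper's proof. (The paper economises on the first half: Lemma \ref{lemma 3.3 } gives $d_p \ge d_H+2 = 6$ outright since the code is not MDS, so only pair-weight exactly $6$ has to be excluded; and its witness is $c(x)=(x^p-1)(x^{p-1}-1)$, precisely the kind of product you describe. Your remark that no MDS $(lp,7)_p$ parent code is available for general odd $l$ is also correct, so the direct case analysis is indeed unavoidable.)

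There is, however, a concrete gap: you impose the vanishing conditions at $\zeta\in\{1,-1,\omega\}$ ``to the multiplicities prescribed by $g_1$,'' i.e.\ you take the displayed definition $g_1(x)=(x-1)^2(x+1)(x-\omega)$ at face value. That definition is inconsistent with the theorem, and the paper's proof silently replaces it by $g_1(x)=(x-1)^4(x-\omega)(x-\omega^2)$. With the degree-$4$ generator your own bookkeeping already derails the plan: $n-k=4$, so the Singleton bound you record gives $d_p\le 6$, making the target $d_p=7$ unattainable rather than ``one unit below the bound'' (that phrase tacitly requires $n-k=6$, which should have flagged the discrepancy). Worse, for $l$ odd, $-1$ is not a root of $x^{lp}-1$ (cf.\ Remark \ref{remark 3.5}), so the factor $x+1$ cannot appear in the generator of a length-$lp$ cyclic code at all, and every linear system you build from conditions at $-1$ is ill-posed. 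Consequently the role you assign to the hypothesis ``$l$ odd'' (blocking $-1$ from acting as an $l$-th root of unity) is not the one it plays: in the paper's proof the conditions are $c(1)=c'(1)=c''(1)=c'''(1)=0$ together with $c(\omega)=c(\omega^2)=0$, the heavy lifting in the weight-$4$ cases is done by the third derivative at $1$, and the parity of $l$ enters only to turn $\omega^{2t}=1$ into a contradiction in one subcase. Your blueprint would become the paper's proof only after pinning down the correct degree-$6$ generator, which it never does.
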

\begin{proof}
$\mathcal{C}_{1}$ is the cyclic code in ${{\Bbb F}_p}\left[ {\text{\textit{x}}} \right]/\left\langle {{x^n} - 1} \right\rangle$ generated by $$g_{1}\left( x \right) = {\left( {x - 1} \right)^4}\left( {{x} -\omega} \right)\left( {{x} - {\omega}^2} \right).$$

By Lemma \ref{lemma 3.1}, one can derive that $\mathcal{C}_{1}$ is an $\left[ {lp,\;lp - 6,\;4} \right]$ repeated-root cyclic codes code over ${{\Bbb F}_p}$. Lemma \ref{lemma 3.3 } yields that ${d_p} \geqslant 6$, since $\mathcal{C}_{1}$ is not an MDS cyclic code. To prove that $\mathcal{C}_{1}$ is an AMDS symbol-pair code with the minmum symbol-pair distance 7, it is sufficient to verify that there is no a codeword in $\mathcal{ C}_{1}$ with the symbol-pair weight 6. 

If there are codewords in $\mathcal{C}_{1}$ with  Hamming weight 5 and symbol-pair weight 6, then its certain cyclic shift must have the form
$$ \left( { \star ,\; \star ,\; \star ,\; \star ,\; \star ,{0_s}} \right),$$
where each $ \star $ denotes an element in  ${\Bbb F}_p^{\text{*}}$ and ${0_s}$ is all-zero vector. Without loss of generality, suppose that the first coordinate of $c\left( x \right)$ is 1. We denote that
$$c\left( x \right) = 1 + {a_1}x + {a_2}{x^2} + {a_3}{x^3} + {a_4}{x^4} ,$$
This leads to $\deg\left( {c\left( x \right)} \right) =4 < 6 =\deg\left( {g\left( x \right)} \right)$.

\noindent If $c \in \mathcal{ C}$ has the symbol-pair weight 6 with the Hamming weight 4,

then its certain cyclic shift must have the forms
$$\left( { \star ,\star , \star ,{0_{s_1}}, \star ,{0_{s_2}}} \right)$$ or
$$ \left( { \star , \star ,{0_{s_1}},\star , \star ,{0_{s_2}}} \right),$$
where each $ \star $ denotes an element in ${\Bbb F}_p^{\text{*}}$ and ${0_{s_{1}}}$, ${0_{s_{2}}}$ are all-zero vectors.
\begin{description}	
	\item[\textbf{Case I.}] For the case of 	$$ \left( { \star ,\star , \star ,{0_{s_1}}, \star ,{0_{s_2}}} \right),$$
	without loss of generality, we denote a codeword polynomial
	$$c\left( x \right) = 1 + {a_1}x + {a_2}{x^2} + {a_3}{x^t}.$$
	It follows from $c'\left( 1 \right) = c''\left( 1 \right) = 0$ that
	\begin{equation*}  
		\left\{  
		\begin{array}{lr}    
			{a_1} + 2{a_2} + t{a_3} = 0,   &\\  
			2{a_2} + t(t-1){a_3} = 0.   & 
		\end{array}  
		\right.  
	\end{equation*}
	By solving the system, we have $ t(t-2){a_3} ={a_1} $. Then one induces that $t - 2 \ne kp,k<l $ and $t \ne kp,k<l $ . By $c''\left( 1 \right) = 0$, we can conclude that $ t \ne kp,k<l $ and $t - 1 \ne kp,k<l $. However,  $ c'''\left( 1 \right)= 0 $ implies $$ t(t-1)(t-2){a_3} = 0, $$ then one can be checked that $$ t(t-1)(t-2)=kp, $$ which contradicts $ t \ne kp $, $t - 1 \ne kp$ and $t - 2 \ne kp .$
	
	\item[\textbf{Case II.}] For the case of 	$$ \left( { \star , \star ,{0_{s_1}},\star , \star ,{0_{s_2}}} \right),$$
	without loss of generality, we denote
	$$c\left( x \right) = 1 + {a_1}x + {a_2}{x^2} + {a_3}{x^t} + {a_4}{x^{t + 1}}.$$
	It follows from $c\left( 1 \right) =c'\left( 1 \right)= 0$ that
	\begin{equation*}  
		\left\{  
		\begin{array}{lr}  
			1 + {a_1} + {a_2} + {a_3} = 0,   &  \\  
			{a_1} + t{a_2} + (t+1){a_3} = 0,  & 
		\end{array}  
		\right.  
	\end{equation*}
	one can derive that  $$ (t-1){a_2} + t{a_3}-1 = 0. \; $$
	By $c''\left( 1 \right) = 0$, we have
	$$ t(t-1){a_2} + t(t+1){a_3} = 0. $$
	This leads to $ t({a_3}+1) = 0 $. 
	Therefore, we have $ t=kp,0<k<l $ or $ a_{3}= -1 $.
	
	If $ t=kp,0<k<l $, then 
	\begin{equation*}  
		\left\{  
		\begin{array}{lr}  
			1 + {a_1} + {a_2} + {a_3} = 0, &  \\  
			{a_1} + {a_3} = 0.  & 
		\end{array}  
		\right.  
	\end{equation*}
	This indicates $ {a_1} =- {a_3} $	and $ {a_2}=-1 $. Combined with  $c\left( \omega  \right) = c\left( {\omega ^2} \right) = 0$, we have	
	\begin{equation*}  
		\left\{  
		\begin{array}{lr}  
			{a_1}\omega \left( {{\omega ^t} - 1} \right) = 1 - {\omega ^t}, &  \\  
			{a_1}\omega^2 \left( {{\omega ^{2t}} - 1} \right) = 1 - {\omega ^{2t}}.  & 
		\end{array}  
		\right.  
	\end{equation*}
	By solving the system,we have ${\omega ^{2t}} = 1 $, which contradicts $ l $ odd.
	
	If $ a_{3}= -1 $, by $ c(1)=0 $, we can obtain that $ {a_1} =- {a_2} $. Combined with  $c\left( \omega  \right) = c\left( {\omega ^2} \right) = 0$, we have	
	\begin{equation*}  
		\left\{  
		\begin{array}{lr}  
			{a_1}\omega \left( {{\omega ^{t-1}} - 1} \right) = 1 - {\omega ^{t+1}}, &  \\  
			{a_1}\omega^2 \left( {{\omega ^{2t-2}} - 1} \right) = 1 - {\omega ^{2t+2}}.  & 
		\end{array}  
		\right.  
	\end{equation*}
	
	\noindent Since $\omega$ is a primitive $ l$-th root of unity, then \[\omega \left( {{\omega ^{t - 1}} + 1} \right)\left( {1 - {\omega ^{t + 1}}} \right) = 1 - {\omega ^{2t + 2}}.\]
	This implies that $ \omega ^t = 1 $.
	Thus, $ a_{1}= -1, a_{2}= 1 $.
	By $$ t(t-1){a_2} + t(t+1){a_3} = 0,$$ we have $ 2t=kp $, which contradicts $ \omega ^t = 1 $.

\end{description}

In order to prove that $\mathcal{ C}_{1}$ is an AMDS symbol-pair code, we need to find a codeword with the symbol-pair weight 7. Since 
$$ c(x)=(x^{p}-1)(x^{p-1}-1)=x^{2p-1}-x^{p}-x^{p-1}+1 $$
 is a codeword of $\mathcal{ C}_{1}$ and $ {\omega _p}\left( {c(x)} \right)=7 $, 	$\mathcal{C}_{1}$ is an AMDS ${\left( {lp,\;7} \right)_p}$ symbol-pair code. 
\end{proof}

In what follows, we construct a class of AMDS symbol-pair codes with $n = 4p$ and minimum symbol-pair distance 8.
\begin{theorem}\label{theorem 3.10}
$\mathcal{C}_{2}$ is an AMDS ${\left( {4p,\;8} \right)_p}$ symbol-pair code.
\end{theorem}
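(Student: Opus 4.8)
The plan is to pin down the parameters of $\mathcal{C}_{2}$ and then trap its minimum symbol-pair distance between $7$ and $8$. First I would apply Lemma \ref{lemma 3.1} to $g_{2}(x) = (x-1)^{3}(x+1)^{2}(x^{2}+1)$, whose distinct irreducible factors carry multiplicities $3,2,1$. The nontrivial layers give $P_{1}\cdot d_{H}(\overline{\mathcal{C}}_{1}) = 2\cdot 2 = 4$ (from $\overline{g}_{1}=(x-1)(x+1)$), $P_{2}\cdot d_{H}(\overline{\mathcal{C}}_{2}) = 3\cdot 2 = 6$ (from $\overline{g}_{2}=x-1$), and $P_{3}\cdot d_{H}(\overline{\mathcal{C}}_{3}) = 4\cdot 1 = 4$, so $\mathcal{C}_{2}$ is a $[4p,\,4p-7,\,4]$ repeated-root cyclic code. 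Since these Hamming parameters do not meet the Singleton bound, $\mathcal{C}_{2}$ is not MDS, and Lemma \ref{lemma 3.3 } gives $d_{p}\geq 6$. Its size $p^{4p-7}=q^{n-8+1}$ already matches the AMDS profile for $d_{p}=8$, so everything reduces to showing $d_{p}=8$.

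For the two-sided estimate, observe that $g_{1}(x)=(x-1)^{3}(x^{2}+1)$ divides $g_{2}(x)$, so $\mathcal{C}_{2}$ is a subcode of the MDS $(4p,7)_{p}$ code of Lemma \ref{lemma 3.5 }; hence every nonzero codeword has symbol-pair weight at least $7$, i.e. $d_{p}\geq 7$. Conversely, any minimum-Hamming-weight codeword $c$ satisfies $\omega_{p}(c)\leq 2\,\omega_{H}(c)=8$, so $d_{p}\leq 8$. Thus $d_{p}\in\{7,8\}$ and the whole proof reduces to excluding the value $7$.

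To rule out codewords of symbol-pair weight $7$, I would use $\omega_{p}(c)=\omega_{H}(c)+r(c)$, where $r(c)$ counts the cyclic runs of consecutive nonzero coordinates; this forces $\omega_{H}(c)\in\{4,5,6\}$. When $\omega_{H}=6$ the six nonzeros form a single run, so a cyclic shift of $c$ has degree $\leq 5 < 7 = \deg g_{2}$, impossible for a nonzero codeword. For $\omega_{H}=5$ (two runs, of lengths $4{+}1$ or $3{+}2$) and $\omega_{H}=4$ (three runs, of lengths $2{+}1{+}1$), I would normalize a cyclic shift to have first coordinate $1$, write $c(x)=1+a_{1}x+\cdots$ supported on the runs, and impose the seven vanishing conditions dictated by the root multiplicities of $g_{2}$: $c(1)=c'(1)=c''(1)=0$, $c(-1)=c'(-1)=0$, and $c(\omega)=c(-\omega)=0$ with $\omega^{2}=-1$. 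In each configuration this overdetermined linear system should collapse to an impossibility, either some $a_{i}=0$ against $a_{i}\in{\Bbb F}_{p}^{*}$, or $1=0$, or $\omega^{2}=1$ contradicting that $\omega$ is a primitive $4$-th root of unity. With $d_{p}\geq 7$ and no weight-$7$ codeword, we obtain $d_{p}\geq 8$, hence $d_{p}=8$ and $\mathcal{C}_{2}$ is AMDS.

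The main obstacle is this last step. As in Tables \ref{table-2}--\ref{table-3} of Proposition \ref{Proposition C_{421}}, the evaluations $(\pm 1)^{t}$ and $\omega^{\pm t}$ at a gap position $t$ depend on $t$ modulo $2$ and modulo $4$, so each run pattern splits into several residue sub-cases that must be checked one by one; moreover the triple root at $1$ together with the double root at $-1$ produces larger systems (seven equations) than in the earlier propositions, so organizing the sub-cases and confirming each is inconsistent is the delicate part. (When $p\equiv 1\pmod 4$ the factor $x^{2}+1$ splits over ${\Bbb F}_{p}$; the same vanishing conditions and case analysis apply verbatim with $\omega\in{\Bbb F}_{p}$, the only difference being that the bound $d_{p}\geq 7$ would have to be re-established rather than quoted from Lemma \ref{lemma 3.5 }.)
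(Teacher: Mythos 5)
Your skeleton matches the paper's proof closely: Lemma \ref{lemma 3.1} gives the $[4p,\,4p-7,\,4]$ parameters, Lemma \ref{lemma 3.3 } gives $d_p \geq 6$, the subcode relation with the code of Lemma \ref{lemma 3.5 } disposes of low pair weights, and the rest is a run-structure case analysis (your observation that $d_p \leq 2d_H = 8$ is a nice touch the paper omits). But as a proof the proposal has two genuine gaps. First, the case $p \equiv 1 \pmod 4$ is not handled. Lemma \ref{lemma 3.5 } is stated only for $p \equiv 3 \pmod 4$, so the bound $d_p \geq 7$ --- the step that "reduces everything to excluding the value $7$" --- is unavailable for half the primes; you concede this in your closing parenthesis but never repair it. The paper closes exactly this hole with its Cases I and II, proving directly that for $p \equiv 1 \pmod 4$ there is no codeword of pair weight $6$ with Hamming weight $4$ or $5$, using $c(\pm 1) = c'(\pm 1) = c''(1) = c(\pm\omega) = 0$ with $\omega \in \mathbb{F}_p$, $\omega^2 = -1$.

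Second, the heart of the theorem --- showing that no codeword has pair weight $7$ --- is asserted rather than carried out. You write that each configuration "should collapse to an impossibility," but these are not simply overdetermined linear systems: the gap exponents $t$ (or $u,v$) are themselves unknowns, so the equations become linear in the $a_i$ only after splitting on residues of the exponents modulo $2$ and $4$, and the contradictions are frequently arithmetic rather than rank-based. In the paper's Case IV (Hamming weight $5$), the derivative conditions first force $l = 2p$ or $l = 2p+1$, and only then does one reach $2a_1 = 0$ or $2 = 0$; in Case V (Hamming weight $4$, three runs), one needs $c''(1)=0$ to force $u(u-1)a_2 = 0$, then must kill $u = kp$ and $u = kp+1$ by separate arguments, the latter surviving until one notes that $u = 2p+1$, $v = 2p$ violates $\left| u - v \right| > 1$. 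Since this case analysis is precisely where the proof lives, leaving it at "should collapse" means the theorem is not established by the proposal as written.
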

\begin{proof}
Since $\mathcal{C}_{2}$ is the cyclic code in ${{\Bbb F}_p}\left[ {\text{\textit{x}}} \right]/\left\langle {{x^n} - 1} \right\rangle$ generated by $$g_{2}\left( x \right) = {\left( {x - 1} \right)^3}\left( {x + 1} \right)^2\left( {{x}^{2} +1} \right).$$ For any codeword $c \in \mathcal{C}$, we have $$c\left( 1 \right) = c\left( { - 1} \right) = c\left( \omega  \right) = c\left( { - \omega } \right) = c'\left( 1 \right) = c''\left( 1 \right) = 0,$$ where $\omega $ is a primitive 4-th root of unity in ${{\Bbb F}_{{p^2}}}\backslash {{\Bbb F}_p}$ or ${\Bbb F}_{p}$.

\noindent By Lemma \ref{lemma 3.1}, $\mathcal{C}_{2}$ is an $\left[ {4p,\;4p - 7,\;4} \right]$ cyclic code over ${{\Bbb F}_p}$. Since $\mathcal{C}_{2}$ is not an MDS cyclic code, Lemma \ref{lemma 3.3 } yields that ${d_p} \geqslant 6$.

When  $p \equiv 3\left( {\bmod~4} \right)$, since $\mathcal{C}_{2}$ is a subcode of Lemma \ref{lemma 3.5 } and  it is shown in the proof of Lemma 2.5 that there does not exist a codeword in $\mathcal{C}_{2}$ with symbol-pair weight 6 and 5. 

When $p \equiv 1\left( {\bmod~4} \right)$, we are left to show that there
are no codewords of $\mathcal{C}_{2}$ with symbol-pair weight 6 and Hamming weight  4 or 5.
\begin{description}	
	\item[\textbf{Case I.}]	
Assume that $c\left( x \right)$ has the Hamming weight 5 and the symbol-pair weight 6. Then its certain cyclic shift must have the form
	$$\left( { \star ,\; \star ,\; \star ,\; \star,\; \star,\;{0_{s}}} \right),$$
	where each $ \star $ denotes an element in ${\Bbb F}_p^{\text{*}}$ and ${0_{s}}$ is all-zero vector. Without loss of generality, suppose that the first coordinate of $c\left( x \right)$ is 1. Then $$c\left( x \right) = 1 + {a_1}x + {a_2}{x^2} + {a_3}{x^3}+ {a_4}{x^4}$$ is a codeword of $\mathcal{C}_{2}$. However, this leads to $\deg\left( {c\left( x \right)} \right) =4 < 7 = \deg\left( {g\left( x \right)} \right)$.
	
	\item[\textbf{Case II.}] Assume that $c\left( x \right)$ has the Hamming weight 4 and the symbol-pair weight 6. Then its certain cyclic shift must have one of the forms
	$$ \left( { \star ,\; \star ,\; \star ,{0_{s_1}}, \star ,{0_{s_2}}} \right)$$ or  
	$$\left( { \star ,\; \star ,\;{0_{s_1}}, \star , \,\star ,{0_{s_2}}} \right),$$
	where each $ \star $ denotes an element in ${\Bbb F}_p^{\text{*}}$ and ${0_{s1}},{0_{s2}}$ are all-zero vectors. Without loss of generality, suppose that the first coordinate of $c\left( x \right)$ is 1.
	
	\begin{enumerate}
		\item For the subcase of 
		$$\left( { \star ,\; \star ,\; \star ,{0_{s_1}},\; \star ,{0_{s_2}}} \right),$$
		without loss of generality, we denote
		$$c\left( x \right) = 1 + {a_1}x + {a_2}{x^2} + {a_3}{x^t},$$ 
		where ${a_i} \in {\Bbb F}_p^{\text{*}}$ for any $0 \leqslant i \leqslant 3$ and $4 \leqslant t \leqslant 4p - 2$. 
		
		When $t$ is even, since  $c\left( 1 \right) = c\left( { - 1} \right) = 0$, it can be verified that
		\begin{equation*}  
			\left\{  
			\begin{array}{lr}  
				{1 + {a_1} + {a_2} + {a_3} = 0}, &  \\  
				{1 - {a_1} + {a_2} + {a_3} = 0}.  &    
			\end{array}  
			\right.  
		\end{equation*}
		Then one can derive that $ 2{a_1}=0 $, this contradicts ${a_1} \in {\Bbb F}_p^{\text{*}}$.
		
		When $t$ is odd, it follows from $c'\left( 1 \right) = c'\left( { - 1} \right) = 0$ that
		\begin{equation*}  
			\left\{  
			\begin{array}{lr}  
				{ {a_1} + 2{a_2} + t{a_3} = 0}, &  \\  
				{ {a_1} - 2{a_2} + t{a_3}  = 0},  &    
			\end{array}  
			\right.  
		\end{equation*}
		which impilies $ 4{a_1}=0 $. This contradicts with ${a_1} \in {\Bbb F}_p^{\text{*}}$.
		
		\item  For the subcase of 
		$$ \left( { \star ,\; \star ,{0_{s1}}, \star ,\; \star ,{0_{s2}}} \right),$$
		Without loss of generality, denote
		$$c\left( x \right) = 1 + {a_1}x + {a_2}{x^t} + {a_3}{x^{t+1}},$$ 
		where ${a_i} \in {\Bbb F}_p^{\text{*}}$ for any $0 \leqslant i \leqslant 3$ and $4 \leqslant t \leqslant 4p - 3$.
		
		When $t$ is odd, it follows from $c'\left( 1 \right) = c'\left( { - 1} \right) = 0$ that
		\begin{equation*}  
			\left\{  
			\begin{array}{lr}  
				{ {a_1} + t{a_2} + (t+1){a_3} = 0}, &  \\  
				{ {a_1} + t{a_2} - (t+1){a_3} = 0},  &    
			\end{array}  
			\right.  
		\end{equation*}
		which impilies $ t=2p-1 $. However, $c''\left( 1 \right) =0$ induces that
		$$ t(t-1){a_2} - t(t+1){a_3}  = 0.$$ Together with $ t=2p-1 $, one can immediately get $ (2p-1)(2p-2)=0 $. This is impossible, since $p$ is an odd prime.

		When $t$ is even, it follows from $c'\left( 1 \right) = c'\left( { - 1} \right) = 0$ that
		\begin{equation*}  
			\left\{  
			\begin{array}{lr}  
				{ {a_1} + t{a_2} + (t+1){a_3} = 0}, &  \\  
				{ {a_1} - t{a_2} + (t+1){a_3} = 0},  &    
			\end{array}  
			\right.  
		\end{equation*}
		which impilies $ t=2p $.
		
		Since  $c\left( 1 \right) = c\left( { - 1} \right) = 0$, it can be verified that
		\begin{equation*}  
			\left\{  
			\begin{array}{lr}  
				{1 + {a_1} + {a_2} + {a_3} = 0}, &  \\  
				{1 - {a_1} + {a_2} - {a_3} = 0}.  &    
			\end{array}  
			\right.  
		\end{equation*}
		Then one can derive that $ {a_1}+{a_3}=0 $. 
		However, $c\left( \omega \right) = c\left( { - \omega} \right) = 0$ induces that
		\begin{equation*}  
			\left\{  
			\begin{array}{lr}  
				{1 + {a_1}\omega + {a_2}\omega^{t} + {a_3}\omega^{t+1} = 0}, &  \\  
				{1 - {a_1}\omega + {a_2}\omega^{t} - {a_3}\omega^{t+1} = 0},  & 
			\end{array}  
			\right.  
		\end{equation*}
		which impilies $ {a_1}\omega  + {a_3}\omega^{t+1} = 0 $. By substituting $ t=2p $ and $ {a_1}+{a_3}=0 $, one can derive that ${a_1}={a_3}=0 $, which is contradicts ${a_i} \in {\Bbb F}_p^{\text{*}} $.

	\end{enumerate}
Through the above proof, we conclude that there is no codeword with symbol-pair weight of 6 in $\mathcal{C}_{2}$.	
Therefore, to prove that $\mathcal{C}_{2}$ is an AMDS symbol-pair code with minmum symbol-pair distance 8, it is sufficient to verify that there does not exist a codeword in $\mathcal{C}_{2}$ with symbol-pair weight 7. We have three cases to discuss.

	\item[\textbf{Case III.}] If there is a codeword in $\mathcal{C}_{2}$ with Hamming weight 6 and symbol-pair weight 7, then its certain cyclic shift must have the form
	$$ \left( { \star ,\; \star ,\; \star ,\; \star ,\; \star ,\; \star ,{0_s}} \right),\;$$
	where each $ \star $ denotes an element in  ${\Bbb F}_p^{\text{*}}$ and ${0_s}$ is all-zero vector. Without loss of generality, suppose that the first coordinate of $c\left( x \right)$ is 1. We denote that
	$$c\left( x \right) = 1 + {a_1}x + {a_2}{x^2} + {a_3}{x^3} + {a_4}{x^4} + {a_5}{x^5},$$
	this leads to $\deg\left( {c\left( x \right)} \right) = 5 < 7 = \deg\left( {g\left( x \right)} \right)$.
	
	\noindent Therefore, it shows that there does not exist a codeword in $\mathcal{ C}_{2}$ with Hamming weight 6 and symbol-pair weight 7.
	
	\item[\textbf{Case IV.}] Assume that $c\left( x \right)$ has the Hamming weight 5 and the symbol-pair weight 7. Then its certain cyclic shift must have one of the forms
	$$ \left( { \star ,\; \star ,\; \star ,{0_{s_1}}, \star ,\; \star ,{0_{s_2}}} \right)$$ or 
	$$ \left( { \star ,\; \star ,\; \star , \star ,\;{0_{s_1}}, \,\star ,{0_{s_2}}} \right),$$
	where each $ \star $ denotes an element in ${\Bbb F}_p^{\text{*}}$ and ${0_{s_1}},{0_{s_2}}$ are all-zero vectors. Without loss of generality, suppose that the first coordinate of $c\left( x \right)$ is 1.
	\begin{enumerate}

		\item  For the subcase of 
		$$ \left( { \star ,\; \star ,\; \star ,{0_{s_1}}, \star ,\; \star ,{0_{s_2}}} \right),$$
		without loss of generality, we denote
		$$c\left( x \right) = 1 + {a_1}x + {a_2}{x^2} + {a_3}{x^l} + {a_4}{x^{l + 1}},$$ 
		where ${a_i} \in {\Bbb F}_p^{\text{*}}$ for any $1 \leqslant i \leqslant 4$ and $5 \leqslant l \leqslant 4p - 3$. 
		
		\noindent When $l$  even, it can be verified that
		\begin{equation*}  
			\left\{  
			\begin{array}{lr}  
				%		1+{a_1} + {a_2} + {a_3} + {a_4} = 0, &  \\  
				%		1-{a_1} + {a_2} + {a_3} - {a_4} = 0, &\\  
				{a_1} + 2{a_2} + l{a_3} + (l+1){a_4}  = 0, &\\
				{a_1} - 2{a_2} - l{a_3} + (l+1){a_4}  = 0, &    
			\end{array}  
			\right.  
		\end{equation*}
		since $ c'\left( 1  \right) = c'\left( { - 1 } \right) = 0$. Then one can derive that $ l=2p $.
		
		However, it follows from $ c''\left( 1  \right) =0 $ that
		$$ 2{a_2} + l(l-1){a_3} + l(l+1){a_4}  = 0. $$
		By substituting $ l=2p $, one can obtain that  $ 2{a_1}=0 $, which contradicts $ a_{1} \ne 0 $.
		
		% ${\bf{Subcase}}{\text{ }}{\bf{2}}:$
		\noindent	 When $l$  odd, it can be verified that
		\begin{equation*}  
			\left\{  
			\begin{array}{lr}  
				1+{a_1} + {a_2} + {a_3} + {a_4} = 0, &  \\  
				1-{a_1} + {a_2} - {a_3} + {a_4} = 0, &\\  
				{a_1} + 2{a_2} + l{a_3} + (l+1){a_4}  = 0, &\\
				{a_1} - 2{a_2} + l{a_3} - (l+1){a_4}  = 0, &    
			\end{array}  
			\right.  
		\end{equation*}
		since $c\left( 1 \right) = c\left( { - 1} \right) = c'\left( 1  \right) = c'\left( { - 1 } \right) = 0$. By solving the system, one can derive that $ l=2p+1 $. This implies that
		\begin{equation*}  
			\left\{  
			\begin{array}{lr}  
				{a_2} + {a_4} = -2, &  \\  
				2{a_2} +  (l+1){a_4}  = 0, &    
			\end{array}  
			\right.  
		\end{equation*}
		then one can derive that $ 2=0 $, a contradiction.
		
		%	\noindent $ \bullet \;  {\bf{Case\; II}}$
		\item 	For the subcase of 
		$$ \left( { \star ,\; \star ,\; \star ,\; \star ,\;{0_{s_1}}, \star ,{0_{s_2}}} \right),$$
		without loss of generality, we denote
		$$c\left( x \right) = 1 + {a_1}x + {a_2}{x^2} + {a_3}{x^3} + {a_4}{x^{l }},$$ 
		where 
		%${a_i} \in { F}_p^{\text{*}}$ 
		${a_i} \in {\Bbb F}_p^{\text{*}}$ for any $0 \leqslant i \leqslant 4$ and $5 \leqslant l \leqslant 4p - 2$. 
		
		When $l$ is odd, with $5 \leqslant l \leqslant 4p - 2$ and ${a_i} \in {\Bbb F}_p^{\text{*}}$ for any $1 \leqslant i \leqslant 4$. The fact $c\left( 1 \right) = c\left( { - 1} \right) = 0$ induces that
		\begin{equation*}  
			\left\{  
			\begin{array}{lr}  
				{1 + {a_1} + {a_2} + {a_3} + {a_4} = 0,} &  \\  
				{1 - {a_1} + {a_2} - {a_3} - {a_4} = 0.}  &    
			\end{array}  
			\right.  
		\end{equation*}
		This implies that ${a_2} =  - 1$.
		However, it follows from $c\left( \omega  \right) = c\left( { - \omega } \right) = 0$ that
		\begin{equation*}  
			\left\{  
			\begin{array}{lr}  
				{1 + {a_1}\omega  - {a_2} - {a_3}\omega + {a_4}{\left( \omega  \right)^l} = 0,} &  \\  
				{1 - {a_1}\omega  - {a_2} + {a_3}\omega - {a_4}{\left( \omega  \right)^l} = 0.}  &    
			\end{array}  
			\right.  
		\end{equation*}
		This leads to ${a_2} =  1$, which contradicts ${a_2} = -1$.
		
		%	\noindent ${\bf{Subcase}}{\text{ }}{\bf{2}}:$ For the subcase of $l$ is
		When $l$ is even, with $6 \leqslant l \leqslant 4p - 2$ and ${a_i} \in {\Bbb F}_p^{\text{*}}$ for any $1 \leqslant i \leqslant 4$. The fact $c\left( 1 \right) = c\left( { - 1} \right) = 0$ induces that
		\begin{equation*}  
			\left\{  
			\begin{array}{lr}  
				{1 + {a_1} + {a_2} + {a_3} + {a_4} = 0,} &  \\  
				{1 - {a_1} + {a_2} - {a_3} + {a_4} = 0.}  &    
			\end{array}  
			\right.  
		\end{equation*}
		This implies that ${a_1} + {a_3} = 0$.
		However, It follows from $c\left( \omega  \right) = c\left( { - \omega } \right) = 0$ that
		\begin{equation*}  
			\left\{  
			\begin{array}{lr}  
				{1 + {a_1}\omega  - {a_2} - {a_3}\omega + {a_4}{\left( \omega  \right)^l} = 0,} &  \\  
				{1 - {a_1}\omega  - {a_2} + {a_3}\omega + {a_4}{\left( \omega  \right)^l} = 0.}  &    
			\end{array}  
			\right.  
		\end{equation*}
		This leads to ${a_1} - {a_3} = 0$. Therefore, by using ${a_1} + {a_3} = 0$, one can immediately obtain that ${a_1} = {a_3} = 0$, which contradicts ${a_i} \in {\Bbb F}_p^{\text{*}}$ for any $1 \leqslant i \leqslant 4$.
	\end{enumerate}	
	\noindent Therefore, there does not exist codeword in $\mathcal{C}_{2}$ with  Hamming weight 5 and symbol-pair weight 7.
	
	\item[\textbf{Case V.}]	Assume that $c\left( x \right)$ has Hamming weight 4 and symbol-pair weight 7. Then its certain cyclic shift must have the form
	$$c = \left( { \star ,{\text{\;}} \star ,{0_{s_1}}, \star ,{0_{s_2}}, \star ,{0_{s_3}}} \right),$$
	where each $ \star $ denotes an element in ${\Bbb F}_p^{\text{*}}$ and ${0_{s_1}},{0_{s_2}},{0_{s_3}}$ are all-zero vectors. Without loss of generality, suppose that the first coordinate of $c\left( x \right)$ is 1,
	$$c\left( x \right) = 1 + {a_1}x + {a_2}{x^u} + {a_3}{x^v},$$ 
	where $3 \leqslant u,v \leqslant 4p - 2,\left| {u - v} \right| > 1$.
	\begin{itemize}

		\item 	If both $u$ and $v$ are even, then we have
		\begin{equation*}  
			\left\{  
			\begin{array}{lr}  
				{1 + {a_1} + {a_2} + {a_3} = 0,} &  \\  
				{1 - {a_1} + {a_2} + {a_3}  = 0,}  &    
			\end{array}  
			\right.  
		\end{equation*}
		since $c\left( 1 \right) = c\left( { - 1} \right) = 0$. It follows that $2{a_1} = 0$, which contradicts $ a_{1}\ne 0 $.
		
		\item  If both $u$ and $v$ are odd, it follows from $c\left( 1 \right) = c\left( { - 1} \right) = 0$ that
		\begin{equation*}  
			\left\{  
			\begin{array}{lr}  
				{1 + {a_1} + {a_2} + {a_3} = 0,} &  \\  
				{1 - {a_1} - {a_2} - {a_3}  = 0,}  &    
			\end{array}  
			\right.  
		\end{equation*}
		which indicates $2=0$.
		
		Without loss of generality, let $u$ is odd and $v$ is even. 
		\item When  $v \equiv 0\;(\bmod~4)$, since $c'\left( 1 \right) = c'\left( { - 1} \right) = 0$ then 	
		\begin{equation*}  
			\left\{  
			\begin{array}{lr}  
				
				{a_1} + u{a_2} + v{a_3} = 0, & \\
				{a_1} + u{a_2} - v{a_3} = 0, &    
			\end{array}  
			\right.  
		\end{equation*}		
		which implies that $4p$ is a divisor of $v$. This contradicts with the code length $ n=4p $.
		
		\item 	When $v \equiv 2\;(\bmod~4)$, since $c'\left( 1 \right) = c'\left( { - 1} \right) = 0$, we have	
		\begin{equation*}  
			\left\{  
			\begin{array}{lr}  
				
				{a_1} + u{a_2} + v{a_3} = 0, & \\
				{a_1} + u{a_2} - v{a_3} = 0. &    
			\end{array}  
			\right.  
		\end{equation*}		
		this means that $2p$ is a divisor of $v$. Together with $$ u\left( {u - 1} \right){a_2} + v\left( {v - 1} \right){a_3} = 0,$$ we have $ u\left( {u - 1} \right){a_2} =0.$
		
		If $ u=kp $ for some positive integers $k$, we can obtain $ {a_1}=0 $ by $c'\left( 1 \right) = c'\left( { - 1} \right) = 0$, which is impossible. 
		
		If $ u-1=kp $ for some positive integers $k$, $ k=2 $ can be derived from $u$ odd and the code length $4p$. Hence, we have $ u=2p+1,v=2p $. However, this contradicts with $\left| {u - v} \right| > 1$.
	\end{itemize}
	
	Therefore, there does not exist codeword in $\mathcal{C}_{2}$ with Hamming weight 4 and symbol-pair weight 7.
\end{description}

As a result, $\mathcal{C}_{2}$ is an AMDS ${\left( {4p,{\text{\;}}8} \right)_p}$ symbol-pair code.	
\end{proof}

\section{Conclusion}
In this paper, employing repeated-root cyclic codes, three new classes of MDS symbol-pair codes over ${\Bbb F}_p$
with length $ lp $  and $3p$ are provided. Theorem \ref{theorem 3.1} and Corollary \ref{theorem 3.6} give some  more general generator polynomials about MDS $(lp,5)_{p}$ and $(lp,6)_{p}$ symbol-pair codes. For length $3p$, Theorem \ref{theorem of 3p MDS} provides all MDS symbol-pair codes with $d_{p}\le 12$ and also provides all AMDS symbol-pair codes with $d_{p}< 12$.
This paper is also given two new classes of AMDS symbol-pair codes over ${\Bbb F}_p$ with length $ lp $ and $ 4p $. Theorem \ref{theorem3.9} obtains a class of AMDS $(lp,7)_{p}$ symbol-pair codes and Theorem \ref{theorem 3.10} presents a class of AMDS $(4p,8)_{p}$ symbol-pair codes. %The minimum symbol-pair distance of these two classes of AMDS symbol-pair codes is the biggest in the corresponding code length.
 
% Furthemore, the problem about the general generator polynomial of a class of repeated-cyclic length of $3p$ is given. 

\end{document}